\newcolumntype{C}[1]{>{\centering\arraybackslash}m{#1}}
\newcolumntype{L}[1]{>{\arraybackslash}m{#1}}
\theoremstyle{definition}
\newtheorem{theorem}{Theorem}[section]
\newtheorem{corollary}[theorem]{Corollary}
\newtheorem{lemma}[theorem]{Lemma}
\newtheorem{definition}[theorem]{Definition}
\newtheorem{proposition}[theorem]{Proposition}
\newtheorem{remark}[theorem]{Remark}
\newtheorem{notation}[theorem]{Notation}
\newtheorem{example}[theorem]{Example}
\def\K{\mathbb{K}}
\def\F{\mathbb{F}}
\def\Fq{\mathbb{F}_q}
\def\C{\mathcal{C}}
\def\A{\mathcal{A}}
\def\B{\mathcal{B}}
\def\mD{\mathcal{D}}
\def\W{\mathcal{W}}
\def\Z{\mathbb{Z}}
\def\mS{\mathcal{S}}
\def\mH{\mathcal{H}}
\def\mG{\mathcal{G}}
\def\<{\left<}
\def\>{\right>}
\def\trk{\textup{trk}}
\def\maxrk{\textup{maxrk}}
\def\rk{\textup{rk}}
\newcommand{\qbin}[2]{\begin{bmatrix}{#1}\\ {#2}\end{bmatrix}_q}
\def\colsp{\textup{colsp}}
\def\rowsp{\textup{rowsp}}
\def\oB{\overline{B}}
\def\omA{\overline{\mathcal{A}}}
\newcommand{\dimq}[1]{\dim_{\Fq}\left(#1\right)}
\def\oW{\overline{W}}
\def\omW{\overline{\W}}
\def\mP{\mathscr{P}}
\def\ps{\textup{ps}}
\def\cl{\textup{cl}}
\def\D{\textup{D}}
\def\R{\textup{R}}
\def\Q{\mathbb{Q}}
\def\mL{\mathscr{L}}
\def\mM{\mathscr{M}}
\definecolor{myyellow}{RGB}{218,170,0}
\definecolor{myblue}{RGB}{0,75,135}
\definecolor{mygreen}{RGB}{61,174,43}
\title{Tensor Codes and Their Invariants}
\author[Eimear Byrne]{Eimear Byrne}
\address{School of Mathematics and Statistics, University College Dublin, Belfield, Ireland}
\curraddr{}
\email{ebyrne@ucd.ie}
\thanks{}
\author[Giuseppe Cotardo]{Giuseppe Cotardo$^*$}
\address{School of Mathematics and Statistics, University College Dublin, Belfield, Ireland}
\curraddr{}
\email{giuseppe.cotardo@ucdconnect.ie}
\thanks{$^*$ The author was supported by the Irish Research Council, grant n. GOIPG/2018/2534.}
\subjclass[2020]{94B05, 15A72}
\keywords{tensor codes, anticodes, generalized tensor binomial moments, generalized tensor weight distribution}
\begin{document}
\maketitle

\setlength{\abovedisplayskip}{5pt}
\setlength{\belowdisplayskip}{5pt}

\begin{abstract}
    In 1991, Roth introduced a natural generalization of rank metric codes, namely tensor codes. The latter are defined to be subspaces of $r$-tensors where the ambient space is endowed with the tensor rank as a distance function. In this work, we describe the general class of tensor codes and we study their invariants that correspond to different families of anticodes. In our context, an anticode is a {\em perfect space} that has some additional properties.  A perfect space is one that is spanned by tensors of rank 1.  Our use of the anticode concept is motivated by an interest in capturing structural properties of tensor codes. In particular, we indentify four different classes of tensor anticodes and show how these gives different information on the codes they describe.
    We also define the {\em generalized tensor binomial moments} and the {\em generalized tensor weight distribution} of a code and establish a bijection between these invariants. We use the generalized tensor binomial moments to define the concept of an $i$-tensor BMD code, which is an extremal code in relation to an inequality arising from them. Finally, we give MacWilliams identities for generalized tensor binomial moments.
\end{abstract}

\section{Introduction}

There are well established connections between linear block codes and 3-tensors, in particular the bilinear complexity of a 3-tensor can be related to the parameters of an associated linear code \cite{burgisser1996algebraic}. More recently, matrix codes have been studied in the context of 3-tensors \cite{byrne2021bilinear,byrne2019tensor}.
In \cite{roth1991maximum,roth1996tensor}, Roth studied {\em tensor codes}, that is, linear spaces of $r$-tensors in $\Fq^{n}\otimes\cdots\otimes\Fq^{n}$, where the ambient space is endowed with the tensor rank as a distance function. Such codes are natural generalizations of rank metric codes.
In \cite{byrne2020rank}, new invariants of coding theory were introduced, namely the $i$-th generalized rank weights. Matrix codes that are extremal in respect of a related bound, called $i$-binomial moment determined ($i$-BMD), all have the same $i$-th generalized rank weights. These new invariants can be used to distinguish inequivalent codes.
In \cite{ravagnani2016generalized}, the generalized rank weights were defined using anticodes and a Wei-duality result was derived for linear matrix codes.

Building on these initial papers, we generalize in two directions. Firstly, we study the general class of tensor codes and secondly we consider different anticodes related to the tensor rank and their corresponding invariants. In our context, an anticode is a {\em perfect space} that has some additional properties. A perfect space is one that is spanned by tensors of rank $1$. We identify four classes of different anticodes, namely the perfect spaces and three proper subtypes called {\em closure-type}, {\em Delsarte-type} and {\em Ravagnani-type} anticodes. We associate with each anticode type a sequence of generalized tensor weights. We remark that for the case $r=2$, i.e. for which the tensors codes are matrix codes, both the Delsarte-type and Ravagnani-type anticodes coincide and are the same as those defined in \cite{ravagnani2016generalized}, while the $j$-th generalized tensor weights corresponding to the perfect spaces are the the same as those defined in \cite{byrne2019tensor}. 

Our use of the anticode concept is motivated by an interest in capturing structural properties of tensor codes. As such, we define these objects in terms of tensor products. What we observe is a gradation in information about the parameters and properties of tensor codes. For example, the first generalized tensor weight associated with the {\em perfect spaces} corresponds exactly to the tensor rank distance of the code while the $k$-th generalized tensor weight associated with the perfect spaces corresponds to the tensor rank of the code itself as an $(r+1)$-tensor, where $k$ denotes the dimension of the code. These invariants give the most information on the code. As we progress through the types, from perfect spaces, through the closure-type to the Delsarte and Ravagnani-type anticodes and their associated invariants, we observe a loss of information about the code but we see decreasing complexity of computing these invariants, with the most noticeable difference occurring between the invariants for the perfect spaces and those of the closure-type anticodes. On the other hand, the Ravagnani-type anticodes are the only ones that exhibit a natural duality, similar to what can be observed in the matrix case ($r=2$). We compare these invariants through numerous examples. 

We remark on some important motivations on this topic. There are some significant departures from the matrix case for codes that are linear spaces of $r$-tensors for $r\geq 3$. Firstly, the `usual' Singleton bound is not sharp in general. While a strict improvement on this bound is obtained in \cite{roth1991maximum} for the codes in $\F_q^{n\times n \times n}$, only asymptotic bounds for higher order tensors are known. Moreover, it is not known if extremality with respect to this bound is an invariant of code duality. Compare this to the case of MRD codes, whose duals are also MRD and whose rank weight distributions are determined. We remark further that invariants associated with the closure-type anticodes not only provide a good distinguisher between inequivalent tensor codes, they are also a new invariant in the case of matrix codes.

This paper is organised as follows. In Section 2 we give preliminary results. In Section 3 we define what is meant by a tensor code, extending the definition originally given in \cite{roth1991maximum}. We define the closure of a tensor code and describe its properties. In Section 4, we introduce the concept of a tensor anticode and define the perfect space anticodes, closure-type, Delsarte-type, and the Ravagnani-type anticodes. We characterize these families and show the relations between them. In Section 5 we define the $j$-th generalized tensor weight associated with each type of anticode, and their corresponding dual weights, should they exist. We describe their properties and show how these different invariants offer different information on the codes they describe. In Section 6 we define the generalized tensor binomial moments and the generalized tensor weight distribution of a code and establish a bijection between these invariants. We also establish MacWilliams identities for generalized binomial moments, from which those for generalized tensor weight distributions follow. Particular formulations of such identities for $r=2$ are derived in \cite{byrne2020rank,delsarte1978bilinear,ravagnani2016rank}. We use the generalized tensor binomial moments to define the concept of an $i$-tensor BMD code (an $i$-TBMD code), which is an extremal code in relation to an inequality arising from the binomial moments. We summarize the different characterisations and properties of the collections of anticodes considered in this paper in the appendix. 

\section{Preliminaries and Notation}

	Throughout the paper, we let $q$ be a prime power and let $\F$ be the tensor product space $\Fq^{n_1}\otimes\cdots\otimes\Fq^{n_r}$ for some integers $r,n_1,\ldots,n_r$ such that $r,n_1\geq 2$. We will assume that $n_1=\min\{n_1,\ldots,n_r\}$ and that $n_r=\max\{n_1,\ldots, n_r\}$. We let $n:=\prod_{i=1}^rn_i$. Finally, for a positive integer $i$ we denote by $[i]$ the set $\{1,\ldots,i\}$.

    \begin{definition}
        We say that $U\in\F$ is a \textbf{rank-}$1$ (or \textbf{simple}) \textbf{tensor} if it can be expressed as $u^{(1)}\otimes \cdots \otimes u^{(r)}$, for some $u^{(i)}\in\Fq^{n_i}$, $i\in\{1,\ldots,r\}$.
    \end{definition}
    \begin{definition}
        The \textbf{rank} of $X\in\F$ is defined to be 
        \begin{equation*}
            \rk(X):=\min\left\{R\in\Z:X=\sum_{s=1}^Ru_s^{(1)}\otimes\cdots\otimes u_s^{(r)}\right\}.
        \end{equation*}
        That is, $\rk(X)$ is the least integer $\ell$ such that $X$ can be written as sum of $\ell$ simple tensors.
    \end{definition}
	The rank induces a metric on $\F$, i.e. the function 
	$$:\F \times \F\longrightarrow \Z : (X,Y)\longmapsto\rk(X-Y),$$ is a distance function (see \cite{roth1996tensor} for further details). 
	
    It is well-known that if $\{x_1^{(i)},\ldots,x_{n_i}^{(i)}\}$ is a basis of $\Fq^{n_i}$, for any $i\in\{1,\ldots,r\}$, then a basis of $\F$ is
    \begin{equation*}
        \left\{\bigotimes_{i=1}^r x_{j_i}^{(i)}:1\leq j_i\leq n_i, 1\leq i\leq r\right\}.
    \end{equation*}
    
    In particular, we have $\dimq{\F}=\prod_{i=1}^r\dimq{\Fq^{n_i}}=n$. An $r$-tensor $X\in\F$ can be represented as the map
    \begin{equation*}
        X: [n_1]\times\cdots\times[n_r]\longmapsto \Fq
    \end{equation*}
    given by $X=(X_{j_1,\ldots,j_r}:1\leq j_i\leq n_i, 1\leq i\leq r)$. Therefore, the map 
    \begin{equation*}
        :\F\longrightarrow\Fq^{n_1\times\cdots\times n_r}:X=\sum_{s=1}^{\rk(X)}\bigotimes_{i=1}^r x_s^{(i)}\longmapsto \left(X_{j_1,\ldots,j_r}=\sum_{s=1}^{\rk(X)}x_{j_i,s}^{(i)}\right)_{1\leq j_i\leq n_i, 1\leq i\leq r}
    \end{equation*}
   where, $x_s^{(i)}=(x_{j_i,s}^{(i)}:1\leq j_i\leq n_i)$ for any $i\in[r]$,  is an $\Fq$-isomorphism. In particular, we can identify $\F$ with $\Fq^{n_1\times\cdots\times n_r}$ and represent an $r$-tensor as an $r$-dimensional array.
   
   We briefly recall some basic definitions on posets and lattices (a standard reference is \cite[Chapter~3]{stanley2011enumerative}).  A \textbf{partially ordered set} (or \textbf{poset}) is a pair $(\mP,\leq)$ where $\mP$ is a non-empty set and $\leq $ is an binary relation satisfying the three axioms of \textit{reflexivity}, \textit{antisymmetry} and \textit{transitivity}. With an abuse of notation, in the following we denote by $\mP$ the poset $(\mP,\leq)$. Let $x,y\in\mP$, we say that $x$ and $y$ are \textbf{comparable} if $x\leq y$ or $y\leq x$, otherwise we say that they are \textbf{incomparable}. Moreover, we  use the standard notation $x<y$ to mean $x\leq y$ and $x\neq y$. An \textbf{upper bound} of $x$ and $y$ is an element $u\in\mP$ satisfying $x\leq u$ and $y\leq u$. A \textbf{least upper bound} (or \textbf{join}) of $x$ and $y$ is the element $u\in\mP$ such that $u\leq v$ for any other upper bound $v$ of $x$ and $y$. If such element exists then it is unique and it is denoted by $x\vee y$. Dually, we can define the \textbf{greatest upper bound} (or \textbf{meet}) $x\wedge y$ of $x$ and $y$, if it exists. If $\mP_1,\ldots,\mP_r$ are posets then the (\textbf{direct}) \textbf{product} of $\mP_1,\ldots,\mP_r$ is defined to be the poset $\mP_1\times\ldots\times\mP_r$ on the set $\{(x_1,\ldots,x_r):x_i\in\mP_i, i\in [r]\}$ such that $(x_1,\ldots,x_r)\leq (y_1,\ldots,y_r)$ in $\mP_1\times\ldots\times\mP_r$ if $x_i\leq y_i$ in $\mP_i$, for all $i\in[r]$. The \textbf{M\"obius function} of $\mL$ is defined recursively as
   \begin{equation*}
        \mu_\mP(x,y)=
        \begin{cases}
            1 & \textup{ if } x=y,\\
            \displaystyle-\sum_{x\leq z< y}\mu_{\mP}(x,z) & \textup{ if } x<y,\\
            0 & \textup{ otherwise}.
       \end{cases}
   \end{equation*}
   
      \begin{remark}
       One can check that the M\"obius function of the subspace poset $\mP$ of a vector spaces over a field $\K$ depends only on the dimension of their elements. In particular, for any $x_1,x_2,y_1,y_2\in\mP$ such that $x_1\leq y_1$, $x_2\leq y_2$, $\dim_\K(x_1)=\dim_\K(x_2)$ and $\dim_\K(y_1)=\dim_\K(y_2)$, we have $ \mu_\mP(x_1,y_1)= \mu_\mP(x_2,y_2)$. In the following, we use the notation $\mu_\mP(\dim_\K(x_1),\dim_\K(y_1))$ to mean $\mu_\mP(x_1,y_1)$.
   \end{remark}
   
   \begin{proposition}[M\"obius Inversion Formula]
       Let $\mP$ a poset such that for every element $z\in\mP$ the \textit{principal order ideal} generated by $z$, i.e. the set $\{x\leq z:x\in\mP\}$, is finite. Let $f,g:\mP\longrightarrow\K$, where $\K$ is a field. For all $y\in\mP$, we have
       \begin{equation*}
           g(y)=\sum_{x\leq y}f(x)\qquad \textup{ if and only if }\qquad f(y)=\sum_{x\leq y}\mu(x,y)g(x).
       \end{equation*}
   \end{proposition}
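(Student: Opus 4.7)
The plan is to prove the biconditional by substituting one expression into the right-hand side of the other and swapping the order of summation, which is legitimate since every principal order ideal is finite. Both directions then collapse to an orthogonality relation for $\mu$ on the interval $[w,y]$. One of these,
\begin{equation*}
\sum_{w \leq x \leq y} \mu(w,x) = \delta_{w,y},
\end{equation*}
is immediate from the recursive definition: the case $w = y$ gives $\mu(y,y) = 1$, and for $w < y$ the defining identity $\mu(w,y) = -\sum_{w \leq x < y} \mu(w,x)$ rearranges to zero. The other,
\begin{equation*}
\sum_{w \leq x \leq y} \mu(x,y) = \delta_{w,y},
\end{equation*}
is the dual orthogonality and needs a separate argument.

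The two implications are symmetric once both orthogonalities are available. If $f(y) = \sum_{x \leq y} \mu(x,y) g(x)$, I would write
\begin{equation*}
\sum_{x \leq y} f(x) = \sum_{x \leq y}\sum_{w \leq x} \mu(w,x) g(w) = \sum_{w \leq y} g(w) \sum_{w \leq x \leq y} \mu(w,x) = g(y)
\end{equation*}
using the first orthogonality. Conversely, if $g(y) = \sum_{x \leq y} f(x)$, then
\begin{equation*}
\sum_{x \leq y} \mu(x,y) g(x) = \sum_{x \leq y} \mu(x,y) \sum_{w \leq x} f(w) = \sum_{w \leq y} f(w) \sum_{w \leq x \leq y} \mu(x,y) = f(y)
\end{equation*}
by the second orthogonality.

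To establish the dual orthogonality, I would restrict to the finite interval $I = [w,y]$, on which the restricted M\"obius function coincides with the global one, since the recursion depends only on sub-intervals. Ordering $I$ by a linear extension of $\leq$, both $\zeta_I(a,b) = \mathbf{1}_{a \leq b}$ and $\mu_I$ become upper triangular matrices with $1$s on the diagonal; the recursion reads $\mu_I \zeta_I = \mathrm{id}$, and for finite square matrices a right inverse is automatically a two-sided inverse, so $\zeta_I \mu_I = \mathrm{id}$. Reading off the $(w,y)$-entry gives the required identity. The main obstacle is exactly this dual orthogonality: the recursion as stated only exhibits $\mu$ as a right inverse of $\zeta$ in the incidence algebra, and one must use local finiteness to promote it to a two-sided inverse on each interval before the swap-and-sum argument succeeds in both directions.
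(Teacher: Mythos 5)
Your proof is correct, and it is essentially the standard textbook argument for M\"obius inversion. The paper does not supply its own proof of this proposition; it is stated as a known fact with a pointer to Stanley's \emph{Enumerative Combinatorics}, so there is no in-paper argument to compare against. That said, your write-up is careful where many informal accounts are sloppy: you correctly observe that the recursive definition of $\mu$ directly yields only the orthogonality $\sum_{w\leq x\leq y}\mu(w,x)=\delta_{w,y}$ (i.e.\ $\mu$ as a right inverse of $\zeta$ in the incidence algebra), and that the \emph{dual} orthogonality $\sum_{w\leq x\leq y}\mu(x,y)=\delta_{w,y}$, which is what the forward direction of the biconditional actually needs, requires the extra step of restricting to a finite interval, passing to an upper-triangular matrix representation via a linear extension, and invoking the fact that a one-sided inverse of a finite square matrix is two-sided. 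The hypothesis that principal order ideals are finite is used exactly where you say it is: it makes all sums finite (justifying the interchange of summation) and makes each interval $[w,y]$ a finite poset so the matrix argument applies. No gaps.
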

   
   We say that the poset $\mP$ is \textbf{locally finite} if for all $x,y\in\mP$ the sets $\{z\in\mP:x\leq z\leq y\}$ and $\{z\in\mP:x< z< y\}$ are finite (or empty).
   
   \begin{proposition}[The Product Theorem]
       Let $\mP_1\times\ldots\times\mP_r$ be the product of the locally finite posets $\mP_1,\ldots,\mP_r$. If $(x_1,\ldots,x_r)\leq(y_1,\ldots,y_r)$ in $\mP_1\times\ldots\times\mP_r$ then
       
       \begin{equation*}
           \mu_{\mP_1\times\ldots\times\mP_r}((x_1,\ldots,x_r),(y_1,\ldots,y_r))=\prod_{i=1}^r\mu_{\mP_i}(x_i,y_i).
       \end{equation*}
   \end{proposition}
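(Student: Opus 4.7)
The plan is to exploit the uniqueness part of the Möbius inversion machinery. Recall that on any locally finite poset $\mQ$, the Möbius function is the unique integer-valued function $\mu_\mQ$ on $\{(x,y)\in\mQ\times\mQ: x\leq y\}$ satisfying $\mu_\mQ(x,x)=1$ for all $x\in\mQ$ and $\sum_{x\leq z\leq y}\mu_\mQ(x,z)=0$ whenever $x<y$. (This is exactly what the recursive definition in the excerpt encodes once one moves the diagonal term to the other side.) Thus, to prove the Product Theorem, it will suffice to display any function on the product poset that obeys these two defining properties; it must then coincide with $\mu_{\mP_1\times\cdots\times\mP_r}$.

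First I would note that the product $\mP_1\times\cdots\times\mP_r$ is itself locally finite, since the interval between $(x_1,\ldots,x_r)$ and $(y_1,\ldots,y_r)$ is precisely the Cartesian product of the finite intervals $[x_i,y_i]$ in each $\mP_i$. Then I would introduce the candidate
\begin{equation*}
\nu\bigl((x_1,\ldots,x_r),(y_1,\ldots,y_r)\bigr):=\prod_{i=1}^r \mu_{\mP_i}(x_i,y_i),
\end{equation*}
defined whenever $(x_1,\ldots,x_r)\leq (y_1,\ldots,y_r)$. The diagonal condition $\nu((x_1,\ldots,x_r),(x_1,\ldots,x_r))=1$ is immediate from $\mu_{\mP_i}(x_i,x_i)=1$.

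The main step is to verify the vanishing condition. Because the order in the product is componentwise, the set of intermediate elements $(z_1,\ldots,z_r)$ with $(x_1,\ldots,x_r)\leq (z_1,\ldots,z_r)\leq (y_1,\ldots,y_r)$ is exactly $[x_1,y_1]\times\cdots\times[x_r,y_r]$. Consequently the sum factorizes as
\begin{equation*}
\sum_{(x_1,\ldots,x_r)\leq(z_1,\ldots,z_r)\leq(y_1,\ldots,y_r)} \nu\bigl((x_1,\ldots,x_r),(z_1,\ldots,z_r)\bigr)=\prod_{i=1}^r\sum_{x_i\leq z_i\leq y_i}\mu_{\mP_i}(x_i,z_i).
\end{equation*}
Applying the defining property of each $\mu_{\mP_i}$, each factor on the right equals $\delta_{x_i,y_i}$. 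If $(x_1,\ldots,x_r)<(y_1,\ldots,y_r)$ then at least one index has $x_i\neq y_i$, forcing the product to vanish. Hence $\nu$ satisfies both characterizing properties of the Möbius function on $\mP_1\times\cdots\times\mP_r$, and by uniqueness $\nu=\mu_{\mP_1\times\cdots\times\mP_r}$, which is the claim.

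The only subtlety I anticipate is the uniqueness statement itself, which is usually phrased as part of the incidence-algebra formalism; an alternative that avoids invoking it explicitly is to induct on the cardinality of the interval $[(x_1,\ldots,x_r),(y_1,\ldots,y_r)]$, using the recursion in the excerpt directly to rewrite $\mu_{\mP_1\times\cdots\times\mP_r}$ and then applying the same factorization argument to the truncated sum. Either route reduces the proof to the factorization of the interval and the vanishing identity for each $\mu_{\mP_i}$ on proper intervals.
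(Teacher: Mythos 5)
Your proof is correct. The paper states the Product Theorem without proof, citing it as a standard fact (it is Proposition 3.8.2 in Stanley's \emph{Enumerative Combinatorics}, which the paper references for lattice and poset background), so there is no in-paper argument to compare against. Your argument is the textbook one: verify that the candidate $\nu=\prod_i\mu_{\mP_i}$ satisfies $\nu(x,x)=1$ and $\sum_{x\leq z\leq y}\nu(x,z)=0$ for $x<y$, using that the interval $[(x_1,\ldots,x_r),(y_1,\ldots,y_r)]$ in the product poset factorizes as $\prod_i[x_i,y_i]$, and then invoke uniqueness of the M\"obius function. The factorization of the sum into $\prod_i\sum_{x_i\leq z_i\leq y_i}\mu_{\mP_i}(x_i,z_i)=\prod_i\delta_{x_i,y_i}$ is exactly right, and the observation that at least one factor vanishes when the endpoints differ is the correct conclusion. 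Your remark about replacing the appeal to uniqueness by induction on the size of the interval is also a valid alternative. No gaps.
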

   
   A \textbf{lattice} $\mL$  is a poset that contains the join and the meet of every pair of its elements. Every finite lattice has a minimum and a maximum element, denoted by $0_\mL$ and $1_\mL$ respectively. A subset $\mM$ of $\mL$ is a \textbf{sublattice} if is closed under the operations of $\vee$ and $\wedge$ in $\mL$. 
   Finally, we recall some well-known  properties of $q$-binomial coefficients (the reader is referred to \cite{andrews1998theory} for more details).
	
	\begin{definition}
		Let $a,b$ be integers. The $q$-binomial coefficients of $a$ and $b$ is
		\begin{equation*}
			\qbin{a}{b}=
			\begin{cases}
				0 & \textup{ if } b<0 \textup{ or } 0\leq a <b,\\[1ex]
				1 & \textup{ if } b=0 \textup{ and } a\geq 0,\\[1ex]
				\displaystyle\prod_{i=0}^{b-1}\frac{q^{a-i}-1}{q^{i+1}-1} & \textup{ if } b>0 \textup{ and } a\geq b,\\[1ex]
				\displaystyle(-1)^bq^{ab-\binom{b}{2}}\qbin{-a+b-1}{b} & \textup{ if } b>0 \textup{ and } a<0.
			\end{cases}
		\end{equation*}
	\end{definition}
	
	\begin{lemma}
		\label{lem:bin}
		Let $a,b,c$ be integers. The following hold.
		\begin{enumerate}
		\setlength\itemsep{1em}
			\item $\displaystyle \qbin{a}{b}\qbin{b}{c}=\qbin{a}{c}\qbin{a-c}{a-b}$.
			\item $\displaystyle \qbin{a+b}{c}=\sum_{j=0}^cq^{j(b-c+j)}\qbin{a}{j}\qbin{b}{c-j}=\sum_{j=0}^cq^{(c-j)(a-j)}\qbin{a}{j}\qbin{b}{c-j}$.
		\end{enumerate}
	\end{lemma}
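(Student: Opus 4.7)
My plan is to handle the two parts separately, in each case first treating the generic ``non-degenerate'' ranges of the integers $a,b,c$ and then reducing the remaining cases to this one via the definitions.

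For part (1), I would first assume $0 \leq c \leq b \leq a$, which is the only range in which all four binomials involved are computed by the third branch of the definition. Introducing the $q$-factorial notation $[n]_q! := \prod_{i=1}^n(q^i-1)$, the third-branch formula rewrites $\qbin{m}{k}$ as $[m]_q!/([k]_q!\,[m-k]_q!)\cdot (q-1)^{\text{shift}}$, but the $(q-1)$ factors cancel from both sides, and both left- and right-hand side collapse to
\begin{equation*}
\frac{[a]_q!}{[c]_q!\,[b-c]_q!\,[a-b]_q!},
\end{equation*}
which is a one-line computation. Next I would dispose of the degenerate ranges (e.g.\ $b<c$, or $b>a$, or $c<0$) by checking that in each such case at least one $q$-binomial on each side is zero by the first branch of the definition. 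Finally, the case of negative parameters is handled by applying the fourth branch (the ``flip'' rule $\qbin{a}{b}=(-1)^bq^{ab-\binom{b}{2}}\qbin{-a+b-1}{b}$) to the binomials that involve negative top entries and reducing to the non-negative case established above; the only work is bookkeeping the $(-1)$ and $q$-power prefactors, which match after comparing exponents.

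For part (2), I would proceed by induction on $c$, using the two forms of the $q$-Pascal recursion
\begin{equation*}
\qbin{n}{k}=q^k\qbin{n-1}{k}+\qbin{n-1}{k-1}=\qbin{n-1}{k}+q^{n-k}\qbin{n-1}{k-1},
\end{equation*}
applied to $\qbin{a+b}{c}$ (splitting off one ``dimension'' from either the $a$- or $b$-side). The base case $c=0$ reduces to $1=1$. In the inductive step one expands each $\qbin{a+b-1}{c-1}$ and $\qbin{a+b-1}{c}$ by the induction hypothesis, reindexes so that the common summation index is $j$, and groups the coefficients of $\qbin{a}{j}\qbin{b}{c-j}$; one more application of the appropriate $q$-Pascal rule in reverse then recovers the single prefactor $q^{j(b-c+j)}$ (respectively $q^{(c-j)(a-j)}$). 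The two formulas arise by choosing to split the last coordinate from the $b$-block or from the $a$-block.

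The main obstacle in both parts is purely the exponent bookkeeping, especially in (1) where the flip rule has to be applied on both sides simultaneously without introducing sign errors, and in (2) where matching the $q$-weights across reindexed sums requires a careful use of the identity $\binom{j}{2}+j(b-c+j) = \binom{j}{2} + \text{corrections}$ and the Pascal-type identity applied in reverse. A conceptually cleaner alternative for (2), which I would keep in reserve, is to interpret $\qbin{a+b}{c}$ as the number of $c$-dimensional subspaces of $\Fq^{a+b}=\Fq^a\oplus\Fq^b$ and stratify by the dimension $j$ of the projection onto (resp.\ intersection with) one of the summands; the two resulting stratifications produce the two formulas directly.
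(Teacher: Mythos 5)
The paper does not actually prove Lemma~\ref{lem:bin}: it states it as a collection of well-known facts about $q$-binomials and refers the reader to Andrews' book. So there is no in-paper proof to compare against, and the real question is whether your plan stands on its own. For part~(1) your outline is essentially the standard argument and is fine: in the range $0\leq c\leq b\leq a$ both sides collapse to $[a]_q!/([c]_q![b-c]_q![a-b]_q!)$ (and with your definition $[n]_q!:=\prod_{i=1}^n(q^i-1)$ there are in fact no stray $(q-1)$ factors, so the remark about cancellation is unnecessary but harmless); the degenerate cases $c<0$, $b<c$, $b<0$, $a<b$ do each force a vanishing $q$-binomial on both sides, and the flip rule handles $a<0$ after bookkeeping. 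Part~(2) is the $q$-Vandermonde identity and both of your suggested routes (Pascal recursion and the subspace-counting stratification of $\Fq^{a}\oplus\Fq^{b}$) are legitimate; the combinatorial one is the cleanest for the nonnegative case.

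The one genuine wrinkle is in how you set up the induction for part~(2). You announce induction on $c$ with base case $c=0$, but then propose to expand \emph{both} $\qbin{a+b-1}{c-1}$ and $\qbin{a+b-1}{c}$ by the induction hypothesis; the second of these has the same $c$, so a pure induction on $c$ does not reach it. What actually works is induction on $b$ (or on $a$, or on $a+b$): write $\qbin{a+b}{c}=q^{c}\qbin{a+(b-1)}{c}+\qbin{a+(b-1)}{c-1}$, apply the hypothesis at $b-1$ to both terms, pull out the common factor $q^{j(b-c+j)}\qbin{a}{j}$, and then recognize $q^{c-j}\qbin{b-1}{c-j}+\qbin{b-1}{c-1-j}=\qbin{b}{c-j}$ by the reverse Pascal rule. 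With the induction variable corrected (and the base case adjusted accordingly, or handled by taking $c=0$ \emph{and} $b=0$ as simultaneous base cases in a double induction), your argument goes through; as written, the inductive step is not quite justified. You should also note explicitly that your Pascal-type recursions themselves must be checked against all four branches of the paper's piecewise definition when parameters go negative, since the paper's $q$-binomial is defined for all integers.
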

	
	\section{Tensor codes}

  Roth introduced the class of tensor codes, as a generalisation of rank-metric codes, in \cite{roth1991maximum,roth1996tensor} and for the case $n_1=\cdots=n_r$. 
  We will consider this class of codes, which are subspaces of $\F$, without the constraint that the $n_i$ are all equal.
    
    \begin{definition}
        A \textbf{tensor code} is a $k$-dimensional subspace $\C$ of $\F$. We define the \textbf{maximum rank} $\maxrk(\C)$ as $\maxrk(\C):=\max\{\rk(C):C\in \C\}$. The \textbf{minimum} (\textbf{tensor rank}) \textbf{distance} of a non-zero code $\C$ is $d(\C):=\min\{\rk(C):C \in\C, c\neq 0\}$. The \textbf{tensor rank} $\trk(\C)$ of $\C$ is the minimum dimension of a perfect space containing $\C$.
    \end{definition}
    
	We refer to $n_1\times\cdots\times n_r,k,d(\C)$ as the \textbf{code parameters} of $\C$ and we say that $\C$ is an $\Fq$-$[n_1\times\cdots\times n_r,k,d(\C)]$ code. Throughout the paper, we let $\C$ be an $\Fq$-$[n_1\times\cdots\times n_r,k,d(\C)]$ code and we write $d$ instead of $d(\C)$. For any $i$, we denote by $\cdot$ the dot product on $\Fq^{n_i}$, defined in the usual way as $x\cdot y = \sum_{j=1}^{n_i} x_jy_j$ for $x,y \in \Fq^{n_i}$. 
	We furthermore denote by $*$ the non-degenerate bilinear form defined by:
    \begin{equation*}
        \begin{array}{cccc}
              *:&\F\times\F &\longrightarrow&\Fq\\
              &\displaystyle\left(\bigotimes_{i=1}^ru^{(i)},\bigotimes_{i=1}^rv^{(i)}\right)&\longmapsto&\displaystyle\prod_{i=1}^r\left(u^{(i)}\cdot v^{(i)}\right).
        \end{array}
    \end{equation*}

    \begin{remark}
        One can check that, for all $X,Y\in\F$, we have
        \begin{equation*}
            X*Y=\sum_{j_1=1}^{n_1}\cdots\sum_{j_r=1}^{n_r}X_{j_1,\cdots,j_r}Y_{j_1,\cdots,j_r}.
        \end{equation*}
    \end{remark}

    \begin{definition}
      We define the \textbf{dual} of $\C$ to be
        \begin{equation*}
            \C^\perp:=\{X\in\F:X*C=0 \textup{ for all } C\in\C\}.
        \end{equation*}
    \end{definition}
    
	We summarize some basic facts on the dual tensor code.
    
    \begin{lemma}
    \label{lem:propdual}
        Let $\C,\mD\leq\F$ be codes. The following hold.
        \begin{enumerate}
         \setlength\itemsep{0.5em}
            \item\label{item1:propdual} $\left(\C^\perp\right)^\perp=\C$.
            \item\label{item2:propdual} $\dimq{\C^\perp}=n-\dimq{\C}$.
            \item\label{item3:propdual} $(\C\cap\mD)^\perp=\C^\perp+\mD^\perp$.
        \end{enumerate}
    \end{lemma}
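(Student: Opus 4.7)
The plan is to exploit the fact that $*$ is a non-degenerate symmetric bilinear form on the $n$-dimensional space $\F$. Once one has this, the three claims are standard linear-algebra consequences of a perfect pairing.

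For item (\ref{item2:propdual}), I would consider the $\Fq$-linear evaluation map $\Phi:\F\longrightarrow\textup{Hom}_{\Fq}(\C,\Fq)$ sending $X$ to the functional $C\longmapsto X*C$. By construction $\ker(\Phi)=\C^\perp$. Non-degeneracy of $*$ ensures that every linear functional on $\C$ extends to one on $\F$ which can be represented via $*$, so $\Phi$ is surjective. Rank-nullity then gives $\dimq{\C^\perp}=\dimq{\F}-\dimq{\C}=n-\dimq{\C}$.

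For item (\ref{item1:propdual}), the inclusion $\C\subseteq(\C^\perp)^\perp$ is immediate from the definition of $*$-orthogonality. Two applications of (\ref{item2:propdual}) give $\dimq{(\C^\perp)^\perp}=n-\dimq{\C^\perp}=\dimq{\C}$, so the inclusion is an equality.

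For item (\ref{item3:propdual}), one inclusion is immediate: any element of $\C^\perp+\mD^\perp$ annihilates every vector in $\C\cap\mD$ under $*$. For the reverse, I would count dimensions. First, the identity $(\C+\mD)^\perp=\C^\perp\cap\mD^\perp$ is a direct verification, since $X*Y=0$ for all $Y\in\C+\mD$ iff $X$ annihilates both $\C$ and $\mD$. Combining this with (\ref{item2:propdual}) and the Grassmann formula yields
\begin{align*}
\dimq{\C^\perp+\mD^\perp}&=\dimq{\C^\perp}+\dimq{\mD^\perp}-\dimq{\C^\perp\cap\mD^\perp}\\
&=(n-\dimq{\C})+(n-\dimq{\mD})-\bigl(n-\dimq{\C+\mD}\bigr)\\
&=n-\dimq{\C\cap\mD}=\dimq{(\C\cap\mD)^\perp},
\end{align*}
so the containment is an equality.

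The only genuine content is the surjectivity of the evaluation map $\Phi$ in (\ref{item2:propdual}); this rests entirely on the non-degeneracy of $*$, which is already asserted for the pairing defined in the excerpt. All remaining steps are routine manipulations of dimensions and dual subspaces.
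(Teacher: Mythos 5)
Your proof is correct. The paper itself states this lemma without proof (it is presented as a summary of ``basic facts'' on the dual tensor code), so there is no paper argument to compare against; your development via the non-degenerate pairing, rank--nullity, and the Grassmann formula is exactly the standard linear-algebra route one would supply, and each step is sound.

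One small point worth making explicit: your item (\ref{item1:propdual}) quietly uses that $*$ is symmetric (so that $C*X=0$ for all $X\in\C^\perp$ gives $C\in(\C^\perp)^\perp$, not just membership in a ``left'' orthogonal). Symmetry is clear from the coordinate description $X*Y=\sum_{j_1,\ldots,j_r}X_{j_1,\ldots,j_r}Y_{j_1,\ldots,j_r}$ given in the paper's remark following the definition, but since you invoke non-degeneracy explicitly it would be tidy to flag symmetry too.
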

    
    In the remainder, unless explicitly stated, $\C\leq\F$ will denote a tensor code of dimension $k$ over $\Fq$ and minimum distance $d$. We now define the {\em closure} of $\C$. As the reader will see, this corresponds to a minimal {\em anticode} containing $\C$. 
    Before giving a formal definition, let us state what the closure is for a matrix code.
    To this end, let $r=2$. For each $C \in \C$, denote by $\rowsp(C)$ and $\colsp(C)$ the vector spaces generated by the rows and the columns of $C$, respectively, 
    and define the closure of $\<C\>_{\Fq}$ to be $\colsp(C) \otimes \rowsp(C)$.
    More generally, we define $\cl(\C)=\colsp(\C)\otimes\rowsp(\C)$, where 
    \begin{equation*}
        \rowsp(\C)=\sum_{C\in\C}\rowsp(C) \qquad\textup{ and }\qquad \colsp(\C)=\sum_{C\in\C} \colsp(C).
    \end{equation*}
    We remark that the closure of a matrix code does not require knowledge of its tensor rank. However, in our first definition of the closure, we will express it in terms of the decomposition of codewords as rank-1 tensors.  
    
	\begin{definition}
	\label{def:closure}
	    Let $ C=\sum_{j=1}^{\rk(C)}c_j^{(1)}\otimes\cdots\otimes c_j^{(r)}\in\C$ be a codeword. For each $i\in\{1,\ldots,r\}$, we define the subspace
	    $\cl(C)^{(i)}:=\<c_j^{(i)}:j\in\{1,\ldots,\rk(C)\}\>_{\Fq} \leq \Fq^{n_i}$. The \textbf{closure} of $\C$ is the tensor space
	    \begin{equation*}
	        \cl(\C):=\bigotimes_{i=1}^r\cl(\C)^{(i)} \qquad \textup{ where }\qquad \cl(\C)^{(i)}:=\sum_{C\in \C}\cl(c)^{(i)}
	    \end{equation*}
	    for any $i\in\{1,\ldots,r\}$. For a tensor $X\in\F$ we write $\cl(X)$ to mean $\cl\left(\<X\>_{\Fq}\right)$.
	\end{definition}
    
   Clearly, every tensor code is contained in its own closure.
   
   \begin{definition}
       A \textbf{slice} of a tensor $X\in\F$ is a two-dimensional fragment, obtained by fixing all the indices but two of $X$ seen as $r$-dimensional array.
   \end{definition}
    
    Figures \ref{fig:slice1}, \ref{fig:slice2} and \ref{fig:slice3} show the different types of slices of a $3$-tensor obtained by fixing the first, second and third index respectively. 
   
   \begin{figure}[htb]
   \centering
       \begin{minipage}{0.32\textwidth}
       \centering
           \includegraphics[width=0.8\linewidth]{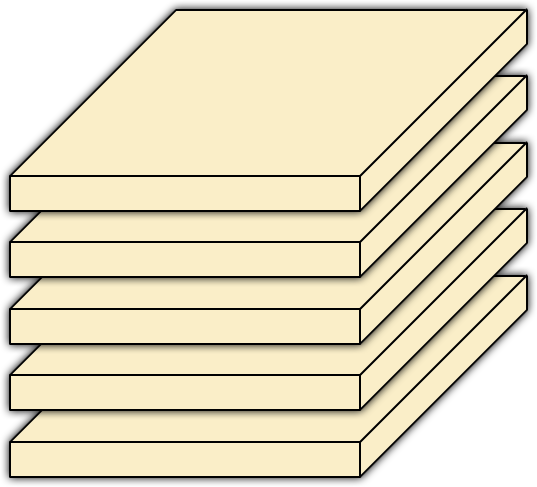}
            \caption{\label{fig:slice1} Index-1 }
       \end{minipage}
       \begin{minipage}{0.32\textwidth}
       \centering
           \includegraphics[width=0.8\linewidth]{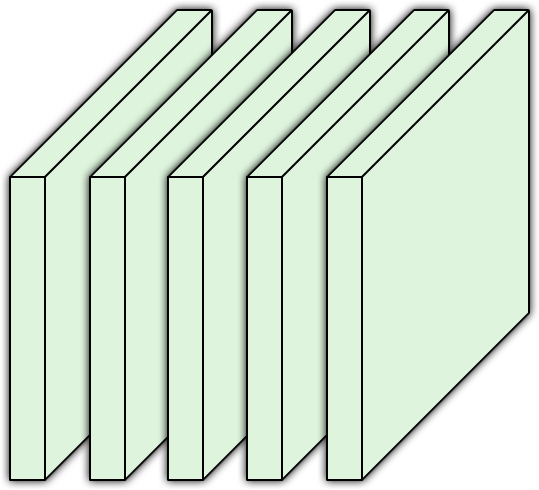}
            \caption{\label{fig:slice2} Index-2 }
       \end{minipage}
       \begin{minipage}{0.32\textwidth}
       \centering
           \includegraphics[width=0.8\linewidth]{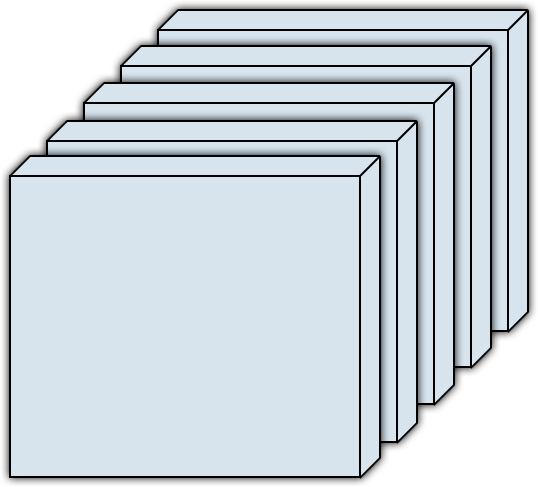}
           \caption{\label{fig:slice3} Index-3 }
       \end{minipage}
   \end{figure}
   
   The following provides a convenient representation of $3$-tensors.
   
   \begin{notation}
       Let $s\in\{1,\ldots,n_3\}$ and we define the following projection map
       \begin{equation*}
           \begin{array}{ccccc}
                \Sigma_s &: & \Fq^{n_1}\otimes\Fq^{n_2}\otimes\Fq^{n_3} &\longrightarrow &  \Fq^{n_1\times n_2} \\[0.5em]
                &&X & \longmapsto &\left(X_{t_1,t_2,s}:t_1\in[n_1], t_2\in[n_2]\right).
           \end{array}
       \end{equation*}
       In the following, we represent a $3$-tensor $X\in  \Fq^{n_1}\otimes\Fq^{n_2}\otimes\Fq^{n_3}$ as $1$-dimensional array whose components are the slices of $X$ obtained by fixing the third index, i.e. 
       \begin{equation*}
           X=\left(\Sigma_1(X)\mid\Sigma_2(X)\mid\cdots\mid\Sigma_{n_3}(X)\right).
       \end{equation*}
       We use to refer to this representation as the \textbf{matrix representation} of $X$. 
   \end{notation}
   
   \begin{example}
   \label{ex:closure}
    Let $\C\leq \F_3^2\otimes \F_3^3\otimes \F_3^4$ be the $3$-dimensional tensor code generated by
    \begin{align*}
        &X:=\left(\begin{array}{ccc|ccc|ccc|ccc}
            2&1&1&2&1&0&0&0&1&2&1&0\\
            0&0&0&0&0&0&0&0&0&0&0&0
        \end{array}\right),\\
        &Y:=\left(\begin{array}{ccc|ccc|ccc|ccc}
            0&0&0&0&0&0&0&0&0&0&0&0\\
            1&2&1&1&1&0&2&2&0&2&0&1
         \end{array}\right),\\
        &Z:=\left(\begin{array}{ccc|ccc|ccc|ccc}
            2&0&2&1&0&2&1&0&1&2&0&2\\
            2&0&1&0&0&0&1&0&2&2&0&1
        \end{array}\right).
    \end{align*}
    
    One can observe that $\rk(X)=2$ and, in particular, $X$ is the sum of the following rank-1 tensors
    \begin{align*}
        &\left(\begin{array}{ccc|ccc|ccc|ccc}
            2&1&1&2&1&0&0&0&0&2&1&0\\
            0&0&0&0&0&0&0&0&0&0&0&0
        \end{array}\right)=(1,0)\otimes(2,1,0)\otimes(1,1,0,1),\\
        &\left(\begin{array}{ccc|ccc|ccc|ccc}
            0&0&0&0&0&0&0&0&1&0&0&0\\
            0&0&0&0&0&0&0&0&0&0&0&0
        \end{array}\right)=(1,0)\otimes(0,0,1)\otimes(1,0,1,0).
    \end{align*}
    
    Therefore, we have $\cl(X)=\<(1,0)\>_{\F_3}\otimes\<(2,1,0),(0,0,1)\>_{\F_3}\otimes\<(1,1,0,1),(0,1,2,1)\>_{\F_3}$. Analogously, one can check that
    \begin{itemize}
    \setlength\itemsep{0.5em}
        \item $\cl(Y)=\<(0,1)\>_{\F_3}\otimes \<(1,0,2),(0,1,1)\>_{\F_3}\otimes\<(1,0,0,1),(0,1,2,1)\>_{\F_3}$,
        \item  $\cl(Z)=\<(1,0),(0,1)\>_{\F_3}\otimes \<(1,0,0),(0,0,1)\>_{\F_3}\otimes\<(1,0,2,1),(0,1,0,0)\>_{\F_3}$.
    \end{itemize}
    Finally, one can easily check that
    \begin{align*}
        \cl(\C)^{(1)}&=\cl(X)^{(1)}+\cl(Y)^{(1)}+\cl(Z)^{(1)}=\<(1,0)\>_{\F_3}+\<(0,1)\>_{\F_3}+\<(1,0),(0,1)\>_{\F_3} = \F_3^2,\\[1em]
        \cl(\C)^{(2)}&=\cl(X)^{(2)}+\cl(Y)^{(2)}+\cl(Z)^{(2)}\\&=\<(2,1,0),(0,0,1)\>_{\F_3}+\<(1,0,2),(0,1,1)\>_{\F_3}+\<(1,0,0),(0,0,1)\>_{\F_3} =\F_3^3,\\[1em]
        \cl(\C)^{(3)}&=\cl(X)^{(3)}+\cl(Y)^{(3)}+\cl(Z)^{(3)}\\&=\<(1,1,0,1),(0,1,2,1)\>_{\F_3}+\<(1,0,0,1),(0,1,2,1)\>_{\F_3}+\<(1,0,2,1),(0,1,0,0)\>_{\F_3}=\F_3^4.
    \end{align*}
    Therefore, $\cl(\C)=\F_3^2\otimes \F_3^3\otimes \F_3^4$.
\end{example}
   
   We recall the following well-known definition from tensors.
   
   \begin{definition}
       A \textbf{fiber} of a tensor $X\in\F$ is a one-dimensional fragment, obtained by fixing all the indices but one of $X$ seen as $r$-dimensional array.
   \end{definition}
   
  Figures \ref{fig:row}, \ref{fig:col} and \ref{fig:tube} show the different types of fibers of a $3$-tensor. 
  
   \begin{figure}[htb]
       \begin{minipage}{0.32\textwidth}
       \centering
           \includegraphics[width=0.8\linewidth]{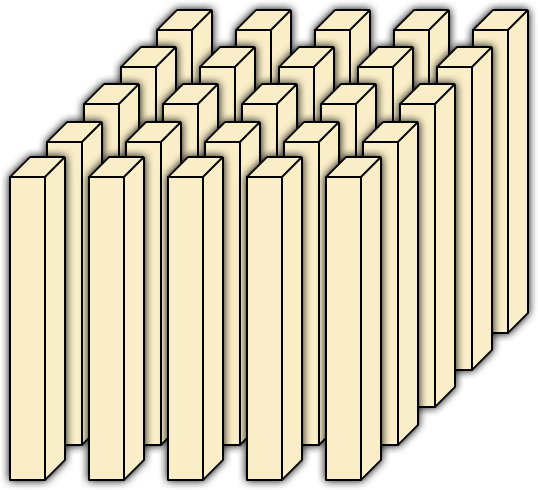}
            \caption{\label{fig:col} Column }
       \end{minipage}
       \begin{minipage}{0.32\textwidth}
       \centering
           \includegraphics[width=0.8\linewidth]{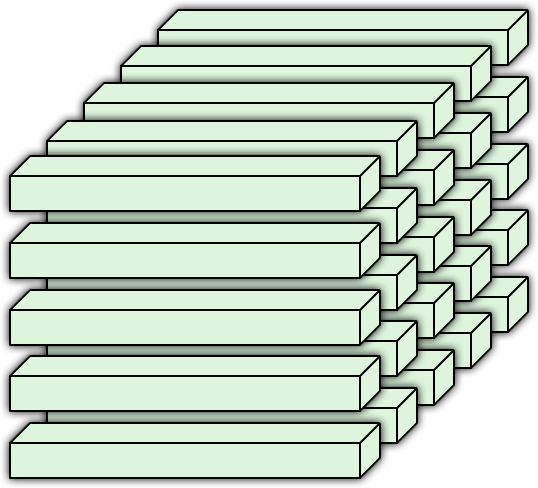}
           \caption{\label{fig:row} Row }
       \end{minipage}
       \begin{minipage}{0.32\textwidth}
       \centering
           \includegraphics[width=0.8\linewidth]{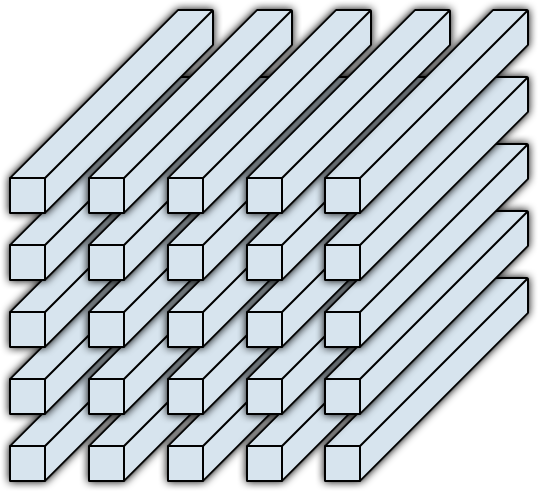}
            \caption{\label{fig:tube} Tube }
       \end{minipage}
   \end{figure}
   
    The following definition provide an explicit projection map that returns the different fibers of an $r$-tensor. 
    
    \begin{notation}
        For any integer $i\in\Z$, we denote by $\overline{i}$ the unique representative of the equivalence class of $i \equiv r$ in the set $\{1,\ldots,r\}$.  
    \end{notation}
    
    \begin{definition}
    \label{def:project}
        Let $i\in\{1,\ldots,r\}$ be an integer and let $s=(s_1,\ldots,s_r) \in [n_{\overline{i+1}}]\times \cdots\times [n_{\overline{i+r}}]$. For each positive integer $t$, define
        \begin{equation*}
             \sigma_1(s,t) = (t,s_1,\ldots,s_{r-1}),\;\; \sigma_2(s,t) = (s_{r-1},t,s_1,\ldots,s_{r-2}),\;\;\ldots\;\;,\;\;\sigma_r(s,t) = (s_{1},\ldots,s_{r-1},t).
        \end{equation*}
        We define the following projection map 
        \begin{equation*}
            \begin{array}{cccc}
                \Pi_{i,s}:&\Fq^{n_1}\otimes\ldots\otimes\Fq^{n_r}&\longrightarrow&\Fq^{n_i}\\
                &X &\longmapsto &(X_{\sigma_i(s,t)}:t\in[n_i]).
            \end{array}
        \end{equation*}
    \end{definition}
    
    For any $X\in\F$, we refer to  $\{\Pi_{i,s}:s\in [n_{\overline{i+1}}]\times \cdots\times [n_{\overline{i+r}}]\}$ as the set of \textbf{mode}-$i$ fibers of $X$. In particular if $X$ is a $3$-tensor then we have that column, row and tube fibers are mode-$1$, mode-$2$ and mode-$3$ fibers respectively. The next result follows from Definitions \ref{def:closure} and \ref{def:project}. Its purpose is to demonstrate that the closure of a tensor (and hence a tensor code) can be constructed without obtaining a decomposition of it as a sum of rank-$1$ tensors.
    
    \begin{lemma}
    \label{thm:charclos}
         The following hold. 
         \begin{enumerate}
             \item For any $i\in\{1,\ldots,r\}$ and $c \in \C$,
         $\displaystyle\cl(c)^{(i)}=\<\Pi_{i,s}(c):s \in [n_{\overline{i+1}}]\times \cdots\times [n_{\overline{i+r}}]\>_{\Fq}.$
         \item
            $
           \displaystyle \cl(\C)=\bigotimes_{i=1}^r\sum_{c\in\C}\<\Pi_{i,s}(c):s \in [n_{\overline{i+1}}]\times \cdots\times [n_{\overline{i+r}}]\>_{\Fq}.
        $
         \end{enumerate}
         
    \end{lemma}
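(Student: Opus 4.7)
The plan is to prove Part (1) first; Part (2) will follow by substituting (1) into $\cl(\C)^{(i)}=\sum_{c\in\C}\cl(c)^{(i)}$ and then tensoring over $i\in[r]$. For (1), I fix $i\in[r]$, a codeword $c\in\C$, and a minimum rank decomposition $c=\sum_{j=1}^{R}c_j^{(1)}\otimes\cdots\otimes c_j^{(r)}$ with $R=\rk(c)$. For each $j\in[R]$, write $v_j\in\bigotimes_{k\neq i}\Fq^{n_k}$ for the simple tensor obtained from the $j$-th summand by deleting its $i$-th factor. Unwinding the definition of $\sigma_i(s,t)$ in Definition \ref{def:project} and using multilinearity of the tensor product yields the identity
\begin{equation*}
\Pi_{i,s}(c)=\sum_{j=1}^{R}\alpha_{j,s}\,c_j^{(i)}
\end{equation*}
for every $s\in[n_{\overline{i+1}}]\times\cdots\times[n_{\overline{i+r}}]$, where $\alpha_{j,s}\in\Fq$ is the entry of $v_j$ corresponding to $s$ under the natural bijection between $[n_{\overline{i+1}}]\times\cdots\times[n_{\overline{i+r}}]$ and $\prod_{k\neq i}[n_k]$. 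This identity immediately gives the inclusion $\<\Pi_{i,s}(c):s\>_{\Fq}\subseteq\cl(c)^{(i)}$.

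The main obstacle is the reverse inclusion, which requires recovering each $c_j^{(i)}$ individually from linear combinations of mode-$i$ fibers. The crux is the auxiliary claim that $v_1,\ldots,v_R$ are $\Fq$-linearly independent in $\bigotimes_{k\neq i}\Fq^{n_k}$. I would prove this by contradiction: if, after reindexing, $v_R=\sum_{j<R}\beta_j v_j$, then multilinearity in slot $i$ gives
\begin{equation*}
c=\sum_{j=1}^{R-1}\bigl(c_j^{(i)}+\beta_j c_R^{(i)}\bigr)\otimes^{(i)}v_j,
\end{equation*}
where $\otimes^{(i)}$ denotes insertion into slot $i$ of the $(r-1)$-fold tensor. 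Deleting any summand with vanishing $i$-th factor, this exhibits $c$ as a sum of at most $R-1$ simple tensors, contradicting $R=\rk(c)$. Note that each $v_j$ is itself nonzero because otherwise the $j$-th summand of the original decomposition would vanish, again contradicting minimality.

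Given independence of $v_1,\ldots,v_R$, the $\Fq$-linear map $(\lambda_s)\mapsto\bigl(\sum_s\lambda_s\alpha_{j,s}\bigr)_{j\in[R]}$ from $\Fq^{n/n_i}$ to $\Fq^R$ is surjective, so combining with the first displayed identity, every vector of the form $\sum_j\gamma_j c_j^{(i)}$ lies in $\<\Pi_{i,s}(c):s\>_{\Fq}$; specializing to the standard basis of $\Fq^R$ puts each $c_j^{(i)}$ in $\<\Pi_{i,s}(c):s\>_{\Fq}$, completing (1). As a byproduct, this shows that $\cl(c)^{(i)}$ does not depend on the choice of minimum rank decomposition. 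Part (2) is then a direct substitution: $\cl(\C)^{(i)}=\sum_{c\in\C}\cl(c)^{(i)}=\sum_{c\in\C}\<\Pi_{i,s}(c):s\>_{\Fq}$, and tensoring over $i\in[r]$ yields the stated formula.
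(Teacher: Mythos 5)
Your proof is correct, and since the paper gives no proof of this lemma (it only remarks that the result ``follows from Definitions~\ref{def:closure} and~\ref{def:project}''), your write-up supplies precisely the reasoning that the paper elides. You rightly observe that only the inclusion $\<\Pi_{i,s}(c):s\>_{\Fq}\subseteq\cl(c)^{(i)}$ is immediate from multilinearity; the reverse inclusion is the real content, since it also establishes that $\cl(c)^{(i)}$ is well-defined, i.e., independent of the chosen minimum-rank decomposition. The auxiliary claim you isolate --- that the complementary simple tensors $v_1,\ldots,v_R$ arising from a minimum-rank decomposition are $\Fq$-linearly independent --- is exactly the right lemma, and your rank-reduction argument for it (substitute a hypothetical dependence $v_R=\sum_{j<R}\beta_j v_j$ back into the decomposition, collect terms in slot $i$, and obtain an expression for $c$ as a sum of at most $R-1$ simple tensors) is a clean way to prove it. Once $v_1,\ldots,v_R$ are known to be independent, the coefficient matrix $(\alpha_{j,s})_{j,s}$ has full row rank $R$, so your evaluation map is surjective and each $c_j^{(i)}$ is recovered as a linear combination of fibers; part (2) is then a direct substitution into $\cl(\C)^{(i)}=\sum_{c\in\C}\cl(c)^{(i)}$ followed by taking the tensor product over $i$.
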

    
    \begin{example}
        Let $X$ be as in Example \ref{ex:closure} and we have
        \begin{equation*}
        \begin{array}{cc}
        \renewcommand{\arraystretch}{1.5}
            \begin{array}{|ccc|}
            \hline
                 X_{1,1,1}=2 & X_{1,2,1}=1 & X_{1,3,1}=1\\
                X_{2,1,1}=0 & X_{2,2,1}=0 & X_{2,3,1}=0\\\hline
            \end{array} &
            \renewcommand{\arraystretch}{1.5}
            \begin{array}{|ccc|}
            \hline
                 X_{1,1,1}=2 & X_{1,2,1}=1 & X_{1,3,1}=1\\
                    X_{2,1,1}=0 & X_{2,2,1}=0 & X_{2,3,1}=0\\\hline
            \end{array}\\\\
            \renewcommand{\arraystretch}{1.5}
            \begin{array}{|ccc|}
            \hline
                X_{1,1,3}=0 & X_{1,2,3}=0 & X_{1,3,3}=1\\
                X_{2,1,3}=0 & X_{2,2,3}=0 & X_{2,3,3}=0\\\hline
            \end{array} &
            \renewcommand{\arraystretch}{1.5}
            \begin{array}{|ccc|}
            \hline
                X_{1,1,4}=2 &  X_{1,2,4}=1 &  X_{1,3,4}=0\\
                X_{2,1,4}=0 & X_{2,2,4}=0 & X_{2,3,4}=0\\\hline
            \end{array}
         \end{array}
        \end{equation*}
        We want to compute $\cl(X)^{(i)}$ for $i\in\{1,2,3\}$.
        \begin{itemize}
        \setlength\itemsep{1em}
            \item Let $i=1$. One can check that $\{\Pi_{1,s}(X):s \in [3]\times[4]\}$ is the set of vectors of $\F_3^2$ given by  $ \{(X_{1,s_1,s_2},X_{2,s_1,s_2}):s \in [3]\times[4]\}$, that is
            \begin{equation*}
                \{(2,0),(1,0),(0,0)\}.
            \end{equation*}
            \item Let $i=2$. One can check that $\{\Pi_{2,s}(X):s \in [4]\times [2]\}$ is the set of vectors of $\F_3^3$ given by $ \{(X_{s_2,1,s_1},X_{s_2,2,s_1},X_{s_2,3,s_1}):s \in [4]\times[2]\}$, that is
            \begin{equation*}
                \{(2,1,1),(2,1,0),(0,0,1),(0,0,0)\}.
            \end{equation*}
            \item Let $i=3$. One can check that $\{\Pi_{3,s}(X):s \in [2]\times [3]\}$ is the set of vectors of $\F_3^4$ given by $\{(X_{s_1,s_2,1},X_{s_1,s_2,2},X_{s_1,s_2,3},X_{s_1,s_2,4}):s \in [2]\times[3]\}$, that is 
            \begin{equation*}
                \{(2,2,0,2),(1,1,0,1),(1,0,1,0),(0,0,0,0)\}.
            \end{equation*}
        \end{itemize}
        Therefore, as a consequence of Lemma \ref{thm:charclos} we have 
        \begin{equation*}
            \cl(X)^{(1)}=\<(1,0)\>_{\F_3},\quad
            \cl(X)^{(2)}=\<(2,1,0),(0,0,1)\>_{\F_3},\quad
            \cl(X)^{(3)}=\<(1,1,0,1),(0,1,2,1)\>_{\F_3}
        \end{equation*}
        which implies $\cl(X)=\<(1,0)\>_{\F_3}\otimes\<(2,1,0),(0,0,1)\>_{\F_3}\otimes\<(1,1,0,1),(0,1,2,1)\>_{\F_3}$ as shown in Example \ref{ex:closure}.
    \end{example}
    
\section{Theory of Anticodes}
\label{sec:anticodes}
	Inspired by the work in \cite{ravagnani2016generalized}, in this section we develop the general theory of anticodes for tensor codes. We then define different classes of anticodes and we study their properties. In the following, we say that $\C$ of $\F$ is \textbf{perfect} if $\C$ is generated by simple tensors, (c.f. \cite{atkinson1983ranks}). 
	
	\begin{definition}
	\label{def:anticode}
		We say that $A\leq\F$ is a (\textbf{tensor}) \textbf{anticode} if $A$ is perfect.  If $\A$ is a collection of anticodes of $\F$ we write $\A_a$ to denote the set of all anticodes of $\A$ of dimension $a$, where $a\in\{0,\ldots,n\}$.
		If $\{A^\perp:A\in\A\}$ is a collection of anticodes, then we define the collection of \textbf{dual} (\textbf{tensor}) \textbf{anticodes} of $\A$ to be $\omA:=\{A^\perp:A\in\A\}$. 
	\end{definition}
	
    We now define the different classes of anticodes that we will consider in this paper.
	
	\begin{definition}\label{anticodedef}
	    We define the following.
	    \begin{enumerate}
	    \setlength\itemsep{0.5em}
	        \item The family of \textbf{perfect spaces} is  \\$\A^\ps:=\{A\leq\F:A \textup{ is perfect}\}$.
	        \item The family of \textbf{closure-type anticodes} is \\$\A^\cl:=\{A:A \in \A^\ps\mid A=\cl(A)\}$.
	        \item The family of \textbf{Delsarte-type anticodes} is 
	       \\$\A^\D:=\{A:A \in \A^\cl\mid A^\perp\in\A^\cl \textup{ and } A^{(p)}=\Fq^{n_p} \textup{ for some } p\in[r] \textup{ such that } n_p=n_r\}$.
	        \item The family of \textbf{Ravagnani-type anticodes} is
	            \\$\A^\R:=\{A:A \in \A^\cl\mid A^\perp\in\A^\cl \textup{ and } A^{(p)}=\Fq^{n_p} \textup{ for any } p\in[r] \textup{ such that }n_p \neq n_1\}$, \\if $n_1<n_r$, and 
	            \\$\A^\R:=\{A:A \in \A^\cl\mid A^\perp\in\A^\cl \textup{ and } A^{(p)}=\Fq^{n_p} \textup{ for some } p\in[r] \textup{ such that }n_p = n_1\}$,\\ if $n_1=n_2=\cdots =n_r$.
	    \end{enumerate}
	\end{definition}
	
	Clearly, $\A^R, \A^D \subset \A^{\cl} \subset \A^{\ps}.$
	The choice of these families of anticodes is motivated by the fact that they allow us to define invariants for tensor codes that extend the theory of invariants for matrix codes in the rank-metric. 
	For example, the definition of the \textit{generalized tensor ranks} of a matrix code as defined in \cite[Section~6]{byrne2019tensor}, relies on the notion of perfect spaces (the collection $\A^\ps$), although this was not explicitly stated there. Moreover, we will show that the Delsarte-type and the Ravagnani-type anticodes extend the theory of Delsarte optimal anticodes for rank-metric codes (see \cite{ravagnani2016generalized}). 
	
	\begin{notation}
	\label{notation:isom}
	    For each $i\in\{1,\ldots r\}$ and let $\phi_i:\Fq^{n_i}\longrightarrow\F_q^{n_i}$ be a map. We define
	    $$\phi:\F\longrightarrow\F:\bigotimes_{i=1}^rX^{(i)}\longmapsto\bigotimes_{i=1}^r\phi_i\left(X^{(i)}\right).$$
	    Clearly, if $\phi_i$ is an $\F_q$-isomorphism for each $i$ then $\phi$ is an $\F_q$-linear isometry with respect to the tensor rank. Let $1 \leq j<s\leq r$ and suppose that $n_j=n_s$. Define the map 
	    $$\tau_{j,s}:\F\longrightarrow\F:\bigotimes_{i=1}^rX^{(i)}\longmapsto\left(\bigotimes_{i=1}^{j-1}X^{(i)}\right)\otimes X^{(s)}\otimes\left(\bigotimes_{i=j+1}^{s-1} X^{(i)}\right)\otimes X^{(j)}\otimes \left(\bigotimes_{i=s+1}^{r} X^{(i)}\right).$$
	    That is, $\tau_{j,s}$ is the map that interchanges $X^{(j)}$ and $X^{(s)}$ in $\bigotimes_{i=1}^rX^{(i)}$. It is easy to see that $\tau_{j,s}$ is an $\F_q$-linear isometry of $\F$.
	\end{notation}
	
	We remark that $\tau_{j,s}$ corresponds to the transposition of matrices for the case $r=2$.
    
    \begin{remark}
        Observe that the set $\A^\ps$ of perfect spaces is closed under the isometries defined in Notation \ref{notation:isom}. As the reader will see in the remainder of this section, all the classes of anticodes defined in Definition \ref{def:anticode} are closed under these isometries. This fact will easily follow from the characterizations of the anticodes we now provide.
    \end{remark}
    
    The following example shows that the dual (the orthogonal complement) of a perfect space is not perfect in general. This implies that the set $\overline{\A^\ps}$ is not defined.
    
    \begin{example}
    \label{ex:dualperfsp}
        Let $A\leq\F_3^2\otimes\F_3^3$ be the perfect space of dimension $5$ generated by 
        \begin{equation*}\left\{
            \begin{pmatrix}0&0&0\\0&1&1\end{pmatrix},
            \begin{pmatrix}1&2&1\\1&2&1\end{pmatrix},
            \begin{pmatrix}0&0&1\\0&0&1\end{pmatrix},
            \begin{pmatrix}1&0&2\\0&0&0\end{pmatrix},
            \begin{pmatrix}1&0&1\\0&0&0\end{pmatrix}
            \right\}.
        \end{equation*}
        One can check that the dual of $\C$ is the $1$-dimensional space 
        \begin{equation*}
            A^\perp=\<\begin{pmatrix}0&1&0\\1&0&0\end{pmatrix}\>_{\F_3}
        \end{equation*}
        which is clearly not perfect.
    \end{example}
    
    \begin{remark}
        Observe that $\A^\ps$ is a lattice where the partial order relation is given by the inclusion of subspaces. Denote by $\vee_\ps$ and $\wedge_\ps$ the operations of join and meet, respectively, in $\A^\ps$. More in detail, let $A,B\in\A^\ps$, we have that $A\vee_\ps B$ is the vector space sum $A+B$ and $A\wedge_\ps B$ is the space $\<X\in A\cap B:\rk(X)=1\>_{\Fq}$. Clearly, $A\vee_\ps B$ and $A\wedge_\ps B$ are both perfect spaces. 
    \end{remark}
	
	We now characterize the families of anticodes $\A^\cl$, $\A^\D$, and $\A^\R$. The following result is an immediate consequence of the definition of the closure of a tensor code (Definition \ref{def:closure}) and the fact that the tensor product of a collection of vector spaces is perfect. 
	
    \begin{proposition}
    \label{prop:charClosure}
       The following holds.
       \begin{equation*}
           \A^\cl=\left\{\bigotimes_{i=1}^rA^{(i)}:A^{(i)}\leq\Fq^{n_i}\textup{ for all } i \in \{1,\ldots,r\}\right\}.
       \end{equation*}
    \end{proposition}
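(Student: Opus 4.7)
The plan is to prove the two inclusions separately. For $(\subseteq)$: if $A \in \A^\cl$, then by definition $A = \cl(A)$, and by Definition \ref{def:closure} we have $\cl(A) = \bigotimes_{i=1}^r \cl(A)^{(i)}$ with each $\cl(A)^{(i)} \leq \Fq^{n_i}$. This directly exhibits $A$ as a tensor product of subspaces, so $A$ lies in the right-hand set.

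For $(\supseteq)$: let $A = \bigotimes_{i=1}^r A^{(i)}$ with $A^{(i)} \leq \Fq^{n_i}$. I need to verify both that $A \in \A^\ps$ and that $A = \cl(A)$. Perfection is immediate: choosing an $\Fq$-basis $\{u^{(i)}_j\}$ of each $A^{(i)}$, the tensors $\bigotimes_{i=1}^r u^{(i)}_{j_i}$ are simple and form a basis of $A$, so $A$ is spanned by rank-$1$ tensors. For the equality $A = \cl(A)$, it suffices by Proposition \ref{prop:charClosure}-style reasoning (i.e., by the very definition of $\cl$) to show $\cl(A)^{(i)} = A^{(i)}$ for every $i \in [r]$.

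To establish $\cl(A)^{(i)} \subseteq A^{(i)}$, I would invoke Lemma \ref{thm:charclos}: for any $C \in A$, the subspace $\cl(C)^{(i)}$ is generated by the mode-$i$ fibers $\Pi_{i,s}(C)$. Since $C$ is an $\Fq$-linear combination of simple tensors $\bigotimes_j u^{(j)}$ with $u^{(j)} \in A^{(j)}$, every mode-$i$ fiber of such a simple tensor is a scalar multiple of $u^{(i)} \in A^{(i)}$, and by linearity every mode-$i$ fiber of $C$ lies in $A^{(i)}$. Summing over $C \in A$ gives $\cl(A)^{(i)} \leq A^{(i)}$. Conversely, for any $u \in A^{(i)}$, pick any nonzero $u^{(j)} \in A^{(j)}$ for each $j \neq i$ (if some $A^{(j)} = 0$ then $A = 0$ and the statement is trivial), and set $C = u^{(1)} \otimes \cdots \otimes u^{(i-1)} \otimes u \otimes u^{(i+1)} \otimes \cdots \otimes u^{(r)} \in A$. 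Then $\cl(C)^{(i)} = \langle u \rangle_{\Fq}$, so $u \in \cl(A)^{(i)}$.

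The main delicate point is the inclusion $\cl(A)^{(i)} \subseteq A^{(i)}$, because the definition of $\cl(C)^{(i)}$ refers to the components of a \emph{minimal} rank-$1$ decomposition of $C$, and a priori such a decomposition need not use vectors from the $A^{(j)}$'s. This is precisely what Lemma \ref{thm:charclos} resolves by giving the fiber-based characterization, which is insensitive to the choice of rank-$1$ decomposition. Once that is in hand, both inclusions drop out cleanly.
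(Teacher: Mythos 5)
Your proof is correct and follows the same route the paper takes, which is simply to unwind Definition \ref{def:closure} and observe that a tensor product of subspaces is spanned by simple tensors; the paper states the result as ``an immediate consequence'' and gives no further argument. The one thing you add that the paper silently elides is the well-definedness point: you correctly note that $\cl(C)^{(i)}$ is \emph{a priori} defined via a chosen minimal rank-$1$ decomposition of $C$, which need not have factors in the $A^{(j)}$'s, and you resolve this by invoking the decomposition-independent fiber characterization of Lemma \ref{thm:charclos} to get $\cl(A)^{(i)} \subseteq A^{(i)}$. This is the right fix, and it makes the ``immediate'' claim actually rigorous.
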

    
    \begin{notation}
       Let $i\in\{1,\ldots, r\}$. We denote by $\mL(\Fq^{n_i})$ the lattice of subspaces of $\Fq^{n_i}$ where, for any subspaces $A,B$ of $\Fq^{n_i}$, we define the join of $A$ and $B$ to be $A+B$ and their meet to be $A\cap B$. For any $A\in\A^\cl$ we denote by $A$ the unique subspace of $\Fq^{n_i}$ such that $A=\bigotimes_{i=1}^rA^{(i)}$.
    \end{notation}
    
    \begin{lemma}
    \label{lem:BleqA}
        Let $A,B\in\A^\cl$. We have $B\leq A$ if and only if, for all $i\in\{1,\ldots,r\}$,  $B^{(i)}\leq A^{(i)}$ in $\mL(\Fq^{n_i})$.
    \end{lemma}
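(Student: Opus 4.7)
The plan is to handle the two implications separately: the backward direction is an easy consequence of the tensor product being spanned by simple tensors, while the forward direction will be reduced to a one-line observation via the fiber description of the closure given by Lemma~\ref{thm:charclos}. By Proposition~\ref{prop:charClosure}, both $A$ and $B$ already split as $A=\bigotimes_{i=1}^r A^{(i)}$ and $B=\bigotimes_{i=1}^r B^{(i)}$, so everything reduces to comparing the individual factors.

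For the ``if'' direction, I would fix, for each $i$, a basis $\{b_{j}^{(i)} : 1\leq j \leq \dimq{B^{(i)}}\}$ of $B^{(i)}$. Then the simple tensors $b_{j_1}^{(1)}\otimes\cdots\otimes b_{j_r}^{(r)}$ span $B=\bigotimes_{i=1}^r B^{(i)}$. Assuming $B^{(i)}\leq A^{(i)}$ in $\mL(\Fq^{n_i})$ for each $i$, every such generator lies in $\bigotimes_{i=1}^r A^{(i)}=A$, and hence $B\leq A$.

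For the ``only if'' direction, I would use that $A,B\in\A^\cl$ means $A=\cl(A)$ and $B=\cl(B)$, and that by Proposition~\ref{prop:charClosure} together with Lemma~\ref{thm:charclos}(2) we have
\begin{equation*}
    A^{(i)}=\sum_{c\in A}\<\Pi_{i,s}(c):s\in [n_{\overline{i+1}}]\times\cdots\times[n_{\overline{i+r}}]\>_{\Fq},
\end{equation*}
and the analogous identity for $B^{(i)}$ with $A$ replaced by $B$. Since $B\leq A$ as subsets of $\F$, the sum defining $B^{(i)}$ ranges over a subcollection of that defining $A^{(i)}$, so $B^{(i)}\leq A^{(i)}$ in $\mL(\Fq^{n_i})$, for every $i\in\{1,\ldots,r\}$.

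The only point that requires care is the implicit use of the uniqueness of the factors $A^{(i)}$ and $B^{(i)}$, but this is already built into the notation preceding the statement and follows from Lemma~\ref{thm:charclos}, so no extra argument is needed. I do not expect any genuine obstacle here: the lemma is essentially a transport of the subspace lattice structure on each $\Fq^{n_i}$ to the sublattice of $\A^\cl$, and the whole proof should amount to only a few lines once Lemma~\ref{thm:charclos} is invoked.
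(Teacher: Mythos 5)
Your proof is correct, and while the ``if'' direction is identical to the paper's (take a basis of each $B^{(i)}$ and observe that the resulting simple tensors land in $A$), your ``only if'' direction takes a genuinely different route. The paper argues as follows: since $B \leq A$, a basis of simple tensors $\bigotimes_i c^{(i)}$ of $B$ (with each $c^{(i)}$ a basis element of $B^{(i)}$) lies in $A = \bigotimes_i A^{(i)}$, and then one concludes $c^{(i)} \in A^{(i)}$ for all $i$. That last implication---that a nonzero simple tensor inside a tensor product of subspaces must have each factor in the corresponding subspace---is correct but left implicit in the paper. Your approach instead routes through the fiber description of Lemma~\ref{thm:charclos}: writing $A^{(i)} = \sum_{c \in A}\langle \Pi_{i,s}(c) : s \rangle$ and the analogous formula for $B^{(i)}$, the containment $B \subseteq A$ immediately forces $B^{(i)} \leq A^{(i)}$ because the defining sum for $B^{(i)}$ ranges over a subcollection of terms. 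This buys you an explicit, decomposition-free argument where the paper relies on an unstated fact about simple tensors; the tradeoff is a heavier-looking dependence on Lemma~\ref{thm:charclos}, though that lemma is needed anyway to justify the well-definedness of the factors $A^{(i)}$, which both proofs silently use.
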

    \begin{proof}
        Since $A,B\in\A^\cl$ then $A=\bigotimes_{i=1}^r A^{(i)}$ and $B=\bigotimes_{i=1}^r B^{(i)}$ for some $A^{(i)},B^{(i)}\leq\Fq^{n_i}$ and $i\in\{1,\ldots,r\}$. Clearly, if $B^{(i)}\leq A^{(i)}$ in $\mL(\Fq^{n_i})$ for all $i\in\{1,\ldots,r\}$ then $B\leq A$. It remains to prove the other implication. Suppose $B\leq A$ and observe that $B$ has a basis of simple tensors $C:=\bigotimes_{i=1}^rc^{(i)}$ where, for any $i\in\{1,\ldots,r\}$, $c^{(i)}$ is an element of a basis for $B^{(i)}$. Since $B\leq A$ then $c$ is also an element of $A$. This implies that, for any  $i\in\{1,\ldots,r\}$, $c^{(i)}$ is also an element of $A^{(i)}$. The statement follows.
    \end{proof}
    
    \begin{proposition}
    \label{thm:latticecl}
        $\A^\cl$ is a sublattice of $\A^\ps$.
    \end{proposition}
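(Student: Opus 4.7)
Let $A, B \in \A^\cl$. By Proposition \ref{prop:charClosure} I may write $A = \bigotimes_{i=1}^r A^{(i)}$ and $B = \bigotimes_{i=1}^r B^{(i)}$ for some $A^{(i)}, B^{(i)} \leq \Fq^{n_i}$. To show $\A^\cl$ is a sublattice of $\A^\ps$ in the sense of the paper's definition, the plan is to verify that the ambient operations $A \vee_\ps B = A + B$ and $A \wedge_\ps B = \<X \in A \cap B : \rk(X) = 1\>_{\Fq}$ both produce members of $\A^\cl$, and in fact to identify them with the componentwise tensor products $\bigotimes_{i=1}^r (A^{(i)} + B^{(i)})$ and $\bigotimes_{i=1}^r (A^{(i)} \cap B^{(i)})$ respectively; each of the latter is a closure-type anticode by Proposition \ref{prop:charClosure}.

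For the meet, the crucial tool is the standard multilinear fact that a nonzero simple tensor $u_1 \otimes \cdots \otimes u_r$ lies in $\bigotimes_{i=1}^r V^{(i)}$ if and only if $u_i \in V^{(i)}$ for every $i$. I would prove this by composing, for each $j \in [r]$, the quotient map $\Fq^{n_j} \to \Fq^{n_j}/V^{(j)}$ with identities on the remaining factors to produce an $\Fq$-linear map $\F \to \Fq^{n_1} \otimes \cdots \otimes (\Fq^{n_j}/V^{(j)}) \otimes \cdots \otimes \Fq^{n_r}$ that annihilates $\bigotimes_{i=1}^r V^{(i)}$; this map sends $u_1 \otimes \cdots \otimes u_r$ to a nonzero simple tensor precisely when $u_j \notin V^{(j)}$. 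Applying this criterion to both $A$ and $B$ characterises the simple tensors in $A \cap B$ as the nonzero $u_1 \otimes \cdots \otimes u_r$ with $u_i \in A^{(i)} \cap B^{(i)}$ for every $i$, and their span is exactly $\bigotimes_{i=1}^r (A^{(i)} \cap B^{(i)}) \in \A^\cl$.

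For the join, the inclusion $A + B \leq \bigotimes_{i=1}^r (A^{(i)} + B^{(i)})$ is immediate, since both $A$ and $B$ are contained in the right-hand side by Lemma \ref{lem:BleqA}. The substantive direction is the reverse inclusion. My approach would be to take a simple generator $u_1 \otimes \cdots \otimes u_r$ of $\bigotimes_{i=1}^r (A^{(i)} + B^{(i)})$, decompose $u_i = a_i + b_i$ with $a_i \in A^{(i)}$ and $b_i \in B^{(i)}$, and expand multilinearly into $2^r$ simple summands $\bigotimes_{i=1}^r x_i$ with $x_i \in \{a_i, b_i\}$. The all-$a$ summand lies in $A$ and the all-$b$ summand lies in $B$, but the remaining $2^r - 2$ mixed summands do not obviously belong to $A + B$, and this is where I expect the main obstacle of the proof to sit. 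Resolving it would require a careful refinement argument --- for instance, choosing bases of $A^{(i)}$ and $B^{(i)}$ that simultaneously refine fixed bases of $A^{(i)} \cap B^{(i)}$, and then performing an inclusion--exclusion over the subsets of $[r]$ that index the mixed terms in order to rewrite each mixed tensor as an $\Fq$-linear combination of simple tensors drawn from $A$ and $B$ separately. Only once this cancellation is made rigorous does the identification $A + B = \bigotimes_{i=1}^r(A^{(i)} + B^{(i)})$ follow, and with it the sublattice conclusion.
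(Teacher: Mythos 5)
The meet half of your proposal is fine, and in fact more careful than the paper's: the quotient-map criterion for when a nonzero simple tensor lies in $\bigotimes_{i=1}^r V^{(i)}$ is correct, and it does identify $A\wedge_\ps B$, the span of the rank-one elements of $A\cap B$, with $\bigotimes_{i=1}^r\left(A^{(i)}\cap B^{(i)}\right)\in\A^\cl$. The paper's own proof simply asserts the (slightly stronger) equality $A\cap B=\bigotimes_{i=1}^r\left(A^{(i)}\cap B^{(i)}\right)$ without this argument, so on this half you lose nothing.

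The join half is where the genuine gap sits, and the step you defer is not merely unproved --- it is false, so no refinement or inclusion--exclusion over the mixed terms can rescue it. The identity $A+B=\bigotimes_{i=1}^r\left(A^{(i)}+B^{(i)}\right)$ fails already for $r=2$: take $A=\<e_1\>_{\Fq}\otimes\<e_1\>_{\Fq}$ and $B=\<e_2\>_{\Fq}\otimes\<e_2\>_{\Fq}$ inside $\Fq^{2}\otimes\Fq^{2}$, with $e_1,e_2$ the standard basis vectors. Then $A+B$ is $2$-dimensional, whereas $\bigotimes_{i}\left(A^{(i)}+B^{(i)}\right)=\Fq^{2}\otimes\Fq^{2}$ has dimension $4$; concretely, the only rank-one elements of $A+B$ are the scalar multiples of $e_1\otimes e_1$ and of $e_2\otimes e_2$, so $e_1\otimes e_2$ lies in the componentwise tensor product but not in $A+B$. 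Note that the paper's proof does not attempt your identification at all: it treats the join in one line, claiming that $A\vee_\ps B=A+B$ is again a closure-type anticode. The same example shows that this claim is itself problematic ($A+B$ above is a proper subspace of its closure, hence not in $\A^\cl$), so the obstacle you ran into with the mixed summands reflects a real obstruction rather than a defect of your method; what later results such as Corollary \ref{thm:mincl} actually use is only closure of $\A^\cl$ under $\wedge_\ps$, which is the part both you and the paper establish. As written, however, your proposal proves closure under $\wedge_\ps$ only, and therefore does not complete a proof of the proposition as stated.
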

    \begin{proof}
        We claim that $A\wedge_\ps B=A\cap B$. 
        It is sufficient to show that $A\cap B\in\A^\cl$. We have 
        \begin{equation*}
           A\cap B=\<\bigotimes_{i=1}^rx^{(i)}:x^{(i)}\in A^{(i)} \textup{ and } x^{(i)}\in B^{(i)} \textup{ for all } i\in[r]\>_{\Fq}= \bigotimes_{i=1}^r\left(A^{(i)}\cap B^{(i)}\right)
        \end{equation*}
        which shows that $A\cap B$ is an element of $\A^\cl$. Therefore $\A^\cl$ is closed under the operation of $\wedge_\ps$. It is easy to see that for any $A,B\in A^\cl$, we have $A\vee_\ps B=A+B$ and that the latter is a closure-type anticode. This shows that $\A^\cl$ is also closed under the operation of $\vee_\ps$ and therefore $\A^\cl$ is a sublattice of $\A^\ps.$
    \end{proof}
    
	\begin{corollary}
       \label{thm:mincl}
       The closure of $\C$ is the minimal element of $\A^\cl$ containing $\C$, that is,
       \begin{equation*}
           \cl(\C)=\bigcap\{A\in \A^\cl:\C\leq A\}.
       \end{equation*}
    \end{corollary}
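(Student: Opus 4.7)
The plan is to establish two facts: that $\cl(\C)$ belongs to the set $\{A\in\A^\cl:\C\leq A\}$, and that it is contained in every other member of this set. The conclusion then follows immediately, since the intersection of a collection of subspaces containing one of their own members equals that member.

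First I would observe that $\cl(\C)=\bigotimes_{i=1}^r\cl(\C)^{(i)}$ is of the form described in Proposition \ref{prop:charClosure}, so $\cl(\C)\in\A^\cl$. To see that $\C\leq\cl(\C)$, I would unpack the definition: any codeword $C\in\C$ admits a rank decomposition $C=\sum_j c_j^{(1)}\otimes\cdots\otimes c_j^{(r)}$ where each $c_j^{(i)}\in\cl(C)^{(i)}\leq\cl(\C)^{(i)}$, so $C\in\bigotimes_{i=1}^r\cl(\C)^{(i)}=\cl(\C)$.

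The main content is minimality. Fix $A\in\A^\cl$ with $\C\leq A$. By Proposition \ref{prop:charClosure}, $A=\bigotimes_{i=1}^rA^{(i)}$ for subspaces $A^{(i)}\leq\Fq^{n_i}$. By Lemma \ref{lem:BleqA}, to conclude $\cl(\C)\leq A$ it suffices to verify $\cl(\C)^{(i)}\leq A^{(i)}$ for each $i$. Here I would invoke Lemma \ref{thm:charclos}, which identifies $\cl(C)^{(i)}$ with the span of the mode-$i$ fibers $\Pi_{i,s}(C)$. The key observation is that, for any tensor $T\in\bigotimes_{j=1}^rA^{(j)}$, every mode-$i$ fiber $\Pi_{i,s}(T)$ lies in $A^{(i)}$: this is immediate for a simple tensor $a^{(1)}\otimes\cdots\otimes a^{(r)}$, since the fiber is a scalar multiple of $a^{(i)}$, and extends by $\F_q$-linearity. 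Applying this to $C\in\C\leq A$ gives $\cl(C)^{(i)}\leq A^{(i)}$, and summing over $C\in\C$ gives $\cl(\C)^{(i)}\leq A^{(i)}$, as required.

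Having shown that $\cl(\C)$ is a member of $\{A\in\A^\cl:\C\leq A\}$ contained in every other member, the intersection $\bigcap\{A\in\A^\cl:\C\leq A\}$ equals $\cl(\C)$. The only mildly delicate step is the fiber argument linking Lemma \ref{thm:charclos} to Lemma \ref{lem:BleqA}; everything else is a routine unwinding of definitions together with the sublattice structure from Proposition \ref{thm:latticecl}, which guarantees that intersections of closure-type anticodes are again closure-type anticodes so that the right-hand side is indeed an element of $\A^\cl$.
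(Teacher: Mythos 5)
Your proof is correct, but it follows a genuinely different route from the paper. The paper argues by contradiction: it supposes there is a closure-type anticode $A$ strictly between $\C$ and $\cl(\C)$, uses Lemma \ref{lem:BleqA} to extract a vector $u\in\cl(\C)^{(i)}\setminus A^{(i)}$, and then claims this forces a codeword out of $A$, contradicting $\C\leq A$. As written, that argument only rules out anticodes \emph{comparable} to $\cl(\C)$; to obtain the full minimality claim ($\cl(\C)\leq A$ for every $A\in\A^\cl$ containing $\C$), one must implicitly replace an arbitrary such $A$ by $A\cap\cl(\C)$, invoking the sublattice structure of Proposition \ref{thm:latticecl}. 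Your proof is direct and avoids this subtlety entirely: you reduce via Lemma \ref{lem:BleqA} to the componentwise containment $\cl(\C)^{(i)}\leq A^{(i)}$ and establish it by the observation that the mode-$i$ fibers of any tensor in $\bigotimes_j A^{(j)}$ lie in $A^{(i)}$ (trivially true for simple tensors and extended by linearity of the fiber maps $\Pi_{i,s}$), combined with the fiber characterization of $\cl(C)^{(i)}$ in Lemma \ref{thm:charclos}. The fiber argument is precisely what makes your version self-contained; in exchange it uses Lemma \ref{thm:charclos}, which the paper's proof does not appeal to. Both routes are valid, but yours is the cleaner one and arguably closes a small gap in the paper's presentation. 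One minor remark: your final invocation of Proposition \ref{thm:latticecl} to verify that the right-hand side belongs to $\A^\cl$ is unnecessary once you have shown $\cl(\C)$ is itself a member of the set over which you are intersecting.
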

    \begin{proof}
      	First of all, observe that $\cl(\C)=\bigotimes_{i=1}^r\cl(\C)^{(i)}\in\A^\cl$ by definition of the closure. If $\C\in\A^\cl$ then the result is immediate. We assume $\C\notin\A^\cl$ in the remainder of the proof. Suppose that there exists a closure-type anticode $A\in\A^\cl$ such that $\C< A<\cl(\C)$, write $A=\bigotimes_{i=1}^rA^{(i)}$. By Lemma \ref{lem:BleqA}, there exists $u\in \cl(\C)^{(i)}$ such that $u\notin A^{(i)}$, for some $i\in\{1,\ldots,r\}$. This leads to the contradiction 
      	\begin{equation*}
      	    A\cap \C=\left\{C\in\C: c^{(i)} \notin \<u\>_{\Fq}\right\}\lneq \C.
      	\end{equation*}
      	since, by definition of the closure, we have that there exists at least one element $C$ of $\C$ such that $c^{(i)}\in\<u\>_{\Fq}$. This concludes the proof.
    \end{proof}
    
   In the next theorem we give a characterization of the closure-type dual anticodes. As a consequence, we have that the set of closure-tye dual anticodes $\omA^\cl$ is defined.
    
   \begin{theorem}
    \label{thm:dualAcl}
        Let $ A=\bigotimes_{i=1}^rA^{(i)}\in\A^\cl$. We have
        \begin{equation*}
            A^\perp=\sum_{i=1}^r\left(\bigotimes_{j=1}^{i-1}\Fq^{n_j}\right)\otimes \left(A^{(i)}\right)^\perp\otimes\left(\bigotimes_{j=i+1}^r\Fq^{n_j}\right).
        \end{equation*}
        In particular, $A^\perp$ is a perfect space.
    \end{theorem}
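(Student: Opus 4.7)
The plan is to prove the two sides coincide by exhibiting a common basis, built from appropriately chosen dual bases of $\F$ with respect to the bilinear form $*$. For each $i\in[r]$, I will fix an ordered basis $\{e_1^{(i)},\ldots,e_{n_i}^{(i)}\}$ of $\Fq^{n_i}$ whose first $a_i:=\dimq{A^{(i)}}$ vectors form a basis of $A^{(i)}$, and let $\{f_1^{(i)},\ldots,f_{n_i}^{(i)}\}$ be its dual basis with respect to the dot product, so that $e_j^{(i)}\cdot f_k^{(i)}=\delta_{j,k}$. Since $*$ factors multiplicatively across the tensor factors, the induced tensor bases $\{\bigotimes_i e_{j_i}^{(i)}\}$ and $\{\bigotimes_i f_{j_i}^{(i)}\}$ of $\F$ are dual with respect to $*$.

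From the dual basis property I obtain $(A^{(i)})^\perp=\<f_{a_i+1}^{(i)},\ldots,f_{n_i}^{(i)}\>_{\Fq}$, so the $i$-th summand on the right, which I denote $B_i$, admits a basis consisting of the simple tensors $\bigotimes_k f_{j_k}^{(k)}$ with $j_i>a_i$ and $j_k\in[n_k]$ otherwise. Taking the union over $i$, $\sum_{i=1}^r B_i$ is spanned by the simple tensors $\bigotimes_k f_{j_k}^{(k)}$ for which at least one coordinate satisfies $j_i>a_i$. On the other hand, $A$ has basis $\{\bigotimes_i e_{j_i}^{(i)}: j_i\leq a_i \text{ for all }i\}$, and by duality of the two bases under $*$, a tensor $\bigotimes_k f_{j_k}^{(k)}$ lies in $A^\perp$ if and only if $(j_1,\ldots,j_r)\notin [a_1]\times\cdots\times[a_r]$, i.e.\ exactly when some $j_i>a_i$. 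The two spanning sets agree, so $A^\perp=\sum_{i=1}^r B_i$.

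The final assertion that $A^\perp$ is perfect is then immediate: each $B_i$, being a tensor product of subspaces, is perfect, and a sum of perfect spaces is perfect because the union of their spanning sets of rank-$1$ tensors spans the sum.

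I do not expect a significant obstacle. The only step requiring care is the transfer of the dot-product dual basis to a dual basis for $*$, but this follows directly from $(\bigotimes_i u^{(i)})*(\bigotimes_i v^{(i)})=\prod_i(u^{(i)}\cdot v^{(i)})$. A dimension-counting alternative is available — one may check $B_i\subseteq A^\perp$ on rank-$1$ generators and then compute $\dimq{\sum_i B_i}$ by inclusion–exclusion, using the identity $\bigcap_{i\in S}B_i=\bigotimes_j C_j^{(S)}$ with $C_j^{(S)}=(A^{(j)})^\perp$ for $j\in S$ and $\Fq^{n_j}$ otherwise — but this route depends on an auxiliary lemma about intersections of tensor-product subspaces, so the explicit basis argument is cleaner.
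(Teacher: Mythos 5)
Your argument is correct. The perpendicular-space description via an explicit dual basis is sound: the dual basis $\{f_j^{(i)}\}$ gives $(A^{(i)})^\perp = \langle f_{a_i+1}^{(i)},\ldots,f_{n_i}^{(i)}\rangle$; the tensor dual bases $\{\bigotimes_i e_{j_i}^{(i)}\}$ and $\{\bigotimes_i f_{j_i}^{(i)}\}$ pair correctly under $*$ because of the multiplicative splitting; and the standard fact that the orthogonal complement of a span of some dual-basis vectors is spanned by the complementary dual-basis vectors (a dimension count) finishes the identification of $A^\perp$ with $\sum_i B_i$.

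The route, however, is genuinely different from the paper's. The paper avoids any choice of basis: it writes $A = \bigcap_{i=1}^r X_i$ where $X_i$ is the tensor product with the $i$-th slot replaced by $A^{(i)}$ and the others filled with the full space, verifies $X_i^\perp = Y_i$ (with $Y_i$ the analogous space built from $(A^{(i)})^\perp$) by showing the easy inclusion $Y_i \subseteq X_i^\perp$ and matching dimensions, and then invokes the lattice identity $(\bigcap_i X_i)^\perp = \sum_i X_i^\perp$ from Lemma~\ref{lem:propdual}. Your proof instead picks adapted bases of each $\Fq^{n_i}$ and compares spanning sets of simple tensors directly; it is more elementary in that it does not rely on the sum-intersection duality lemma, but it does require the careful bookkeeping of constructing a dual basis of $\F$ compatible with $A$. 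The two approaches deliver the same content: your basis argument is essentially an unpacking, at the level of coordinates, of the dimension-matching step in the paper's proof. One small point worth making explicit in a polished write-up: when you conclude ``the two spanning sets agree, so $A^\perp = \sum_i B_i$,'' you are implicitly appealing to a dimension count (the number of basis tensors with some $j_i > a_i$ is $n - \prod_i a_i = \dim_{\Fq} A^\perp$), which is fine, but it is better stated than left tacit.
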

    \begin{proof}
        For ease of notation, in the remainder of this proof, we denote by $X_i$ and $Y_i$ the spaces 
        \begin{equation*}
            \left(\bigotimes_{j=1}^{i-1}\Fq^{n_j}\right)\otimes A^{(i)}\otimes\left(\bigotimes_{j=i+1}^r\Fq^{n_j}\right) \qquad\textup{ and }\qquad \left(\bigotimes_{j=1}^{i-1}\Fq^{n_j}\right)\otimes \left(A^{(i)}\right)^\perp\otimes\left(\bigotimes_{j=i+1}^r\Fq^{n_j}\right)
        \end{equation*}
        respectively, for any $i\in\{1,\ldots,r\}$. We claim that $X_i^\perp=Y_i$. It is immediate to check that $Y_i\subseteq X_i^\perp$. Moreover, by a dimension argument, we get
        \begin{equation*}
            \dimq{Y_i}=(n_i-\dimq{A_i})\prod_{\tiny\begin{matrix}j=1\\j\neq i\end{matrix}}^rn_j=n-\dimq{A_i}\prod_{\tiny\begin{matrix}j=1\\j\neq i\end{matrix}}^rn_j=n-\dimq{X_i}=\dimq{X_i^\perp}
        \end{equation*}
        where the last equality follows from  Lemma \ref{lem:propdual} \eqref{item2:propdual}. This implies the claim. Finally, since $ A=\bigcap_{i=1}^rX_i$, by Lemma \ref{lem:propdual} \eqref{item3:propdual}, we have $ A^\perp=\sum_{i=1}^r Y_i$. That $A^\perp$ is perfect follows immediately from the fact that $A^\perp$ is a sum of tensor spaces.  
    \end{proof}
    
    As an immediate consequence of this theorem we have the following characterization of the set $\overline{\A^\cl}$ of closure-type dual anticodes.
    
    \begin{corollary}
    \label{cor:barAcl}
        The set $\overline{\A^\cl}$ is defined. In particular, we have
        \begin{equation*}
            \overline{\A^\cl}=\left\{\sum_{i=1}^r\left(\bigotimes_{j=1}^{i-1}\Fq^{n_j}\right)\otimes A^{(i)}\otimes\left(\bigotimes_{j=i+1}^r\Fq^{n_j}\right):A^{(i)}\leq\Fq^{n_i}\textup{ for all } i \in \{1,\ldots,r\}\right\}.
        \end{equation*}
    \end{corollary}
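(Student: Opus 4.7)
The plan is to derive the corollary as a direct unpacking of Theorem \ref{thm:dualAcl}, combined with Proposition \ref{prop:charClosure} and the involutivity of orthogonal complementation in each factor. First I would handle the inclusion $\subseteq$: given any $A \in \A^\cl$, Proposition \ref{prop:charClosure} writes $A = \bigotimes_{i=1}^r A^{(i)}$ with $A^{(i)} \leq \Fq^{n_i}$, and then Theorem \ref{thm:dualAcl} immediately identifies $A^\perp$ as a sum of tensor spaces of the desired shape (set $B^{(i)} := (A^{(i)})^\perp$ inside $\Fq^{n_i}$). The same theorem also asserts that $A^\perp$ is perfect, which is what is required to conclude that $\overline{\A^\cl}$ is defined as a collection of anticodes.

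For the reverse inclusion $\supseteq$, I would exhibit, for any choice of subspaces $B^{(i)} \leq \Fq^{n_i}$, an element of $\A^\cl$ whose dual realises the stated expression. The natural candidate is
\begin{equation*}
A \;:=\; \bigotimes_{i=1}^r \bigl(B^{(i)}\bigr)^\perp,
\end{equation*}
which lies in $\A^\cl$ by Proposition \ref{prop:charClosure}. Applying Theorem \ref{thm:dualAcl} to this $A$ and using $\bigl((B^{(i)})^\perp\bigr)^\perp = B^{(i)}$ (the analogue of Lemma \ref{lem:propdual}\eqref{item1:propdual} for the dot product on $\Fq^{n_i}$) yields exactly the element prescribed on the right-hand side, completing the equality of sets.

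There is no real obstacle here: the whole argument is a formal manipulation of Theorem \ref{thm:dualAcl} together with the fact that orthogonal complementation in each $\Fq^{n_i}$ is an involution on the lattice of subspaces. The only thing worth flagging explicitly in a written-up proof is that the complementation is being taken with respect to two different bilinear forms (the form $*$ on $\F$ for the outer duality, and the dot product on each $\Fq^{n_i}$ for the inner factors), both of which are non-degenerate, so both are involutions; this keeps the bookkeeping between the two inclusions consistent.
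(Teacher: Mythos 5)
Your proposal is correct and takes the same route the paper intends: the paper states the corollary as an ``immediate consequence'' of Theorem~\ref{thm:dualAcl} (together with Proposition~\ref{prop:charClosure}) without writing out the two inclusions, and your argument simply makes that explicit, including the observation that Theorem~\ref{thm:dualAcl} already supplies the perfectness needed for $\overline{\A^\cl}$ to be a collection of anticodes.
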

    
    \begin{proposition}
       Let $A,B\in \overline{\A^\cl}$. Define  $A\vee_{\overline{\cl}} B$ to be $A+B$ and $A\wedge_{\overline{\cl}} B$ to be the sum of the elements in $\{C\in\B:C\in A\cap B\}$ where
       \begin{equation*}
           \B:=\left\{\left(\bigotimes_{j=1}^{i-1}\Fq^{n_j}\right)\otimes A^{(i)}\otimes\left(\bigotimes_{j=i+1}^r\Fq^{n_j}\right):A^{(i)}\leq\Fq^{n_i}\textup{ for all } i \in \{1,\ldots,r\}\right\}.
       \end{equation*}
       We have that $\overline{\A^\cl}$ with the operation of $\vee_{\overline{\cl}}$ and $\wedge_{\overline{\cl}}$ is a lattice.
    \end{proposition}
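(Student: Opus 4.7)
The plan is to exploit the duality bijection $\perp:\A^\cl\to\overline{\A^\cl}$ established in Theorem \ref{thm:dualAcl} and Corollary \ref{cor:barAcl}, and to transport the lattice structure of $\A^\cl$ proved in Proposition \ref{thm:latticecl}. By Lemma \ref{lem:propdual}\eqref{item1:propdual}, every element of $\overline{\A^\cl}$ can be written uniquely as $A_0^\perp$ for some $A_0\in\A^\cl$, and this correspondence reverses inclusion.

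First I would verify that $\overline{\A^\cl}$ is closed under $\vee_{\overline{\cl}}$. Given $A=A_0^\perp$ and $B=B_0^\perp$ with $A_0,B_0\in\A^\cl$, Lemma \ref{lem:propdual}\eqref{item3:propdual} yields
\begin{equation*}
A\vee_{\overline{\cl}} B \;=\; A+B \;=\; (A_0\cap B_0)^\perp.
\end{equation*}
Since $A_0\cap B_0=A_0\wedge_\ps B_0\in\A^\cl$ by Proposition \ref{thm:latticecl}, the right-hand side lies in $\overline{\A^\cl}$.

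Next I would establish that $A\wedge_{\overline{\cl}} B=A\cap B$, so in particular $A\wedge_{\overline{\cl}} B\in\overline{\A^\cl}$. The containment $A\wedge_{\overline{\cl}} B\subseteq A\cap B$ is immediate since every summand in the definition lies in $A\cap B$. For the reverse, note that Lemma \ref{lem:propdual}\eqref{item3:propdual} gives $A\cap B=(A_0+B_0)^\perp$, and since $A_0+B_0=A_0\vee_\ps B_0\in\A^\cl$, Corollary \ref{cor:barAcl} expresses $A\cap B$ as a sum $\sum_{i=1}^r C_i$ with each $C_i\in\B$. Each such $C_i$ is contained in $A\cap B$, hence contributes to the sum defining $A\wedge_{\overline{\cl}} B$, giving $A\cap B\subseteq A\wedge_{\overline{\cl}} B$.

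With these identifications, $\vee_{\overline{\cl}}$ and $\wedge_{\overline{\cl}}$ are simply subspace sum and subspace intersection restricted to $\overline{\A^\cl}$, and therefore inherit associativity, commutativity, idempotence, and the absorption identities from the ambient subspace lattice of $\F$; equivalently, they realize the least upper bound and greatest lower bound for the inclusion order on $\overline{\A^\cl}$. The main subtlety is the identification $A\wedge_{\overline{\cl}} B=A\cap B$: it is the step that forces us to invoke Corollary \ref{cor:barAcl} to decompose $A\cap B$ into its ``elementary'' summands from $\B$, since the definition of $\wedge_{\overline{\cl}}$ only builds the meet out of such pieces rather than from the full intersection directly.
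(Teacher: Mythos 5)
Your identification $A\wedge_{\overline{\cl}} B=A\cap B$ is false, and the example the paper places immediately after this proposition refutes it: there $A\cap B=\langle(1,0,0)\otimes(1,0,0),\,(1,0,0)\otimes(0,1,1)\rangle_{\F_2}$ (a nontrivial closure-type anticode), while $A\wedge_{\overline{\cl}} B=\{0\}$. The paper's proof does not attempt to show that $\wedge_{\overline{\cl}}$ coincides with subspace intersection; it only observes that the two operations are well-defined on $\overline{\A^\cl}$, and then one checks the lattice axioms directly. The meet $A\wedge_{\overline{\cl}} B$ is, by construction, the \emph{largest element of $\overline{\A^\cl}$ contained in} $A\cap B$, which is in general a proper subspace of $A\cap B$ because $A\cap B$ need not belong to $\overline{\A^\cl}$.

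The step in your argument that fails is ``$A_0+B_0=A_0\vee_\ps B_0\in\A^\cl$.'' The vector space sum of two closure-type anticodes is not in general a closure-type anticode. In the paper's example, $A_0=\<(1,0,0)\>_{\F_2}\otimes\<(0,1,1)\>_{\F_2}$ has dimension $1$ and $B_0=\<(0,1,0),(0,0,1)\>_{\F_2}\otimes\F_2^3$ has dimension $6$ with trivial intersection, so $\dim_{\F_2}(A_0+B_0)=7$; but $7$ is not a product $d_1 d_2$ with $d_1,d_2\leq 3$, so $A_0+B_0$ cannot equal $U\otimes V$ for subspaces $U,V\leq\F_2^3$, hence $A_0+B_2\notin\A^\cl$. (An even simpler counterexample: $\<e_1\>\otimes\<e_1\>+\<e_2\>\otimes\<e_2\>$ is $2$-dimensional but is not a tensor product of subspaces.) Consequently $A\cap B=(A_0+B_0)^\perp$ need not lie in $\overline{\A^\cl}$, and the decomposition from Corollary \ref{cor:barAcl} that your argument invokes is not available. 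To repair the proof you should show instead that $\wedge_{\overline{\cl}}$ produces the greatest lower bound \emph{within the poset} $\overline{\A^\cl}$ (which it does by construction, since every $C\in\B$ contained in $A\cap B$ is summed), rather than trying to identify it with $A\cap B$.
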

    \begin{proof}
       It is immediate from the definition that the operations in the statement are well-defined.
    \end{proof}
    
    The following example demonstrates that $\overline{\A^\cl}$ is not a sublattice of $\A^\ps$.
    \begin{example}
        Consider the following dual closure-type anticodes of $\F_2^3\otimes \F_2^3$
        
        \begin{equation*}
            A:=\<(0,1,0),(0,0,1)\>_{\F_2}\otimes\F_2^{3}+\F_2^{3}\otimes\<(1,0,0),(0,1,1)\>_{\F_2} \quad\textup{ and }\quad B:=\<(1,0,0)\>_{\F_2}\otimes\F_2^3.
        \end{equation*}
        
        One can check that
        \begin{equation*}
            A\wedge_{\ps} B=\<(1,0,0)\otimes(1,0,0),(1,0,0)\otimes (0,1,1)\>_{\F_2}\neq 0_{\ps}=0_{\overline{\cl}}= A\wedge_{\overline{\cl}} B
        \end{equation*}
        where $\wedge_{\ps}$ and $\wedge_{\overline{\cl}}$, and $0_{\ps}$ and $0_{\overline{\cl}}$ denote the operation of $\wedge$ and the zero element in the lattices $\A^\ps$ and $\overline{\A^\cl}$ respectively.
    \end{example}
    
    In the following result we give a characterization of the set $\A^\D$.
    
    \begin{theorem}
    \label{thm:charDelsarte}
       Let $P:=\{i\in\{1,\ldots,r\}:n_i=n_r\}$. We have 
       \begin{equation*}
           \A^\D=\bigcup_{p\in P}\left\{\left(\bigotimes_{j=1}^{i-1}\Fq^{n_j}\right)\otimes A^{(i)}\otimes\left(\bigotimes_{j=i+1}^r\Fq^{n_j}\right):A^{(i)}\leq\Fq^{n_i}, i \in \{1,\ldots,r\}\setminus\{p\} \right\}.
       \end{equation*}
    \end{theorem}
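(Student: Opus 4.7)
For the inclusion ``$\supseteq$'', suppose $A = \left(\bigotimes_{j<i} \Fq^{n_j}\right) \otimes A^{(i)} \otimes \left(\bigotimes_{j>i} \Fq^{n_j}\right)$ for some $i \in [r]\setminus\{p\}$ and $p \in P$. Then $A \in \A^\cl$ by Proposition~\ref{prop:charClosure}; Theorem~\ref{thm:dualAcl} collapses the formula for $A^\perp$ to the single pure tensor $\left(\bigotimes_{j<i} \Fq^{n_j}\right) \otimes (A^{(i)})^\perp \otimes \left(\bigotimes_{j>i} \Fq^{n_j}\right) \in \A^\cl$; and the $p$-th factor of $A$ is $\Fq^{n_p}$ by construction, so $A \in \A^\D$.

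For the inclusion ``$\subseteq$'', let $A = \bigotimes_{j=1}^{r} A^{(j)} \in \A^\D$. The Delsarte condition $A^{(p)} = \Fq^{n_p} \neq 0$ forces $A \neq 0$, so the decomposition is unique with each $A^{(j)}$ nonzero. Set $I := \{j \in [r] : A^{(j)} \neq \Fq^{n_j}\}$. The crux of the proof is to show $|I| \leq 1$ using the hypothesis $A^\perp \in \A^\cl$. By Theorem~\ref{thm:dualAcl} we have
\begin{equation*}
A^\perp = \sum_{j \in I} \left(\bigotimes_{l<j} \Fq^{n_l}\right) \otimes \left(A^{(j)}\right)^\perp \otimes \left(\bigotimes_{l>j} \Fq^{n_l}\right),
\end{equation*}
while by hypothesis $A^\perp = \bigotimes_{l=1}^{r} B^{(l)}$ for some $B^{(l)} \leq \Fq^{n_l}$.

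Assume for contradiction that there exist distinct $j_1, j_2 \in I$ and fix a nonzero $w \in (A^{(j_2)})^\perp$. For every choice of nonzero $v^{(l)} \in \Fq^{n_l}$ with $l \neq j_2$, placing $w$ in position $j_2$ and the $v^{(l)}$'s in the remaining positions produces a rank-one tensor sitting inside the $j_2$-th summand, hence inside $\bigotimes_l B^{(l)}$. Because a nonzero simple tensor lying in a pure tensor space must have each of its factors in the corresponding $B^{(l)}$, letting the $v^{(l)}$'s range over a basis of $\Fq^{n_l}$ forces $B^{(l)} = \Fq^{n_l}$ for every $l \neq j_2$. The symmetric argument applied to the $j_1$-th summand gives $B^{(l)} = \Fq^{n_l}$ for every $l \neq j_1$. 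Combining the two statements yields $B^{(l)} = \Fq^{n_l}$ for all $l$, so $A^\perp = \F$ and $A = 0$, contradicting $A \neq 0$. Therefore $|I| \leq 1$.

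To conclude: if $I = \emptyset$ then $A = \F$, which lies in the right-hand side for any $p \in P$ by taking any $i \neq p$ and $A^{(i)} = \Fq^{n_i}$; if $I = \{i_0\}$, the equality $A^{(p)} = \Fq^{n_p}$ forces $p \in P \setminus \{i_0\}$, and with this choice of $p$ and $i = i_0$ the decomposition of $A$ exhibits it as a member of the set on the right-hand side. The main obstacle is the $|I| \leq 1$ step, and the decisive input is the elementary fact that a nonzero simple tensor contained in $\bigotimes_l B^{(l)}$ must have each factor inside the corresponding $B^{(l)}$; this is precisely what converts the sum-of-pure-tensors expression for $A^\perp$ into a rigid structural restriction on the closure components of $A$.
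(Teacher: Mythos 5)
Your proof is correct and rests on the same core observation as the paper's: by Theorem~\ref{thm:dualAcl}, a closure-type anticode whose dual is also closure-type can have at most one nontrivial tensor factor, and the Delsarte condition then forces that factor to sit at an index $i\neq p$ for some $p\in P$. The paper routes this through the identity $\A^\D=\{A\in\A^\cl\cap\overline{\A^\cl}:A^{(p)}=\Fq^{n_p}\text{ for some }p\in P\}$ and asserts the characterization of $\A^\cl\cap\overline{\A^\cl}$ is ``not hard to check,'' whereas you supply the explicit contradiction argument (that $|I|\geq 2$ would force $A^\perp=\F$), pivoting on the elementary fact that a nonzero simple tensor lying in $\bigotimes_l B^{(l)}$ must have each factor in the corresponding $B^{(l)}$ --- a welcome fleshing-out of the step the paper leaves to the reader.
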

    \begin{proof}
        It was shown in Corollary \ref{cor:barAcl} that $\overline{\A^\cl}$ is a collection of anticodes and a characterization of it was given. Therefore, an equivalent definition of $\A^\D$ is
        \begin{equation*}
            \A^\D=\{A:A \in \A^\cl\cap\overline{\A^\cl} \textup{ and } A^{(p)}=\Fq^{n_p} \textup{ for some } p\in P\}.
        \end{equation*}
        It is not hard to check that, by the first part of the proof of Theorem~\ref{thm:dualAcl}, $\A^\cl\cap\overline{\A^\cl}$ is the set of all the perfect spaces of the form
        \begin{equation*}
             \left(\bigotimes_{j=1}^{i-1}\Fq^{n_j}\right)\otimes A^{(i)}\otimes\left(\bigotimes_{j=i+1}^r\Fq^{n_j}\right)
        \end{equation*}
        for some $A^{(i)}\leq\Fq^{n_i}$ and $i\in\{1,\ldots,r\}$. The statement now follows.
    \end{proof}
    
    As immediate consequence of this theorem we have the following corollary. In particular, it shows that the set of Delsarte-type dual anticodes is defined.
    
    \begin{corollary}
    \label{cor:charDeldual}
        $\A^\D$ is closed under duality, that is $\overline{\A^\D}=\A^\D$.
    \end{corollary}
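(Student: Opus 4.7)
The plan is to leverage the explicit characterization of $\A^\D$ from Theorem~\ref{thm:charDelsarte} together with the formula for the dual of a closure-type anticode in Theorem~\ref{thm:dualAcl}. By Theorem~\ref{thm:charDelsarte}, every $A\in\A^\D$ has the concrete form
\begin{equation*}
    A=\left(\bigotimes_{j=1}^{i-1}\Fq^{n_j}\right)\otimes A^{(i)}\otimes\left(\bigotimes_{j=i+1}^r\Fq^{n_j}\right)
\end{equation*}
for some index $i\in[r]$ such that $P\setminus\{i\}\neq\emptyset$, where $P=\{j\in[r]:n_j=n_r\}$ and $A^{(i)}\leq\Fq^{n_i}$. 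So the goal reduces to showing that the dual of any such space is again of this same shape (with the same index $i$).

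Next, I would apply Theorem~\ref{thm:dualAcl} to this $A$, viewed as a closure-type anticode with $A^{(j)}=\Fq^{n_j}$ for all $j\neq i$. The theorem gives $A^\perp=\sum_{j=1}^r Y_j$, where $Y_j$ is the $j$-th summand in the formula. For every $j\neq i$ we have $\left(A^{(j)}\right)^\perp=\left(\Fq^{n_j}\right)^\perp=0$, hence $Y_j=0$. Only the $j=i$ term survives, giving
\begin{equation*}
    A^\perp=\left(\bigotimes_{j=1}^{i-1}\Fq^{n_j}\right)\otimes \left(A^{(i)}\right)^\perp\otimes\left(\bigotimes_{j=i+1}^r\Fq^{n_j}\right),
\end{equation*}
which is again of the shape described by Theorem~\ref{thm:charDelsarte}, with the subspace $A^{(i)}$ replaced by its orthogonal $\left(A^{(i)}\right)^\perp\leq\Fq^{n_i}$ and with the very same witness index $i$. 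Since the constraint $P\setminus\{i\}\neq\emptyset$ depends only on $i$ and not on $A^{(i)}$, we conclude $A^\perp\in\A^\D$. This proves $\overline{\A^\D}\subseteq\A^\D$.

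For the reverse inclusion I would simply invoke Lemma~\ref{lem:propdual}\eqref{item1:propdual}: given $A\in\A^\D$, the previous paragraph shows $A^\perp\in\A^\D$, and then $A=(A^\perp)^\perp\in\overline{\A^\D}$. Combining both inclusions yields $\overline{\A^\D}=\A^\D$. The only subtle point worth checking, and really the only place the argument could go wrong, is the collapse of the sum $\sum_j Y_j$ to a single term; but this is immediate from $\left(\Fq^{n_j}\right)^\perp=0$, so I do not anticipate any genuine obstacle beyond writing down the bookkeeping carefully.
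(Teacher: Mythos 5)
Your proof is correct and takes essentially the same approach the paper intends: the paper states the corollary as an immediate consequence of Theorem~\ref{thm:charDelsarte}, and you have simply spelled out the routine verification that dualizing a single-factor space of the form $\left(\bigotimes_{j<i}\Fq^{n_j}\right)\otimes A^{(i)}\otimes\left(\bigotimes_{j>i}\Fq^{n_j}\right)$ (via Theorem~\ref{thm:dualAcl}, where the sum collapses to the $i$-th term) yields a space of the same shape with the same witness index $i$.
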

    
    \begin{theorem}
    \label{thm:latticeDelsarte}
        Let $A,B\in\A^\D$. Define $A\vee_\D B$ to be $A+B$ and $A\wedge_\D B$ to be the Delsarte-type anticode of highest dimension contained in $A\wedge_\D B$. In particular, we have   
        \begin{equation*}
            A\wedge_\D B =\begin{cases}
                A\cap B & \textup{ if } A\cap B\in\A^\D,\\
                \{0\} & \textup{ otherwise}.
            \end{cases}
        \end{equation*}
        We have that $\A^\D$ with the operations of $\vee_\D$ and $\wedge_\D$ is a lattice.
    \end{theorem}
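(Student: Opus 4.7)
The plan is to use the characterization of $\A^\D$ from Theorem~\ref{thm:charDelsarte}: every non-zero Delsarte-type anticode has the form $A=\bigotimes_{j=1}^r A^{(j)}$ with at most one proper factor, at some ``pivot'' position $i_A$. Given $A,B\in\A^\D$, I would let $i_A,i_B$ denote their pivots and split the analysis according to whether these positions coincide, dispatching the trivial sub-cases ($A$ or $B$ equal to $\{0\}$ or $\F$) separately.

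For $\wedge_\D$, Lemma~\ref{lem:BleqA} gives $A\cap B=\bigotimes_j (A^{(j)}\cap B^{(j)})$. In the aligned case $i_A=i_B$, the intersection has at most one proper factor, so $A\cap B\in\A^\D$ and $A\wedge_\D B=A\cap B$. In the misaligned case, $A\cap B$ has proper factors at both $i_A$ and $i_B$ and fails to be Delsarte-type. Any Delsarte-type $C=\bigotimes_j C^{(j)}\leq A\cap B$ must satisfy $C^{(i_A)}\leq A^{(i_A)}$ and $C^{(i_B)}\leq B^{(i_B)}$; as neither bounding space is full, each of these two factors is either zero or proper. The at-most-one-proper-factor condition on $C$ then forces one of $C^{(i_A)}, C^{(i_B)}$ to vanish, whence $C=\{0\}$, establishing the stated closed form.

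For $\vee_\D$, the aligned case yields $A+B$ directly as the tensor product with $A^{(i)}+B^{(i)}$ at position $i=i_A=i_B$ and $\Fq^{n_j}$ elsewhere, which lies in $\A^\D$. The main obstacle is the misaligned case, where $A+B$ need not even be closure-type and $\vee_\D$ must be read as the smallest Delsarte-type anticode containing $A+B$ (in analogy with the convention used for $\vee_{\overline{\cl}}$). Here I would compute the mode-$j$ closure of $A+B$ via Lemma~\ref{thm:charclos}: for each $j$, at least one of $A,B$ has the full factor $\Fq^{n_j}$ at position $j$, so the mode-$j$ closure of $A+B$ equals $\Fq^{n_j}$; consequently $\cl(A+B)=\F$ by Corollary~\ref{thm:mincl}, and the unique smallest Delsarte-type anticode containing $A+B$ is $\F$ itself. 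Once closure of both operations inside $\A^\D$ is secured, commutativity is immediate from the symmetry of $+$ and $\cap$, the absorption laws follow from the containments $A\wedge_\D B\leq A\leq A\vee_\D B$, and associativity is verified by iterating the same case analysis on three anticodes.
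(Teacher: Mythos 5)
Your proof is correct, and it in fact repairs an error in the paper's argument for the join. The paper asserts that in the misaligned case ($i\neq j$, in its notation) the vector-space sum $A+B$ equals $\F$, so that $A\vee_\D B:=A+B$ always lies in $\A^\D$. This is false: with $r=2$ and $n_1=n_2=3$, take $A=\<e_1\>_{\Fq}\otimes\Fq^3$ and $B=\Fq^3\otimes\<e_1\>_{\Fq}$; then $\dimq{A+B}=3+3-1=5<9$, and since $\cl(A+B)=\F$ while $A+B$ is a proper subspace of $\F$, the sum is not closure-type and hence not Delsarte-type, so it cannot serve as the join in $(\A^\D,\leq)$. Your reinterpretation of $\vee_\D$ as the smallest Delsarte-type anticode containing $A$ and $B$, combined with the mode-wise closure computation showing $\cl(A+B)=\F$ in the misaligned case (so that $\F$ is the only closure-type, hence Delsarte-type, anticode above $A+B$), is the correct fix and preserves the theorem's conclusion that $\A^\D$ is a lattice. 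Your treatment of $\wedge_\D$ is essentially the paper's, but you add the step of arguing that every $C\in\A^\D$ with $C\leq A\cap B$ is forced to be $\{0\}$ when $i_A\neq i_B$; the paper asserts this without justification, so your version is the more complete one. One small slip: the analogy you invoke with $\vee_{\overline{\cl}}$ is not quite apt, since that join is literally $A+B$ and $\overline{\A^\cl}$ is closed under sums; the relevant precedent is the paper's own extremal definition of $\wedge_\D$ as the largest Delsarte-type anticode below $A\cap B$.
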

    \begin{proof}
        Let $A,B$ be elements of $\A^\D$. In particular, by Theorem \ref{thm:charDelsarte}, we have
        \begin{equation*}
            A=\left(\bigotimes_{s=1}^{i-1}\Fq^{n_s}\right)\otimes A^{(i)}\otimes\left(\bigotimes_{s=i+1}^r\Fq^{n_s}\right) \qquad\textup{ and }\qquad B=\left(\bigotimes_{s=1}^{j-1}\Fq^{n_s}\right)\otimes B^{(j)}\otimes\left(\bigotimes_{s=j+1}^r\Fq^{n_s}\right).
        \end{equation*}
        One can check that $A+B$ is a Delsarte-type anticode. Indeed, we have
        \begin{equation*}
            A+B=\begin{cases}
             \left(\bigotimes_{s=1}^{i-1}\Fq^{n_s}\right)\otimes \left(A^{(i)}+B^{(i)}\right)\otimes\left(\bigotimes_{s=i+1}^r\Fq^{n_s}\right) & \textup{ if } i=j,\\
                 \F & \textup{ otherwise.}
            \end{cases}
        \end{equation*}
        It remains to show that the operation of $\wedge_\D$ is well-defined. It is not hard to see that $A\cap B\in\A^\D$ if and only if $i=j$. On the other hand, if $i\neq j$, then we have
        
        \begin{equation}
        \label{eq:meetDelsarte}
            A\cap B=\left(\bigotimes_{s=1}^{i-1}\Fq^{n_s}\right)\otimes A^{(i)}\otimes\left(\bigotimes_{s=i+1}^{j-1}\Fq^{n_s}\right)\otimes B^{(j)}\otimes\left(\bigotimes_{s=j+1}^{r}\Fq^{n_s}\right)\notin\A^\D.
        \end{equation}
        Therefore, we have $A\wedge_{\D} B=0_{\A^\D}=\{0\}$ in $\A^\D$. The statement follows.
    \end{proof}
    
    \begin{remark}
        Observe that $\A^\D$ is not a sublattice of $\A^\cl$. In particular, we have the following. Let $A,B\in\A^\D$ be such that $A^{(i)}$ and $B^{(j)}$ are nontrivial spaces with $i\neq j$. Proposition \ref{thm:latticecl} and Equation \eqref{eq:meetDelsarte} imply $A\wedge_{\cl} B=A\cap B\neq 0_{\D}=A\wedge_{\D} B$.
    \end{remark}
    
    \begin{remark}
    \label{rem:ADr2}
    The theory of these anticodes reduces to the theory of the Delsarte optimal anticodes defined in \cite{ravagnani2016generalized} for $r=2$.
        Explicitly, observe that for $r=2$ we have
        \begin{equation*}
           \A^\D=
           \begin{cases}
               \left\{A^{(1)}\otimes\Fq^{n_2}:A^{(1)}\leq\Fq^{n_1}\right\} & \textup{ if } n_1 < n_2,\\[0.5em]
               \left\{A^{(1)}\otimes\Fq^{n_2}:A^{(1)}\leq\Fq^{n_1}\right\}\cup\left\{\Fq^{n_1}\otimes A^{(2)}:A^{(2)}\leq\Fq^{n_2}\right\} & \textup{ if } n_1=n_2.
           \end{cases}
       \end{equation*}
       This recovers the characterization of the Delsarte optimal   anticodes proved by Meshulam (see \cite[Theorem~3]{meshulam1985maximal}) for the case $n_1=n_2$ but from which the case $n_1<n_2$ easily follows.
    \end{remark}
    
    In the following result we give a characterization of the set $\A^\R$.
    
    \begin{theorem}
    \label{thm:charRav}
        Let $S:=\{i\in\{1,\ldots,r\}:n_i=n_1\}$. We have 
        \begin{equation*}
            \A^\R=\left\{\left(\bigotimes_{j=1}^{i-1}\Fq^{n_j}\right)\otimes A^{(i)}\otimes\left(\bigotimes_{j=i+1}^r\Fq^{n_j}\right):A^{(i)}\leq\Fq^{n_i}, i \in S \right\}.
        \end{equation*}
    \end{theorem}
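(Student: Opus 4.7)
The plan is to mirror the argument of Theorem~\ref{thm:charDelsarte}. By Corollary~\ref{cor:barAcl}, the hypothesis $A\in\A^\cl$ with $A^\perp\in\A^\cl$ is equivalent to $A\in\A^\cl\cap\overline{\A^\cl}$, and the proof of Theorem~\ref{thm:charDelsarte} already identifies this intersection with the collection of slab anticodes
\begin{equation*}
\left(\bigotimes_{j=1}^{i-1}\Fq^{n_j}\right)\otimes A^{(i)}\otimes\left(\bigotimes_{j=i+1}^r\Fq^{n_j}\right),\quad i\in[r],\ A^{(i)}\leq\Fq^{n_i}.
\end{equation*}
From the outset I may therefore assume every candidate $A$ in $\A^\R$ has this form, and focus on determining which indices $i$ are admissible.

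Next I would treat the two branches of Definition~\ref{anticodedef}~(4) separately. Suppose first that $n_1<n_r$. In the slab above, $A^{(p)}=\Fq^{n_p}$ holds for every $p\neq i$, so the condition ``$A^{(p)}=\Fq^{n_p}$ for every $p$ with $n_p\neq n_1$'' is only nontrivial at $p=i$. If $i\in S$ this is vacuous and $A$ already has the stated form. If $i\notin S$ it forces $A^{(i)}=\Fq^{n_i}$ and hence $A=\F$; but $\F$ is equally the slab of index $1\in S$ with $A^{(1)}=\Fq^{n_1}$, so $A$ again lies in the claimed right-hand side. In the branch $n_1=\cdots=n_r$, $S$ equals $[r]$, and any slab with $r\geq 2$ trivially has $A^{(p)}=\Fq^{n_p}$ for some $p\in S$, namely any $p\neq i$; this settles the characterization in that case.

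For the reverse inclusion I would invoke Theorem~\ref{thm:dualAcl}: applied to the slab $A$, every summand in the formula for $A^\perp$ involves a factor $(\Fq^{n_j})^\perp=\{0\}$ except the one with summation index $i$, so
\begin{equation*}
A^\perp=\left(\bigotimes_{j=1}^{i-1}\Fq^{n_j}\right)\otimes (A^{(i)})^\perp\otimes\left(\bigotimes_{j=i+1}^r\Fq^{n_j}\right)\in\A^\cl.
\end{equation*}
Combined with the case analysis above (which becomes immediate once $i\in S$ is imposed), this shows every slab indexed by $i\in S$ lies in $\A^\R$.

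The only delicate point is therefore not in this theorem itself but in the preliminary identification of $\A^\cl\cap\overline{\A^\cl}$ with the slab anticodes, which is invoked but only sketched in Theorem~\ref{thm:charDelsarte}. I expect this to be the real work: it reduces to showing that a tensor product $\bigotimes_{j=1}^r A^{(j)}$ can be written as a sum of the one-factor forms appearing in Corollary~\ref{cor:barAcl} only when at most one $A^{(j)}$ is a proper subspace of $\Fq^{n_j}$, which can be verified by comparing the codimension of $\bigotimes_{j=1}^r A^{(j)}$ with that of any such single-index sum.
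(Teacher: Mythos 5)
Your proof is correct and follows the same strategy as the paper's one-line argument: reduce via Corollary~\ref{cor:barAcl} and the identification $\A^\cl\cap\overline{\A^\cl}=\{\text{slab anticodes}\}$ (established in the proof of Theorem~\ref{thm:charDelsarte}), then observe that the additional condition in Definition~\ref{anticodedef}(4) pins the slab index to $S$. Your version is in fact more careful than the paper's, explicitly resolving the degenerate case $i\notin S$ (where the slab collapses to $\F$, which is also a slab at some index in $S$) and spelling out why the two branches of the definition yield the single unified description.
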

    \begin{proof}
        By Corollary \ref{cor:barAcl}, an equivalent definition of $\A^\R$ is given by
        \begin{equation*}
            \A^\R=\{A:A\in\A^\cl\cap\overline{\A^\cl} \textup{ and } A^{(s)}=\Fq^{n_s} \textup{ for all } s\notin S\}.
        \end{equation*}
        The statement follows.
    \end{proof}
    
    The following corollary, which is a consequence of the theorem above, shows that the set of Ravagnani-type dual anticodes is defined.
    
    \begin{corollary}
        The set $\A^\R$ is closed under duality, that is $\overline{\A^\R}=\A^\R$.
    \end{corollary}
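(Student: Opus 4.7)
The plan is to reduce the statement to a direct computation using the two characterization theorems already established. By definition $\overline{\A^\R} = \{A^\perp : A \in \A^\R\}$, and since duality is an involution by Lemma~\ref{lem:propdual}\eqref{item1:propdual}, it suffices to prove one inclusion, namely that $A \in \A^\R$ implies $A^\perp \in \A^\R$. Combined with $(A^\perp)^\perp = A$, this yields equality of the two sets.

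First I would apply Theorem~\ref{thm:charRav} to write an arbitrary $A \in \A^\R$ in the explicit form
\begin{equation*}
A = \left(\bigotimes_{j=1}^{i-1}\Fq^{n_j}\right)\otimes A^{(i)}\otimes\left(\bigotimes_{j=i+1}^r\Fq^{n_j}\right)
\end{equation*}
for some $i \in S$ and some $A^{(i)} \leq \Fq^{n_i}$. This exhibits $A$ as a closure-type anticode whose tensor factors are all of full dimension except possibly the $i$-th one.

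Next I would invoke Theorem~\ref{thm:dualAcl} to expand $A^\perp$ as a sum of $r$ terms, one for each coordinate. For every index $j \neq i$ the corresponding summand contains the factor $(\Fq^{n_j})^\perp = \{0\}$ and therefore vanishes; only the $i$-th summand survives, giving
\begin{equation*}
A^\perp = \left(\bigotimes_{j=1}^{i-1}\Fq^{n_j}\right)\otimes \bigl(A^{(i)}\bigr)^\perp \otimes \left(\bigotimes_{j=i+1}^r\Fq^{n_j}\right).
\end{equation*}
Since $(A^{(i)})^\perp \leq \Fq^{n_i}$ and the same index $i \in S$ is used, this expression matches the form in Theorem~\ref{thm:charRav}, so $A^\perp \in \A^\R$.

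I do not anticipate any genuine obstacle: the main point is simply to notice that the characterization of $\A^\R$ forces all but one tensor factor to be the full ambient space, which in turn collapses the sum in Theorem~\ref{thm:dualAcl} to a single term. The only care needed is recording that $S$ is defined intrinsically in terms of the dimensions $n_1, \ldots, n_r$, so the dualization does not change which factor is allowed to be nontrivial.
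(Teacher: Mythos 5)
Your proposal is correct and takes essentially the same route the paper intends: the corollary is stated in the paper as an immediate consequence of Theorem~\ref{thm:charRav}, and your argument simply spells out why, using Theorem~\ref{thm:dualAcl} to observe that dualizing collapses the sum to the single surviving $i$-th term and noting that membership of $i$ in $S$ is unaffected by dualization. No gaps.
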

    
    \begin{corollary}
        $\A^\R$ is a sublattice of $\A^\D$.
    \end{corollary}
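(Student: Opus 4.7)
The plan is to combine the explicit characterizations of $\A^\D$ and $\A^\R$ from Theorems \ref{thm:charDelsarte} and \ref{thm:charRav} with the description of the lattice operations on $\A^\D$ given in Theorem \ref{thm:latticeDelsarte}. Since $\A^\D$ has already been established as a lattice under $\vee_\D$ and $\wedge_\D$, it suffices to verify the containment $\A^\R \subseteq \A^\D$ together with closure of $\A^\R$ under both operations.

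First I would verify the containment. If $n_1 = n_r$ (all $n_i$ equal), then $S = P = [r]$ and the defining conditions of $\A^\R$ and $\A^\D$ coincide, so $\A^\R = \A^\D$ and the claim is immediate. If $n_1 < n_r$, any $A \in \A^\R$ has, by Theorem \ref{thm:charRav}, its (at most one) nontrivial factor at some index $i \in S = \{j : n_j = n_1\}$; choosing $p = r \in P$, which necessarily satisfies $p \neq i$ since $n_r > n_1 = n_i$, witnesses $A^{(p)} = \Fq^{n_p}$, whence $A \in \A^\D$ by Theorem \ref{thm:charDelsarte}.

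Next I would check closure under the operations. Let $A, B \in \A^\R$ have nontrivial factors at indices $i, \ell \in S$ respectively. For the join, Theorem \ref{thm:latticeDelsarte} gives: if $i = \ell$ then $A \vee_\D B = A + B$ has its single nontrivial factor $A^{(i)} + B^{(i)}$ at index $i \in S$, hence lies in $\A^\R$; if $i \neq \ell$ then $A \vee_\D B = \F$, which lies in $\A^\R$ as the top element (take $A^{(j)} = \Fq^{n_j}$ for any $j \in S$). For the meet: if $i = \ell$ then $A \wedge_\D B = A \cap B$ has nontrivial factor $A^{(i)} \cap B^{(i)}$ still at $i \in S$, so lies in $\A^\R$; if $i \neq \ell$ then $A \cap B \notin \A^\D$ by \eqref{eq:meetDelsarte}, so $A \wedge_\D B = \{0\}$, which again lies in $\A^\R$ by taking $A^{(j)} = \{0\}$ for any $j \in S$.

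I do not expect a genuine obstacle: the proof is pure bookkeeping driven by the case split on whether the distinguished indices of $A$ and $B$ coincide, directly mirroring the reasoning in the proof of Theorem \ref{thm:latticeDelsarte}. The only point worth flagging is that both $\{0\}$ and $\F$ must be verified to lie in $\A^\R$ to serve as meet/join of pairs with incomparable distinguished indices, and both follow trivially from Theorem \ref{thm:charRav} because $S$ is nonempty (containing $1$).
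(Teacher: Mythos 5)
Your proof is correct and follows essentially the same route as the paper: establish the containment $\A^\R \subseteq \A^\D$, then check closure under $\vee_\D$ and $\wedge_\D$ (the latter reducing, in the off-diagonal case, to the observation that $0_\D = \{0\} \in \A^\R$). The paper simply compresses this bookkeeping into a one-line appeal to Theorem \ref{thm:latticeDelsarte}, $\A^\R \subseteq \A^\D$, and $0_\R = 0_\D$.
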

    \begin{proof}
        It follows immediately from Theorem \ref{thm:latticeDelsarte} and the facts that $A^\R\subseteq\A^\D$ and $0_\R=0_\D$, where $0_\R$ denotes the zero space in $\A^\R$.
    \end{proof}
    
    \begin{remark}
        It is not difficult to check that $\A^\R\subseteq\A^\D$. Moreover, observe that $\A^\R=\A^\D$ if $n_1=\cdots=n_r$ or $r=2$. The latter follows from the argument in Remark \ref{rem:ADr2}.
    \end{remark}
    
    Figures \ref{fig:r=2} (for $r=2$ or $n_1=\cdots=n_{r-1}$) and \ref{fig:r>2} (for $r>2$ and $\{n_1,\ldots,n_{r-1}\}\neq\{n_1\}$) give a set-theoretical representation of the families of anticodes defined above.

    \begin{minipage}{0.49\textwidth}
    \begin{figure}[H]
    \centering
    \resizebox{\textwidth}{!}{
    \begin{tikzpicture}[fill=gray]
    \draw (-0.8,-1) circle (3) (-1.5,2)  node [text=black,above] {\Large$\A^\cl$}
        (0.8,-1) circle (3) (1.5,2)  node [text=black,above] {\Large$\overline{\A^\cl}$};
    \draw[fill=mygreen!20](0,-1) circle (1.5) (0,-1.2)  node [text=black,above] {\Large$\A^\D=\A^\R$};
    \draw(-5.8,-4.5) rectangle (5.8,3) node[text=black,above] at (5.25,2.2) {\Large$\A^\ps$};
    \end{tikzpicture}
    }
    \vspace*{-0.5em}
    \caption{\label{fig:r=2}}
    \end{figure}
    \end{minipage}
    \begin{minipage}{0.49\textwidth}
        \begin{figure}[H]
    \centering
    \resizebox{\textwidth}{!}{
    \begin{tikzpicture}[fill=gray]
    \draw (-0.8,-1) circle (3) (-1.5,2)  node [text=black,above] {\Large$\A^\cl$}
        (0.8,-1) circle (3) (1.5,2)  node [text=black,above] {\Large$\overline{\A^\cl}$};
    \draw[fill=myyellow!20](0,-1.3) circle (1.8) (0,0.6)  node [text=black,above] {\Large$\A^\D$};
    \draw[fill=myblue!20](0,-1.3) circle (1) (0,-1.5)  node [text=black,above] {\Large$\A^\R$};
    \draw      (-5.8,-4.5) rectangle (5.8,3) node[text=black,above] at (5.25,2.2) {\Large$\A^\ps$};
    \end{tikzpicture}
    }
    \vspace*{-0.5em}
    \caption{\label{fig:r>2}}
    \end{figure}
    \end{minipage}

\section{Invariants for Tensor Anticodes}
\label{sec:invariants}
    
    In this section we define and study invariants for tensor anticodes and we compare them with the invariants for codes in the rank metric (see for example \cite[Section~6]{byrne2019tensor} and \cite[Sections~4,5]{ravagnani2016generalized}). If not explicitly stated,  definitions and results in the remainder hold for any collection of anticodes, possibly different from the ones identified in Section \ref{sec:anticodes}.
	
	\begin{definition}
	\label{def:tjsj}
		Let $\A$ be a collection of anticodes. For each $j\in\{1,\ldots,k\}$, the $j$\textbf{-th generalized tensor weight} with respect to $\A$ is defined to be:
		\begin{equation*}
			t_j(\C):=\min\left\{\dimq{A}:A \in\A\mid \dimq{\C \cap A}\geq j\right\}.
		\end{equation*}
		Furthermore, if $\omA \subseteq \A^\ps$ then we define the $j$\textbf{-th generalized dual tensor weight} to be
		\begin{equation*}
			s_j(\C):=\min\left\{\dimq{A}:A \in\omA\mid \dimq{\C \cap A}\geq j\right\}.
		\end{equation*}
	\end{definition}
	
	In the following, we write $t_j$, $s_j$, $t_j^\perp$ and $s_j^\perp$ instead of $t_j(\C)$, $s_j(\C)$, $t_j\left(\C^\perp\right)$ and $s_j\left(\C^\perp\right)$ respectively, for any $j\in\{1,\ldots,k\}$. As we observed in the previous section, in some cases $\omA$ is a collection of anticodes, while in others it is not. Particular formulations of the following result for $r=2$ can be read in \cite[Proposition~6.3]{byrne2019tensor} and \cite[Theorem~30]{ravagnani2016generalized}. These statements follow directly from Definitions \ref{anticodedef} and \ref{def:tjsj}.
	
	\begin{proposition}
	\label{thm:propt}
		The following hold.
		\begin{enumerate}
		\setlength\itemsep{0.5em}
			\item $d\leq t_1$ and if $\omA\subseteq\A^\ps$  then $d\leq s_1$.
			\item $ \trk(\C)\leq t_k$ and if $\omA\subseteq\A^\ps$ then $\trk(\C)\leq s_k$.
			\item \label{item3:propt} $t_j\leq t_{j+1}$ for any $j\in\{1,\ldots, k\}$.
			\item $s_j\leq s_{j+1}$ for any $j\in\{1,\ldots, k\}$ if $\omA\subseteq\A^\ps$.
			\item $t_j^\ps\leq t_j^\cl\leq t_j^\D\leq t_j^\R$ for any $j\in\{1,\ldots, k\}$ .
			\item $t_j^\D=s_j^\D$ and $t_j^\R=s_j^\R$ for any $j\in\{1,\ldots, k\}$.
			\item $s_j^\cl\leq t_j^\D\leq t_j^\R$ for any $j\in\{1,\ldots, k\}$.
		\end{enumerate}
	\end{proposition}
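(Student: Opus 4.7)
The plan is to treat each of the seven items in turn; all of them flow essentially from Definitions \ref{anticodedef} and \ref{def:tjsj} together with the characterization of anticodes, once one observes the key fact that any perfect space $A$ of dimension $a$ admits a basis of rank-$1$ tensors, so that every element of $A$ has rank at most $a$.

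For (1), I would fix any $A\in\A$ with $\dimq{\C\cap A}\geq 1$, select a nonzero codeword $C\in\C\cap A$, and expand $C$ along a rank-$1$ basis of $A$ to conclude $d\leq\rk(C)\leq\dimq{A}$. Taking the minimum over such $A$ yields $d\leq t_1$. When $\omA\subseteq\A^\ps$ the same argument applied to $A^\perp$ gives $d\leq s_1$. For (2), the condition $\dimq{\C\cap A}\geq k$ forces $\C\subseteq A$ since $\dimq{\C}=k$; because $A$ is perfect this exhibits $\C$ inside a perfect space of dimension $\dimq{A}$, so $\trk(\C)\leq\dimq{A}$, and again minimizing yields the bound, with the dual version being identical.

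Items (3) and (4) are pure monotonicity: any $A$ witnessing $\dimq{\C\cap A}\geq j+1$ also witnesses $\dimq{\C\cap A}\geq j$, so the feasible set in the definition of $t_{j+1}$ (respectively $s_{j+1}$) is contained in that of $t_j$ (respectively $s_j$), and the minimum can only grow. For (5), I would invoke the chain of inclusions $\A^\R\subseteq\A^\D\subseteq\A^\cl\subseteq\A^\ps$ established in Section \ref{sec:anticodes}: a larger collection of anticodes gives a larger feasible set in Definition \ref{def:tjsj}, hence a smaller minimum, so $t_j^\ps\leq t_j^\cl\leq t_j^\D\leq t_j^\R$. Item (6) is where I would invoke the self-duality $\overline{\A^\D}=\A^\D$ from Corollary \ref{cor:charDeldual} and $\overline{\A^\R}=\A^\R$ from the corollary following Theorem \ref{thm:charRav}; since the map $A\mapsto A^\perp$ is a dimension-preserving bijection (by Lemma \ref{lem:propdual}\eqref{item2:propdual}) on $\A^\D$ and on $\A^\R$, the minima defining $t_j$ and $s_j$ are taken over the same multiset of dimensions, forcing equality. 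Finally for (7), I would use $\A^\D\subseteq\overline{\A^\cl}$ (which is immediate from the definition $\A^\D\subseteq\A^\cl\cap\overline{\A^\cl}$) to conclude that the feasible set in the definition of $s_j^\cl$ contains the feasible set in the definition of $t_j^\D$, hence $s_j^\cl\leq t_j^\D$; the second inequality $t_j^\D\leq t_j^\R$ is already contained in (5).

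The only potential obstacle is subtle: one must be sure that $s_j^\cl$ is actually \emph{defined}, i.e.\ that $\overline{\A^\cl}\subseteq\A^\ps$, which is where Theorem \ref{thm:dualAcl} enters (since it shows every dual of a closure-type anticode is perfect). Once this is in hand, all seven items are one-line consequences of the inclusions among the four families and the elementary rank bound $\rk(C)\leq\dimq{A}$ for $C\in A$ perfect.
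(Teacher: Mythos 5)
Your proof sketch matches the approach the paper implies (the paper simply asserts these facts ``follow directly from Definitions \ref{anticodedef} and \ref{def:tjsj}'' without writing out an argument), and all seven items are argued soundly, including the important observation that Theorem \ref{thm:dualAcl} is what guarantees $s_j^\cl$ is even defined. One small slip in item (6): the map $A\mapsto A^\perp$ is not dimension-preserving --- by Lemma \ref{lem:propdual}\eqref{item2:propdual} it sends a space of dimension $a$ to one of dimension $n-a$. The correct and simpler reason is that $\overline{\A^\D}=\A^\D$ and $\overline{\A^\R}=\A^\R$ as sets, so the feasible sets in the definitions of $t_j$ and $s_j$ are literally identical, whence $t_j^\D=s_j^\D$ and $t_j^\R=s_j^\R$; no statement about the dimension behaviour of duality is needed (or true).
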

	
	The following result summarizes the properties of the generalized tensor weights for perfect spaces. We omit the proof of this result, which is similar to the proof of \cite[Theorem~6.3]{byrne2019tensor}.
		
	\begin{proposition}
	\label{prop:proptps}
		The following hold.
		\begin{enumerate}
		\setlength\itemsep{0.5em}
            \item\label{item1:proptps} $t_1^\ps=d$.
            \item $t_k^\ps=\trk(\C)$.
            \item $t_j^\ps<t_{j+1}^\ps$.
            \item\label{item4:proptrk} $t_j^\ps\leq \trk(\C)+k-j$.
            \item\label{item5:proptrk} $t_j^\ps\geq d+j-1$.
        \end{enumerate}
	\end{proposition}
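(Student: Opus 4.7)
The plan is to treat the five items in a natural order, relying at the core on a single \textbf{remove-one-simple-tensor} construction. I would begin with the boundary cases~(1) and~(2), which are essentially definitional once one observes that for any nonzero $X \in \F$ the rank $\rk(X)$ coincides with the minimum dimension of a perfect space containing $X$: a rank decomposition $X = \sum_{s=1}^{\rk(X)} U_s$ exhibits a perfect space of dimension at most $\rk(X)$ containing $X$, and conversely a basis of simple tensors of any perfect space $A \ni X$ expresses $X$ as a sum of at most $\dim A$ simple tensors. Taking the minimum over nonzero codewords of $\C$ gives~(1), while for~(2) the trivial bound $\dim(\C \cap A) \leq k$ forces $\dim(\C \cap A) \geq k$ to become $\C \leq A$, and the minimum dimension of such a perfect space is $\trk(\C)$ by definition.

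For~(3), I would take a perfect space $A$ realising the minimum $t_{j+1}^\ps = m$ with $\dim(\C \cap A) \geq j+1$, pick a basis $U_1, \ldots, U_m$ of simple tensors for $A$ (which exists because $A$ is perfect), and set $A' := \<U_2, \ldots, U_m\>_{\Fq}$. Then $A'$ is perfect of dimension $m-1$, and since the codimension of $\C \cap A'$ inside $\C \cap A$ is at most that of $A'$ inside $A$ (namely~$1$), a dimension count yields $\dim(\C \cap A') \geq j$, whence $t_j^\ps \leq m - 1 < t_{j+1}^\ps$.

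Items~(4) and~(5) then follow essentially for free from the already-proved parts. For~(4), I would iterate the construction in~(3) starting from a perfect space $A_0$ of dimension $\trk(\C)$ containing $\C$ (so $\dim(\C \cap A_0) = k$): removing $k-j$ simple basis tensors one at a time produces a perfect space of dimension $\trk(\C) - (k-j)$ whose intersection with $\C$ has dimension at least $j$, which gives $t_j^\ps \leq \trk(\C) - (k-j)$ and hence the stated bound. For~(5), a direct induction on $j$ with~(1) as base case and~(3) as inductive step delivers $t_j^\ps \geq t_1^\ps + (j-1) = d + j - 1$.

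The only potentially delicate point is the construction underlying~(3): one must verify both that $A'$ remains perfect (immediate, since it is spanned by simple tensors) and that the intersection dimension drops by at most one when passing from $A$ to $A'$, i.e.\ that there is a natural embedding $(\C \cap A)/(\C \cap A') \hookrightarrow A/A'$. This is a standard codimension argument, so I anticipate no serious obstacle.
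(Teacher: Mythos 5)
Your proof is correct, and the core device (extracting a basis of simple tensors from a perfect space and deleting one to drop the dimension by exactly one while the intersection with $\C$ drops by at most one) is exactly the standard argument one expects here; the paper omits its own proof, citing similarity to \cite[Theorem~6.3]{byrne2019tensor}, so there is nothing to compare route-by-route. One observation worth making explicit: your argument for item (4), whether run via the iterated deletion from a minimal perfect space containing $\C$ or simply by chaining (2) and (3), actually yields the sharper inequality $t_j^\ps \leq \trk(\C) - (k-j)$. This is almost certainly what was intended, since the paper's remark asserts that (4) with $j=1$ recovers Kruskal's bound $\trk(\C) \geq d+k-1$; that only works with $t_j^\ps \leq \trk(\C)-k+j$, whereas the printed $t_j^\ps\leq \trk(\C)+k-j$ gives the vacuous $\trk(\C)\geq d-k+1$. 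So the sign in the stated item (4) appears to be a typo, and your proof establishes the corrected (stronger) version, which trivially implies the printed one. Items (1), (2), (3), (5) are argued correctly; in particular your justification that a perfect space admits a basis of simple tensors (extract a basis from a simple-tensor spanning set) and the second-isomorphism-theorem codimension bound are both sound.
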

	
	One can consider \eqref{item4:proptrk} and \eqref{item5:proptrk} to be  generalizations of Kruskal's tensor rank bound \cite[Corollary~1]{kruskal1977three}. In particular, for $r=2$, it is possible to recover the latter from \eqref{item4:proptrk} with $j=1$ and from \eqref{item5:proptrk} with $j=k$. In the following, we denote by $d_j(\C)$ the $j$-th Delsarte generalized rank weight of $\C$, for any $j\in\{1,\ldots,k\}$ and $r=2$, as defined in \cite{ravagnani2016generalized}. In the following, we write $d_j$ instead of $d_j(\C)$, for ease of notation. The next result shows that the closure-type anticodes allow us to define new invariants for linear spaces of 2-tensors (i.e. rank-metric codes) that are refinements of the Delsarte generalized rank weights.
    
    \begin{proposition}
    \label{prop:propcl2}
       The following hold for $r=2$.
       \begin{enumerate}
       \setlength\itemsep{0.5em}
           \item\label{item1:propcl2} $t_1^\cl=d^2$.
           \item\label{item2:propcl2} $t_k^\cl=\dimq{\cl(\C)}$.
           \item\label{item3:propcl2} $t_j^\cl,s_j^\cl\leq d_j\cdot n_2$ for all $j\in\{1,\ldots,k\}$.
       \end{enumerate}
    \end{proposition}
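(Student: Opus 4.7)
The plan is to handle the three items in turn, each reducing to basic properties of $\cl$ and to the chain of comparisons between the invariants in Proposition \ref{thm:propt}.

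For \eqref{item1:propcl2}, I would first establish $t_1^\cl\leq d^2$ by exhibiting a suitable anticode. Pick any $C\in\C$ with $\rk(C)=d$. Since $r=2$, one has $\cl(C)=\colsp(C)\otimes\rowsp(C)\in\A^\cl$ with both factors of dimension $d$, so $\dimq{\cl(C)}=d^2$, and $C\in\C\cap\cl(C)$ gives the required intersection. For the reverse inequality, let $A=A^{(1)}\otimes A^{(2)}\in\A^\cl$ contain a nonzero codeword $C$. By Corollary \ref{thm:mincl} applied to $\<C\>_{\Fq}$, one has $\cl(C)\leq A$, and Lemma \ref{lem:BleqA} then forces $\colsp(C)\leq A^{(1)}$ and $\rowsp(C)\leq A^{(2)}$. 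Hence $\dimq{A^{(i)}}\geq\rk(C)\geq d$ for $i=1,2$, so $\dimq{A}\geq d^2$.

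For \eqref{item2:propcl2}, the closure $\cl(\C)$ itself lies in $\A^\cl$ and contains $\C$, which yields $t_k^\cl\leq\dimq{\cl(\C)}$ directly from Definition \ref{def:tjsj}. Conversely, any $A\in\A^\cl$ with $\dimq{\C\cap A}\geq k$ must satisfy $\C\leq A$ (since $\dimq{\C}=k$), and Corollary \ref{thm:mincl} then gives $\cl(\C)\leq A$, so $\dimq{A}\geq\dimq{\cl(\C)}$.

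For \eqref{item3:propcl2}, I would invoke Proposition \ref{thm:propt}, which in particular gives $t_j^\cl\leq t_j^\D$ and $s_j^\cl\leq t_j^\D$. By Remark \ref{rem:ADr2}, for $r=2$ every $A\in\A^\D$ has the form $A^{(1)}\otimes\Fq^{n_2}$, or additionally $\Fq^{n_1}\otimes A^{(2)}$ when $n_1=n_2$; in either case $\dimq{A}=\dimq{A^{(i)}}\cdot n_2$. Minimizing $\dimq{A}$ over $A\in\A^\D$ with $\dimq{\C\cap A}\geq j$ therefore yields $t_j^\D=d_j\cdot n_2$, where $d_j$ is the Delsarte generalized rank weight of \cite{ravagnani2016generalized}, and the bound follows. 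The only slightly delicate point is aligning the Ravagnani normalization of $d_j$ with the factor of $n_2$ coming from the tensor dimension of Delsarte-type anticodes; once this is verified the rest is formal, and no new structural input is required.
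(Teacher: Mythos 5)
Your proposal is correct and follows essentially the same route as the paper: for item (1) both arguments minimize $\dimq{\cl(C)}=\rk(C)^2$ over nonzero codewords (the paper states the identity $t_1^\cl=\min\{\dimq{\cl(C)}:C\in\C\setminus\{0\}\}$ compactly, while you unpack the lower bound via Corollary~\ref{thm:mincl} and Lemma~\ref{lem:BleqA}); for item (2) both rely on $\cl(\C)$ being the smallest closure-type anticode containing $\C$; and for item (3) both reduce to the inclusion of the Delsarte optimal anticodes of \cite{ravagnani2016generalized} in $\A^\cl\cap\overline{\A^\cl}$ and the dimension-to-rank conversion $\dimq{A^{(1)}\otimes\Fq^{n_2}}=\dimq{A^{(1)}}\cdot n_2$.
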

    \begin{proof}
     For any $C\in \C$ we have $\cl(C)=\colsp(C)\otimes\rowsp(C)$ and hence $\dimq{\cl(C)}=\rk(C)^2$. It follows that
            \begin{equation*}
                t_1^\cl=\min\left\{\dimq{\cl(C)}:C \in \C\setminus\{0\}\right\}=\left(\min\{\rk(C):C\in\C\setminus\{0\}\}\right)^2=d^2.
            \end{equation*}
            This proves \eqref{item1:propcl2}. Clearly, (\ref{item2:propcl2}) holds since $\cl(\C)$ is the smallest space in $\A^\cl$ that contains $\C$.
            
            Finally, \eqref{item3:propcl2} follows from the fact that, for $r=2$, the set of Delsarte anticodes defined in \cite{ravagnani2016generalized} is a subset of $\A^\cl\cap\overline{\A^\cl}$ by Remark \ref{rem:ADr2}.
        This concludes the proof.
    \end{proof}

    \begin{remark}
        Observe that $s_1^\cl$ is not a multiple in $d$ in general. For example, let $\C\leq\F_3^3\otimes\F_3^3$ be the code generated by $X:=(1,0,0)\otimes(0,1,0)+(0,1,0)\otimes(1,0,0)$. Clearly the minimum distance of $\C$ is $2$. One can check that $s_1(\C)=5$ and a closure-type dual anticode of dimension $5$ containing $X$ is $A:=\<(1,0,0)\>_{\F_3}\otimes\F_3^3+\F_3^3\otimes\<(1,0,0)\>_{\F_3}$.
    \end{remark}
    
    \begin{remark}
        The value of $t_j^\cl$ and $s_j^\cl$, $j\in\{1,\ldots,k\}$ are, in general, not comparable. Consider, for example, the matrices 
        \begin{equation*}
            M:=\begin{pmatrix}
            1&0&0\\0&0&0\\0&0&0
            \end{pmatrix} \qquad \textup{ and }\qquad N:=\begin{pmatrix}
            0&0&1\\0&1&0\\1&0&0
            \end{pmatrix}
        \end{equation*}
        over $\F_2$. One can observe the following.
        \begin{itemize}
            \item The smallest closure-type anticode and dual closure-type anticode that contain $M$ are respectively $\<(1,0,0)\>_{\F_2}\otimes\<(1,0,0)\>_{\F_2}$ and $\<(1,0,0)\>_{\F_2}\otimes\F_2^3$. This implies $$t_j^\cl\left(\<M\>_{\F_2}\right)=1<3=s_j\left(\<M\>_{\F_2}\right).$$
            \item The smallest closure-type anticode and dual anticode that contain $N$ are respectively $\F_2^3\otimes\F_2^3$, i.e. the full-space, and $\<(1,0,0),(0,1,0)\>_{\F_2}\otimes\F_2^3+\F_2^3\otimes\<(1,0,0)\>_{\F_2}$. This implies $$t_j^\cl\left(\<N\>_{\F_2}\right)=9>7=s_j\left(\<N\>_{\F_2}\right).$$
        \end{itemize}
    \end{remark}
    
    In the following example, we compute the generalized tensor weights associated with $\A^\cl$ of the code in Example \ref{ex:closure}.
    
    \begin{example}
        Let $\C\leq \F_3^2\otimes \F_3^3\otimes \F_3^4$ be the code as in Example \ref{ex:closure}. One can check that
        \begin{itemize}
            \item $t_1^\cl(\C)=4$, which is obtained for example for the closure of $X$;
            \item $t_2^\cl(\C)=18$, which is obtained for the closure of $\<X,Y\>_{\F_3}$, that is $$\F_3^2\otimes\F_3^3\otimes\<(1,0,0,1),(0,1,0,0),(0,0,1,2)\>_{\F_3};$$
            \item $t_3^\cl(\C)=24$ which implies that $\F_3^2\otimes\F_3^3\otimes\F_3^4$ is the smallest closure-type anticode containing $\C$.
        \end{itemize}
    \end{example}

    In the following example, we compare the generalized tensor weights corresponding to the various closure-type anticodes we have introduced. In particular, we demonstrate that in some cases, the closure-type anticodes and their duals provide invariants that distinguish inequivalent codes, while the generalized rank weights associated to the Delsarte anticodes of \cite{ravagnani2016generalized} do not. We briefly recall the definitions of \textit{MRD} and \textit{dually QMRD} matrix codes. These are codes of maximum cardinality.
    
    \begin{remark}
    \label{rem:MRD}
       A code  $\C\leq\Fq^{n_1}\otimes\Fq^{n_2}$ that satisfies
        \begin{equation*}
            d=n_1-\left\lfloor\frac{k-1}{n_2}\right\rfloor \qquad\textup{and}\qquad d^\perp=\left\lceil\frac{k+1}{n_2}\right\rceil
        \end{equation*}
        is said to be \textbf{MRD} (\textbf{Maximum Rank Distance}) if $n_2\mid k$ and \textbf{dually QMRD} (\textbf{Quasi Maximum Rank Distance}) if $n_2\nmid k$. These codes exist for any value of $n_1,n_2$ and $k$ (see \cite{de2018weight,delsarte1978bilinear} for further details).
    \end{remark}
    
    \begin{example}
    \label{ex:Gabidulin}
        Consider the following inequivalent $4$-dimensional codes over $\F_3$:
    \begin{align*}
        \C:=&\<
        \begin{pmatrix}
            1&0&0&0\\0&1&0&0
        \end{pmatrix},
        \begin{pmatrix}
            0&1&0&0\\0&0&1&0
        \end{pmatrix},
        \begin{pmatrix}
            0&0&1&0\\0&0&0&1
        \end{pmatrix},
        \begin{pmatrix}
            0&0&0&1\\1&0&0&1
        \end{pmatrix}
        \>_{\F_3},\\
        \mD:=&\<
        \begin{pmatrix}
            1&0&0&0\\2&0&1&1
        \end{pmatrix},
        \begin{pmatrix}
            0&1&0&0\\1&2&0&2
        \end{pmatrix},
        \begin{pmatrix}
            0&0&1&0\\2&1&2&2
        \end{pmatrix},
        \begin{pmatrix}
            0&0&0&1\\2&2&1&1
        \end{pmatrix}
        \>_{\F_3}.
    \end{align*}
    In particular, $\C$ and $\mD$ are $1$-dimensional Delsarte-Gabidulin codes and in particular, they are $\Fq$-$[2\times 4,4,2]$ MRD codes. One can check the following.
    
    \begin{itemize}
        \item We have $t_1^\ps(\C)=t_1^\ps(\mD)=2$. This value is obtained, for examples, for the codeword 
        \begin{equation*}
            \begin{pmatrix}
            1&0&0&0\\0&1&0&0
            \end{pmatrix}=\begin{pmatrix}
            1&0&0&0\\0&0&0&0
            \end{pmatrix}+\begin{pmatrix}
            0&0&0&0\\0&1&0&0
            \end{pmatrix}
        \end{equation*}
        of $\C$ and for the codeword
        \begin{equation*}
            \begin{pmatrix}
            1&0&0&0\\2&0&1&1
            \end{pmatrix}=\begin{pmatrix}
            1&0&0&0\\0&0&0&0
            \end{pmatrix}+\begin{pmatrix}
            0&0&0&0\\2&0&1&1
            \end{pmatrix}
        \end{equation*}
        of $\mD$. This is in line with Proposition \ref{prop:proptps} \eqref{item1:proptps}.
        
        \item We have $t_1^\cl(\C)=s_1^\cl(\C)=4$. This value is obtained, for example, for the perfect space $\F_3^2\otimes\<(1,0,0,0),(0,1,0,0)\>_{\F_3}$. Observe that this latter is a closure-type anticode and dual anticode whose intersection with $\C$ is the span of
        \begin{equation*}
            \begin{pmatrix}
            1&0&0&0\\0&1&0&0
            \end{pmatrix}.
        \end{equation*}
        This is in line with Proposition \ref{prop:propcl2}\ref{item1:propcl2} as we have $t_1^\cl(\C)=d(\C)^2$.
        
        \item We have $t_2^\cl(\C)=s_2^\cl(\C)=6$. This value is obtained, for example, for the perfect space $\F_3^2\otimes\<(1,0,0,0),(0,1,0,0),(0,0,1,0)\>_{\F_3}$.  Observe that this latter is a closure-type anticode and dual anticode whose intersection with $\C$ is the $2$-dimensional subspace generated by
        \begin{equation*}
            \begin{pmatrix}
            1&0&0&0\\0&1&0&0
        \end{pmatrix}\qquad\textup{ and }\qquad
        \begin{pmatrix}
            0&1&0&0\\0&0&1&0
        \end{pmatrix}.
        \end{equation*}
        
        \item We have $t_1^\cl(\mD)=t_2^\cl(\mD)=s_1^\cl(\mD)=s_2^\cl(\mD)=4$. This value is obtained, for example, for the perfect space $\F_3^2\otimes\<(1,0,0,0),(0,0,1,1)\>_{\F_3}$. Observe that this latter is a closure-type anticode and dual anticode whose intersection with $\mD$ is the $2$-dimensional subspace generated by
        \begin{equation*}
            \begin{pmatrix}
            1&0&0&0\\2&0&1&1
            \end{pmatrix}\qquad \textup{ and }\qquad \begin{pmatrix}
            0&0&1&1\\1&0&0&0
            \end{pmatrix}.
        \end{equation*}
        Observe that this is in line with Proposition \ref{prop:propcl2}\ref{item1:propcl2} as we have $t_1^\cl(\mD)=d(\mD)^2$.
        
        \item We have $s_3^\cl(\C)=s_3^\cl(\mD)=7$. This value is obtained, for example, for the closure-type dual anticode $\F_3^2\otimes \<(1,0,0,0),(0,0,1,0),(0,0,0,1)\>_{\F_3}+\<(1,0),(0,1,0,0)\>_{\F_3}$. The intersection of the latter with $\C$ is the span of
        \begin{equation*}
            \begin{pmatrix}
            0&1&0&0\\0&0&1&0
        \end{pmatrix},
        \begin{pmatrix}
            0&0&1&0\\0&0&0&1
        \end{pmatrix},
        \begin{pmatrix}
            0&0&0&1\\1&0&0&1
        \end{pmatrix}
        \end{equation*}
        and with $\mD$ is the span of
        \begin{equation*}
            \begin{pmatrix}
            1&0&0&0\\2&0&1&1
        \end{pmatrix},
        \begin{pmatrix}
            0&1&0&2\\2&0&2&1
        \end{pmatrix},
        \begin{pmatrix}
            0&0&1&1\\1&0&0&0
        \end{pmatrix}.
        \end{equation*}
        
        \item We have $t_3^\cl(\C)=t_4^\cl(\C)=s_4^\cl(\C)=8$ and $t_3^\cl(\mD)=t_4^\cl(\mD)=s_4^\cl(\mD)=8$. This means that the full-space is the only closure-type anticode that intersects $\C$ and $\mD$ in a subspace of dimension at least $3$, and the only closure-type dual anticode that contains the codes $\C$ and $\mD$.
        
        \item We have $d_j(\C)=d_j(\mD)=2$ for all $j\in\{1,\ldots,4\}$. This shows that the Delsarte generalized rank weights do not distinguish $\C$ and $\mD$.
    \end{itemize}
    \end{example}
    
    The following result extends \cite[Theorem~30, (4)]{ravagnani2016generalized} for $r>2$. The proof is similar and we include it for completeness. 
    
    \begin{proposition}
    \label{prop:proptR}
        For any $j\in\{1,\ldots,k\}$ the following hold. 
        \begin{enumerate}
        \setlength\itemsep{0.5em}
            \item\label{item1:proptR}  $t_j^\R +\frac{n}{n_1}\leq t_{j+\frac{n}{n_1}}^\R$. 
            \item $t_j^\R\leq n-\frac{n}{n_1}\left\lfloor\frac{n_1}{n}(k-j)\right\rfloor $. 
            \item\label{item3:proptR} $t_j^\R\geq j$.
        \end{enumerate}
    \end{proposition}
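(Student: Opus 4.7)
The plan is to exploit the structural characterization of $\A^\R$ given in Theorem \ref{thm:charRav}. Every $A\in\A^\R$ has the form
$\bigl(\bigotimes_{j=1}^{i-1}\Fq^{n_j}\bigr)\otimes A^{(i)}\otimes\bigl(\bigotimes_{j=i+1}^r\Fq^{n_j}\bigr)$
for some $i\in S=\{i:n_i=n_1\}$ and $A^{(i)}\leq\Fq^{n_i}$, so $\dimq{A}=\dimq{A^{(i)}}\cdot\tfrac{n}{n_1}$ is always a multiple of $\tfrac{n}{n_1}$. This single fact drives all three statements.

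For \eqref{item3:proptR}, note that any $A\in\A^\R$ with $\dimq{\C\cap A}\geq j$ automatically satisfies $\dimq{A}\geq\dimq{\C\cap A}\geq j$, giving $t_j^\R\geq j$.

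For \eqref{item1:proptR}, take a Ravagnani-type anticode $A$ of dimension $t_{j+n/n_1}^\R$ with $\dimq{\C\cap A}\geq j+\tfrac{n}{n_1}$, say $A=\bigotimes_{s\neq i}\Fq^{n_s}\otimes A^{(i)}$ with $i\in S$. Choose a codimension-$1$ subspace $B^{(i)}\leq A^{(i)}$ and set $B:=\bigotimes_{s\neq i}\Fq^{n_s}\otimes B^{(i)}$, which lies in $\A^\R$ and has dimension $\dimq{A}-\tfrac{n}{n_1}$. The natural map $\C\cap A\longrightarrow A/B$ has kernel $\C\cap B$, so
\begin{equation*}
\dimq{\C\cap A}-\dimq{\C\cap B}\leq \dimq{A/B}=\tfrac{n}{n_1}.
\end{equation*}
Hence $\dimq{\C\cap B}\geq j$, and by minimality $t_j^\R\leq \dimq{B}=t_{j+n/n_1}^\R-\tfrac{n}{n_1}$.

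For \eqref{item1:proptR} \emph{wait, I mean (2)}: set $m:=\lfloor\tfrac{n_1}{n}(k-j)\rfloor$ and fix any $i\in S$. Pick an arbitrary subspace $A^{(i)}\leq\Fq^{n_i}$ of dimension $n_1-m$ and form the corresponding $A\in\A^\R$ of dimension $(n_1-m)\tfrac{n}{n_1}=n-m\tfrac{n}{n_1}$. By the standard dimension inequality in $\F$,
\begin{equation*}
\dimq{\C\cap A}\geq \dimq{\C}+\dimq{A}-n = k - m\cdot\tfrac{n}{n_1}\geq j,
\end{equation*}
where the last step uses $m\leq\tfrac{n_1}{n}(k-j)$. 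Therefore $t_j^\R\leq\dimq{A}=n-\tfrac{n}{n_1}\lfloor\tfrac{n_1}{n}(k-j)\rfloor$. The only subtlety in the whole argument is ensuring the quotient $A/B$ in \eqref{item1:proptR} has dimension exactly $\tfrac{n}{n_1}$, which is precisely why we need Ravagnani-type (rather than merely closure-type) anticodes: the shrinking must happen in the unique factor $A^{(i)}$ of Kruskal width $n_1$, and $A/B\cong \bigotimes_{s\neq i}\Fq^{n_s}$ has the correct dimension $\tfrac{n}{n_1}$.
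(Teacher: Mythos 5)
Your argument is correct. Parts (1) and (3) follow the same route as the paper: for (3) the observation that $\dimq{A}\geq\dimq{\C\cap A}$ is exactly what is used; for (1) your quotient-map phrasing (kernel $\C\cap B$, image of dimension $\leq\dimq{A/B}=\tfrac{n}{n_1}$) is the same dimension count the paper carries out via $\dimq{\hat A\cap\C}=\dimq{\hat A}+\dimq{A\cap\C}-\dimq{\hat A+(A\cap\C)}\geq\dimq{\hat A}+\dimq{A\cap\C}-\dimq{A}$, and the key structural fact --- that any $A\in\A^\R$ admits a Ravagnani-type subanticode of codimension exactly $\tfrac{n}{n_1}$ by shrinking $A^{(i)}$ in a width-$n_1$ slot --- is used identically in both. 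Where you genuinely diverge is part (2): the paper derives it \emph{from} (1) by iterating the inequality $h=\lfloor\tfrac{n_1}{n}(k-j)\rfloor$ times and then invoking $t_{j+h\frac{n}{n_1}}^\R\leq n$, whereas you prove it directly and independently by exhibiting a concrete witness $A\in\A^\R$ of dimension $n-m\tfrac{n}{n_1}$ and applying the elementary bound $\dimq{\C\cap A}\geq k+\dimq{A}-n$. Your route is shorter and self-contained (it does not depend on (1)), at the cost of not exhibiting the ``chain of anticodes'' structure that the paper's iteration makes visible; both are valid. One small point worth making explicit in your version of (2): you need $n_1-m\geq 1$ so that the subspace $A^{(i)}$ is nontrivial, which does hold since $j\geq 1$ forces $m\leq\tfrac{n_1}{n}(k-j)<n_1$.
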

    \begin{proof}
        \begin{enumerate}
            \item  We first claim that for any non-trivial $A\in\A^\R$  there exists $\hat A\in\A^\R$ such that $\hat A\leq A$ and $\dimq{\hat A}=\dimq{A}-\frac{n}{n_1}$. Let $\displaystyle A=A^{(1)}\otimes\bigotimes_{i=2}^r\Fq^{n_i}$. Let $\hat A^{(1)}\leq A^{(1)}$ be such that $\dimq{\hat A^{(1)}}=\dimq{A^{(1)}}-1$. We have 
            \begin{equation*}
                \dimq{\hat A^{(1)}\otimes \bigotimes_{i=2}^r\Fq^{n_i}}=\left(\dimq{A^{(1)}}-1\right)\prod_{i=2}^r\dimq{\Fq^{n_i}}=\dimq{A}-\frac{n}{n_1}.
            \end{equation*}
            Now let $A\in\A^\R$ be such that $\dimq{A}=t_{j+\frac{n}{n_1}}^\R$ and $\dimq{A\cap\C}\geq j+\frac{n}{n_1}$. Let $\hat A$ be an   anticode of dimension $\dimq{A}-\frac{n}{n_1}$ and such that $\hat A\leq A$. We want to show that $\dimq{\hat A\cap\C}\geq j$. Since $\hat A\leq A$ we have $\hat A\cap\C=\hat A\cap(A\cap\C)$. Therefore,
            \begin{align*}
                \dimq{\hat A\cap \C}&=\dimq{\hat A\cap(A\cap\C)}\\
                &=\dimq{\hat A}+\dimq{A\cap\C}-\dimq{\hat A+(A\cap\C)}\\
                &\geq \dimq{\hat A}+\dimq{A\cap\C}-\dimq{A}\\
                &\geq \dimq{A}-\frac{n}{n_1} + j+\frac{n}{n_1}-\dimq{A},
            \end{align*}
            which implies the statement.
        
            \item Let $h:=\left\lfloor\frac{n_1}{n}(k-j)\right\rfloor$. It follows from \eqref{item1:proptR} that $t_j^\R+\ell \frac{n}{n_1}\leq t_{j+\ell \frac{n}{n_1}}^\R $ for each $\ell \in [h]$ and therefore $t_j^\R+h \frac{n}{n_1}\leq n$. 
            The statement now follows.
            \item If $A\in\A^\R$ is such that $\dimq{A\cap\C}\geq j$ then $\dimq{A}\geq j$ and so \eqref{item3:proptR} holds.\qedhere
        \end{enumerate}
    \end{proof}
    
    We conclude this section by providing a generalization, for $r>0$, of the Wei-type duality described in \cite[Section~6]{ravagnani2016generalized}. We omit the proof of the following results as they are similar to the proofs of \cite[Theorem~37, Corollary~38]{ravagnani2016generalized}.
    \begin{lemma}
        Let $\C$ be a code such that $1\leq \dimq{\C}\leq n-1$. Let $p,i,j\in\Z$ be such that $1\leq p+i\frac{n}{n_1}\leq n-k$ and $1\leq p+k+j\frac{n}{n_1}\leq k$. We have
        $$t_{p+i\frac{n}{n_1}}^\R(\C^\perp)\neq n+\frac{n}{n_1}-t_{p+k+j\frac{n}{n_1}}^\R(\C).$$
    \end{lemma}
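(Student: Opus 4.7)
The plan is to proceed by contradiction. Set $u:=p+i\tfrac{n}{n_1}$ and $v:=p+k+j\tfrac{n}{n_1}$ and suppose that $t_u^\R(\C^\perp)=n+\tfrac{n}{n_1}-t_v^\R(\C)$. A key structural observation is that, by Theorem~\ref{thm:charRav}, every nonzero Ravagnani-type anticode has dimension a multiple of $\tfrac{n}{n_1}$; hence both $t_u^\R(\C^\perp)$ and $t_v^\R(\C)$ lie in $\{0,\tfrac{n}{n_1},2\tfrac{n}{n_1},\ldots,n\}$. The strategy is to exploit this discretization together with the self-duality $\overline{\A^\R}=\A^\R$ to trap $t_u^\R(\C^\perp)$ strictly between two consecutive multiples of $\tfrac{n}{n_1}$, which will yield the desired contradiction.

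First I would fix $A\in\A^\R$ with $\dimq{A}=t_v^\R(\C)$ and $\dimq{\C\cap A}\geq v$. Since $\A^\R$ is closed under duality, $A^\perp\in\A^\R$, and by the assumed equality $\dimq{A^\perp}=n-t_v^\R(\C)=t_u^\R(\C^\perp)-\tfrac{n}{n_1}$. The identity $(\C+A)^\perp=\C^\perp\cap A^\perp$ combined with Lemma~\ref{lem:propdual} yields $\dimq{\C^\perp\cap A^\perp}=\dimq{\C\cap A}+\dimq{A^\perp}-k$, so
\[
\dimq{\C^\perp\cap A^\perp}\;\geq\;q:=p+(j-1)\tfrac{n}{n_1}+t_u^\R(\C^\perp).
\]
If $q>n-k$, this violates $\dimq{\C^\perp\cap A^\perp}\leq\dimq{\C^\perp}=n-k$; if $1\leq q\leq n-k$ and $q\geq u$, then by definition $t_q^\R(\C^\perp)\leq\dimq{A^\perp}=t_u^\R(\C^\perp)-\tfrac{n}{n_1}$, contradicting the monotonicity estimate $t_q^\R(\C^\perp)\geq t_u^\R(\C^\perp)$ provided by Proposition~\ref{thm:propt}(\ref{item3:propt}). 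Thus one may assume $q<u$, which is equivalent to $t_u^\R(\C^\perp)<(i-j+1)\tfrac{n}{n_1}$.

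The symmetric step starts from an anticode $B\in\A^\R$ with $\dimq{B}=t_u^\R(\C^\perp)$ and $\dimq{\C^\perp\cap B}\geq u$. Passing to $B^\perp\in\A^\R$ and running the parallel computation yields $\dimq{\C\cap B^\perp}\geq q':=u+k-t_u^\R(\C^\perp)$, and the same dichotomy (ruling out $q'>k$ and $q'\geq v$) forces $q'<v$, which unravels to $t_u^\R(\C^\perp)>(i-j)\tfrac{n}{n_1}$. Combining the two strict inequalities gives $(i-j)\tfrac{n}{n_1}<t_u^\R(\C^\perp)<(i-j+1)\tfrac{n}{n_1}$, which sandwiches $t_u^\R(\C^\perp)$ strictly between two consecutive multiples of $\tfrac{n}{n_1}$, contradicting the fact that $t_u^\R(\C^\perp)$ is itself such a multiple. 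The main obstacle will be verifying that the integer indices $q$ and $q'$ land in the correct admissible ranges so that every intermediate invocation of a generalized tensor weight is legitimate; this bookkeeping will rely on the hypotheses $1\leq u\leq n-k$ and $1\leq v\leq k$ together with the lower bounds $t_u^\R(\C^\perp)\geq u$ and $t_v^\R(\C)\geq v$ supplied by Proposition~\ref{prop:proptR}(\ref{item3:proptR}).
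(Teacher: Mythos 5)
Your proof is correct, and it takes the same route the paper indicates: the paper omits this proof, citing that it is analogous to the argument for \cite[Theorem~37, Corollary~38]{ravagnani2016generalized}, which is exactly the strategy you reconstruct (pass to the dual anticode, use $\dim_{\Fq}(\C\cap A)=\dim_{\Fq}(\C^\perp\cap A^\perp)+k+\dim_{\Fq}(A)-n$ and monotonicity of the $t_j^\R$ to force $q<u$ and $q'<v$, then trap $t_u^\R(\C^\perp)$ strictly between consecutive multiples of $n/n_1$). The key structural fact you isolate --- that every Ravagnani-type anticode has dimension divisible by $n/n_1$, immediate from Theorem~\ref{thm:charRav} --- is exactly what makes the sandwich $(i-j)\frac{n}{n_1}<t_u^\R(\C^\perp)<(i-j+1)\frac{n}{n_1}$ self-contradictory, and your bookkeeping of the dichotomies (ruling out $q>n-k$ and $u\leq q\leq n-k$, and symmetrically for $q'$) is sound.
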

    
    \begin{theorem}
        Define the sets
        \begin{align*}
            S_p(\C^\perp)&:=\left\{\frac{n_1}{n}\,t_{p+i\frac{n}{n_1}}^\R(\C^\perp):i\in\Z \textup{ and } 1\leq p+i\frac{n}{n_1}\leq k\right\},\\ \overline{S}_p(\C)&:=\left\{n_1+1-\frac{n_1}{n}\,t_{p+i\frac{n}{n_1}}^\R(\C):i\in\Z \textup{ and }1\leq p+i\frac{n}{n_1}\leq k\right\}.
        \end{align*}
        We have $S_p(\C^\perp)=\{1,\ldots,n_1\}\setminus\overline{S}_p(\C)$ for any $1\leq p\leq \frac{n}{n_1}$. In particular, the generalized tensor weights (of the Ravagnani-type) of $\C$ fully determine those of $\C^\perp$. 
    \end{theorem}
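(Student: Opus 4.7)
The strategy is to prove the set equality via a disjoint-union argument: I will show that $S_p(\C^\perp)$ and $\overline{S}_p(\C)$ are disjoint subsets of $\{1,\ldots,n_1\}$ whose cardinalities sum to $n_1$, forcing $S_p(\C^\perp) = \{1,\ldots,n_1\} \setminus \overline{S}_p(\C)$.

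First I would verify that both sets lie inside $\{1,\ldots,n_1\}$. By Theorem~\ref{thm:charRav} every $A \in \A^\R$ has dimension $\dim_{\Fq}(A^{(i)}) \cdot n/n_1$ for some $i$ with $n_i = n_1$, so each value $t_j^\R$ is automatically a multiple of $n/n_1$; together with the bounds $j \leq t_j^\R \leq n$ of Proposition~\ref{prop:proptR}, this yields $(n_1/n)\,t_j^\R \in \{1,\ldots,n_1\}$, placing both sets in the required range. Next, disjointness follows directly from the preceding lemma: if $S_p(\C^\perp)$ and $\overline{S}_p(\C)$ shared an element, unpacking the definitions would give an equality of the form $t_{p+i(n/n_1)}^\R(\C^\perp) = n + n/n_1 - t_{p+k+j(n/n_1)}^\R(\C)$ after rewriting the admissible index on the $\C$-side via a shift by $k$ (absorbing $k$ into $j$), contradicting the lemma.

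For the cardinality count, the strict inequality $t_j^\R + n/n_1 \leq t_{j+n/n_1}^\R$ of Proposition~\ref{prop:proptR}(1) forces the values $(n_1/n)\,t_{p+i(n/n_1)}^\R$ along each fixed residue class modulo $n/n_1$ to be strictly increasing, hence pairwise distinct as $i$ varies. Therefore $|S_p(\C^\perp)|$ and $|\overline{S}_p(\C)|$ are exactly the respective numbers of admissible indices $i$, and a floor-ceiling count parametrized by the residue of $p$ modulo $n/n_1$ shows that these two counts sum to $n_1$ for every valid $p \in \{1,\ldots,n/n_1\}$. The main obstacle is the combinatorial bookkeeping in this last step: reconciling the admissible index range of the theorem with the $(p+i(n/n_1),\,p+k+j(n/n_1))$ pairing of the lemma requires a clean case analysis based on the residues of $k$ and $p$ modulo $n/n_1$, and verifying that the two admissible index sets are genuinely of complementary cardinality in $\{1,\ldots,n_1\}$. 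Once this counting is in place, the disjointness-plus-cardinality argument immediately yields the claimed set equality, and the concluding sentence of the theorem follows because $\overline{S}_p(\C)$ is determined by the generalized Ravagnani-type tensor weights of $\C$ alone.
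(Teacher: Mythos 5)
Your strategy — disjointness via the preceding lemma, strict monotonicity of $t^\R$ along a residue class from Proposition~\ref{prop:proptR}, and a cardinality count in $\{1,\ldots,n_1\}$ — is exactly the argument the paper refers to; the authors omit the proof and cite it as ``similar to [Theorem~37, Corollary~38] of \cite{ravagnani2016generalized},'' and those results are proved by precisely this disjoint-union-plus-count scheme. So the high-level approach is a match.

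That said, there are two places where your sketch glosses over the genuine difficulty. First, the cardinality count is the crux of the proof, not bookkeeping: one must show $|S_p(\C^\perp)| + |\overline{S}_p(\C)|$ equals $n_1$ for every residue $p$, using the monotonicity to get distinctness and then counting admissible indices, and you defer the entire computation. Second, and more substantively, the ``absorbing $k$ into $j$'' step in your disjointness argument is not automatic: the preceding lemma forbids $t^\R_{p+i\frac{n}{n_1}}(\C^\perp) = n + \frac{n}{n_1} - t^\R_{p+k+j\frac{n}{n_1}}(\C)$, where the $\C$-side index lies in the residue class of $p+k$ modulo $\frac{n}{n_1}$, whereas $\overline{S}_p(\C)$ draws indices from the residue class of $p$. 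These classes coincide only when $\frac{n}{n_1}\mid k$. In general you must invoke the lemma with a shifted residue parameter $p'$ so that $p'\equiv p - k \pmod{\frac{n}{n_1}}$, and this shift then has to be carried consistently through the cardinality count; without tracking it, the disjointness claim is not actually established. These are exactly the technical points that make the Ravagnani-style proof delicate, so while your skeleton is right, the proposal would need both pieces filled in before it constitutes a proof.
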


\section{Binomial Moments and Weight Distributions}

In this section, we define and study other invariants associated to the different collection of anticodes we identified, namely the \textit{generalized binomial moments} and the \textit{generalized weight distribution}. We also comment on how these invariants are related to those ones introduced in \cite{byrne2020rank}. Throughout this section we use the convention that $\sum_{x\in\emptyset}f(x)=0$ for any function $f:\Q\longrightarrow\Q$.

	\begin{definition}
        Let $a\in\{0,\ldots,n\}$ and $j\in\{1,\ldots,k\}$. The $(a,j)$\textbf{-th generalized tensor binomial moment} of $\C$ is defined to be
        \begin{equation*}
            B_a^{(j)}(\C):=\sum_{A\in\A_a}B_A^{(j)}(\C),\qquad\textup{ where }\qquad B_A^{(j)}(\C):=\qbin{\dimq{\C\cap A}}{j}
        \end{equation*}
        for any $A\in \A$. If moreover $\omA\subseteq\A^\ps$ we define the $(a,j)$\textbf{-th generalized dual tensor binomial moment} of $\C$ to be
        \begin{equation*}
            \oB_a^{(j)}(\C):=\sum_{A\in\omA_a}\oB_{A}^{(j)}(\C),\qquad\textup{ where }\qquad \oB_{A}^{(j)}(\C):=\qbin{\dimq{\C\cap A}}{j}.
        \end{equation*}
  	\end{definition}
  
  	\begin{definition}
        Let $j\in\{1,\ldots,k\}$. The $j$\textbf{-th generalized weight distribution} is defined to be the vector of length $n$ whose $a$-th component is given by
        \begin{equation*}
            W_a^{(j)}(\C):=\sum_{A\in\A_a}W_A^{(j)}(\C),
        \end{equation*}
        where $W_A^{(j)}:=|\{\mD\leq(\C\cap A):\dimq{\mD}=j \textup{ and } A=\bigwedge\{B\in\A_a:\mD\leq B\}\}|$, for any $A\in\A_a$. Moreover if  $\omA\subseteq\A^\ps$ we let the $j$\textbf{-th generalized dual weight distribution} to be the vector of length $n$ whose $a$-th component is given by
        \begin{equation*}
            \oW_a^{(j)}(\C):=\sum_{A\in\omA_a}\oW_A^{(j)}(\C),
        \end{equation*}
		where $\oW_A^{(j)}:=|\{\mD\leq(\C\cap A):\dimq{\mD}=j \textup{ and } A=\bigwedge\{B\in\omA_a:\mD\leq B\}\}|$, for any $A\in\omA_a$.
  	\end{definition}
  	
  	\begin{definition}
        Let $X$ and $Y$ be indeterminates. We define the $j$\textbf{-th tensor weight enumerator} by
        \begin{equation*}
            \W_\C^{(j)}(X,Y):=\sum_{a\leq n}W_a^{(j)}(\C)X^{n-a}Y^a\in \Q[X,Y],
        \end{equation*}
        for all $j\in\{1,\ldots,k\}$. Moreover if  $\omA\subseteq\A^\ps$ we let the $j$\textbf{-th dual tensor weight enumerator} be defined by
        \begin{equation*}
            \omW_\C^{(j)}(X,Y):=\sum_{a\leq n}\oW_a^{(j)}(\C)X^{n-a}Y^a\in \Q[X,Y],
        \end{equation*}
        for all $j\in\{1,\ldots,k\}$.
    \end{definition}
  	
  	A particular formulation of the following result for $r=2$ and $\A=\A^\D$ is given by \cite[Proposition~1]{blanco2018rank} and \cite[Corollary~33]{ravagnani2016rank} for $j=1$, and \cite[Theorem~3.8]{byrne2020rank} for any $j\in\{1,\ldots,k\}$. Recall that each collection of anticodes identified in Section \ref{sec:anticodes} is a lattice.
  	
  	\begin{theorem}
  	\label{thm:BaWb}
  	    Let $\A$ be a family of anticodes whose members form a lattice and let $\mu$ be its M\"obius function.
  		The following hold for all $a,b\in\{0,\ldots,n\}$ and $j\in\{1,\ldots,k\}$.
  		\begin{enumerate}
  			\item $\displaystyle B_a^{(j)}(\C)=\sum_{b=0}^aW_b^{(j)}(\C)\;\left|\left\{(A',A)\in\A_b\times\A_a:A'\leq A\right\}\right|$.
            \item $\displaystyle W_{b}^{(j)}(\C)=\sum_{a=0}^b\mu(a,b)B_a^{(j)}(\C)\;\left|\left\{(A',A)\in\A_b\times\A_a:A\leq A'\right\}\right|$.
  		\end{enumerate}
  	\end{theorem}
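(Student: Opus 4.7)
The plan is to establish part (1) by a direct incidence-count argument and then derive part (2) by applying the M\"obius Inversion Formula on the lattice $\A$. For part (1), the starting point is to interpret $\qbin{\dim_{\Fq}(\C\cap A)}{j}$ as the number of $j$-dimensional $\Fq$-subspaces of $\C\cap A$, which recasts the definition of $B_a^{(j)}(\C)$ as
\begin{equation*}
B_a^{(j)}(\C) = \left|\{(\mD, A) : \mD \leq \C,\ \dimq{\mD}=j,\ A \in \A_a,\ \mD \leq A\}\right|.
\end{equation*}
Since $\A$ is a lattice, for each such $\mD$ the element $A'(\mD) := \bigwedge\{B \in \A : \mD \leq B\}$ is a well-defined minimum covering anticode, and for any $A \in \A$ the condition $\mD \leq A$ is equivalent to $A'(\mD) \leq A$. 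I would partition the count above by $A'(\mD)$ and regroup by $b := \dimq{A'(\mD)}$; identifying $W_{A'}^{(j)}(\C)$ as the number of $\mD$ with $A'(\mD)=A'$, this converts the identity into a double sum of the form $\sum_{b=0}^{a} \sum_{A' \in \A_b} W_{A'}^{(j)}(\C)\cdot|\{A \in \A_a : A' \leq A\}|$. Reassembling the inner double-sum into the stated incidence count on $\A_b \times \A_a$ weighted by $W_b^{(j)}(\C)$ yields part (1).

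For part (2), the relation in (1) is a triangular linear transformation between the sequences $\bigl(B_a^{(j)}(\C)\bigr)_a$ and $\bigl(W_b^{(j)}(\C)\bigr)_b$ whose coefficients are pure combinatorial data on $\A$. Using the Remark that the M\"obius function on a subspace-type lattice depends only on the dimensions of its arguments (and the Product Theorem when $\A$ decomposes as a product, as for $\A^\cl$), the M\"obius Inversion Formula inverts the transformation and produces the expression in (2). The switch from $A' \leq A$ to $A \leq A'$ in the coefficient of (2) simply records that inversion reverses the incidence relation. As a sanity check, the $r=2$ specialisations of (1) and (2) should reproduce the identities in \cite[Theorem~3.8]{byrne2020rank}, \cite[Corollary~33]{ravagnani2016rank} and \cite[Proposition~1]{blanco2018rank} already cited in the excerpt.

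The main obstacle I expect is the precise combinatorial repackaging that turns $\sum_{A' \in \A_b} W_{A'}^{(j)}(\C)\cdot|\{A \in \A_a : A' \leq A\}|$ into the compact form $W_b^{(j)}(\C)\cdot|\{(A',A) \in \A_b \times \A_a : A' \leq A\}|$; this depends on the incidence count $|\{A \in \A_a : A' \leq A\}|$ being sufficiently regular over $A' \in \A_b$, which is transparent for the product-type lattices $\A^\cl$ and $\A^\R$ but requires separate treatment for $\A^\D$, whose meet is not the set-theoretic intersection (by Theorem~\ref{thm:latticeDelsarte}). Checking that the minimum-covering map $\mD \mapsto A'(\mD)$ behaves well in $\A^\D$, or that the $\mD$'s for which it misbehaves cancel in the M\"obius-inverted sum, is where I expect the technical effort to concentrate; once this is handled the rest of the argument is routine.
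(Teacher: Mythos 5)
Your argument for part~(1) is essentially the paper's: interpret $B_a^{(j)}(\C)$ as an incidence count over pairs $(\mD,A)$, regroup by the minimal covering anticode of $\mD$, and recognise $W_{A'}^{(j)}(\C)$ as the number of $\mD$ attached to a given $A'$. The paper packages the same idea slightly differently, by first recording the per-anticode identity
\begin{equation*}
B_A^{(j)}(\C) \;=\; \sum_{\substack{A'\in\A\\A'\leq A}} W_{A'}^{(j)}(\C),
\end{equation*}
then summing over $A\in\A_a$. Both routes arrive at $\sum_{b=0}^a\sum_{A'\in\A_b}W_{A'}^{(j)}(\C)\,|\{A\in\A_a : A'\leq A\}|$, and both then pass to the packaged form in the theorem statement without further comment. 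You are right to single out this last repackaging and the assumption that $\mD\leq A \Leftrightarrow A'(\mD)\leq A$ as the delicate points: the latter requires $\mD\leq A'(\mD)$, i.e. that the lattice meet of anticodes containing $\mD$ still contains $\mD$, which is automatic when $\wedge$ is set-theoretic intersection (as in $\A^\ps$, $\A^\cl$, $\overline{\A^\cl}$) but not for $\A^\D$ or $\A^\R$, where the meet can collapse to $\{0\}$ (Theorem~\ref{thm:latticeDelsarte}). The paper's proof does not address either of these points explicitly, so your concern is genuine rather than a shortcoming of your write-up relative to the intended argument.

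For part~(2) you diverge from the paper more substantially. The paper applies M\"obius inversion to the per-anticode relation above, obtaining
\begin{equation*}
W_{A'}^{(j)}(\C) \;=\; \sum_{\substack{A\in\A\\A\leq A'}} \mu\!\left(\dimq{A},\dimq{A'}\right) B_A^{(j)}(\C),
\end{equation*}
and only then sums over $A'\in\A_b$. You instead propose to invert the already-aggregated triangular system from part~(1). That is a harder path: the coefficient matrix $\bigl(|\{(A',A)\in\A_b\times\A_a : A'\leq A\}|\bigr)_{a,b}$ is not the zeta matrix of the lattice $\A$, so the M\"obius Inversion Formula does not apply to it directly, and the remark about $\mu$ depending only on dimensions does not by itself tell you the inverse of this dimension-indexed incidence matrix. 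Inverting per-anticode first, as the paper does, is cleaner precisely because it inverts the actual zeta function of $\A$ and only afterwards sums by dimension; I would recommend reorganising your part~(2) along those lines. Your $r=2$ sanity checks are a reasonable safeguard, but they do not substitute for the structural change needed to make the inversion step go through.
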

  	\begin{proof}
        For ease of notation in the remainder of this proof, we write $B_a^{(j)}$ and $W_b^{(j)}$ instead of $B_a^{(j)}(\C)$ and $W_b^{(j)}(\C)$. First of all, observe that
        \begin{equation}
        \label{eq1:Moebius}
            \sum_{\tiny\begin{matrix}A'\in\A\\A'\leq A\end{matrix}} W_{A'}^{(j)}=|\{\mD\leq\C:\dimq{\mD}=j\textup{ and } \mD\leq A\}|=B_A^{(j)}.  
        \end{equation}
        Therefore, by M\"obius inversion formula we get
        \begin{equation}
        \label{eq2:Moebius}
            W_{A'}^{(j)}=\sum_{\tiny\begin{matrix}A\in\A\\A\leq A'\end{matrix}} \mu(\dimq{A},\dimq{A'})B_A^{(j)}.
        \end{equation}
        We prove the two statement independently.
        \begin{enumerate}
            \item By Equation \eqref{eq1:Moebius}, we have
                \begin{align*}
                    B_a^{(j)}&=\sum_{A\in\A_a}B_A^{(j)}=\sum_{A\in\A_a}\sum_{\tiny\begin{matrix}A'\in\A\\A'\leq A\end{matrix}} W_{A'}^{(j)}\\
                    &=\sum_{b=0}^a\;\sum_{A'\in\A_{b}}W_{A'}^{(j)}|\{A\in\A_a,A'\leq A\}|\\
                    &=\sum_{b=0}^aW_b^{(j)}\left|\left\{(A',A)\in\A_b\times\A_a:A'\leq A\right\}\right|.
                \end{align*}
            \item By Equations \eqref{eq1:Moebius} and \eqref{eq2:Moebius}, we have
                \begin{align*}
                    W_{b}^{(j)}&=\sum_{A'\in\A_b}W_{A'}^{(j)}=\sum_{A'\in\A_b}\sum_{\tiny\begin{matrix}A\in\A\\A\leq A'\end{matrix}} \mu(\dimq{A},b)B_A^{(j)}\\
                    &=\sum_{a=0}^b\mu(a,b)\sum_{A\in\A_a}B_A^{(j)}|\{A\in\A_b,A'\leq A\}|\\
                    &=\sum_{a=0}^b\mu(a,b)B_a^{(j)}\left|\left\{(A',A)\in\A_b\times\A_a:A\leq A'\right\}\right|.\qedhere
                \end{align*}
        \end{enumerate}
  	\end{proof}
  	
  	Particular formulations for the case $r=2$ and $\A=\A^\D$ of the next two result can be read in \cite[Lemma~28]{ravagnani2016rank} and \cite[Lemma~3]{byrne2020rank}, respectively.
  	
  	\begin{lemma}
    \label{lem:dualMatU}
        For any $A\in\A_a$ we have $\dimq{\C\cap A}=\dimq{\C^\perp\cap A^\perp}+k+a-n$.
    \end{lemma}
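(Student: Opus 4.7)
The plan is to derive this identity as a direct consequence of the basic dimension properties of the dual collected in Lemma \ref{lem:propdual}. The key observation is that the non-degeneracy of the bilinear form $*$ gives us two tools: the involution $(V^\perp)^\perp = V$, the dimension formula $\dimq{V^\perp} = n - \dimq{V}$, and the duality identity $(V \cap W)^\perp = V^\perp + W^\perp$.

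First I would apply Lemma \ref{lem:propdual}\eqref{item3:propdual} to the pair $(\C, A)$, obtaining $(\C \cap A)^\perp = \C^\perp + A^\perp$. Taking $\Fq$-dimensions on both sides and using Lemma \ref{lem:propdual}\eqref{item2:propdual}, the left-hand side becomes $n - \dimq{\C \cap A}$. Then I would expand the right-hand side using the standard Grassmann dimension formula:
\begin{equation*}
\dimq{\C^\perp + A^\perp} = \dimq{\C^\perp} + \dimq{A^\perp} - \dimq{\C^\perp \cap A^\perp} = (n-k) + (n-a) - \dimq{\C^\perp \cap A^\perp}.
\end{equation*}
Equating the two expressions and solving for $\dimq{\C \cap A}$ yields exactly $\dimq{\C \cap A} = \dimq{\C^\perp \cap A^\perp} + k + a - n$.

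There is essentially no obstacle here: the argument is purely a dimension count, and every ingredient has been established in Lemma \ref{lem:propdual}. The only point worth noting is that the identity does not require $A$ to be an anticode in any of the refined senses of Section \ref{sec:anticodes} — it holds for any subspace $A \leq \F$ of dimension $a$ — so the lemma is really a statement about the bilinear form $*$ rather than about the structure of $\A$.
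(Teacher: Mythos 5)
Your proof is correct and follows essentially the same route as the paper: both invoke Lemma \ref{lem:propdual} to rewrite $(\C\cap A)^\perp=\C^\perp+A^\perp$, apply the dimension formula $\dimq{V^\perp}=n-\dimq{V}$, and finish with the Grassmann identity for $\dimq{\C^\perp+A^\perp}$. Your closing observation that the lemma holds for any subspace $A$, not just an anticode, is also accurate.
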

    \begin{proof}
        By Lemma \ref{lem:propdual} we have
        \begin{align*}
            \dimq{\C\cap A}&=n-\dimq{(\C\cap A)^\perp}\\
            &=n-\dimq{\C^\perp+A^\perp}\\
            &=n-\dimq{\C^\perp}-\dimq{A^\perp}+\dimq{\C^\perp\cap A^\perp}\\
            &=\dimq{\C^\perp\cap A^\perp}+k+a-n.\qedhere
        \end{align*}
    \end{proof}
   
    \begin{theorem}
    \label{thm:Bj}
    	Let $a\in\{0,\ldots,n\}$ and suppose that $\omA\subseteq\A^\ps$. For any $j\in\{1,\ldots,k\}$, we have that
    	\begin{equation*}
    		B_a^{(j)}(\C)=
    		\begin{cases}
    			0 & \textup{ if } a < t_j,\\[1ex]
    			\qbin{k+a-n}{j}\left|\A_a\right| & \textup{ if } a > n-s_1^\perp.
    		\end{cases}
    	\end{equation*}  
    \end{theorem}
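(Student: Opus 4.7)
The plan is to handle the two cases separately, both by unpacking the definitions of $t_j$ and $s_1^\perp$ and combining them with the duality identity in Lemma~\ref{lem:dualMatU}.

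For the first case, suppose $a<t_j$. By the definition of $t_j$, every anticode $A\in\A_a$ must satisfy $\dimq{\C\cap A}<j$, for otherwise $A$ would be an anticode of dimension $a<t_j$ with $\dimq{\C\cap A}\geq j$, contradicting the minimality of $t_j$. Hence $\qbin{\dimq{\C\cap A}}{j}=0$ for every $A\in\A_a$, and summing over $\A_a$ gives $B_a^{(j)}(\C)=0$.

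For the second case, suppose $a>n-s_1^\perp$, and fix any $A\in\A_a$. Since $\omA\subseteq\A^\ps$, the dual $A^\perp$ lies in $\omA$, and by Lemma~\ref{lem:propdual}\eqref{item2:propdual} it has dimension $n-a<s_1^\perp$. By the definition of $s_1^\perp$, no anticode in $\omA$ of dimension strictly less than $s_1^\perp$ can meet $\C^\perp$ in a subspace of dimension at least $1$, so $\dimq{\C^\perp\cap A^\perp}=0$. Applying Lemma~\ref{lem:dualMatU} gives $\dimq{\C\cap A}=k+a-n$ for every $A\in\A_a$, so that
\begin{equation*}
    B_a^{(j)}(\C)=\sum_{A\in\A_a}\qbin{\dimq{\C\cap A}}{j}=\qbin{k+a-n}{j}\,|\A_a|,
\end{equation*}
as claimed.

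There is no real obstacle here; the only subtle point to mention clearly is the hypothesis $\omA\subseteq\A^\ps$, which is exactly what ensures that $A^\perp$ is a legitimate element of $\omA$ to which the definition of $s_1^\perp$ can be applied. Otherwise the argument is a direct application of the definitions together with the duality dimension identity from Lemma~\ref{lem:dualMatU}.
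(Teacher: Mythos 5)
Your proof is correct and follows essentially the same route as the paper's: the first case uses the minimality defining $t_j$ to force each $q$-binomial term to vanish, and the second case uses the minimality defining $s_1^\perp$ together with Lemma~\ref{lem:dualMatU} to pin down $\dimq{\C\cap A}=k+a-n$ for every $A\in\A_a$. The only difference is that you spell out the contrapositive reasoning and the role of the hypothesis $\omA\subseteq\A^\ps$ more explicitly, which the paper leaves implicit.
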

    \begin{proof}
    	Observe first that if $a< t_j$ then for all $A \in\omA$ we have
    	\begin{equation*}
    		\qbin{\dimq{\C\cap A}}{j}=0.
    	\end{equation*}
    	On the other hand, if $n-a<s_1^\perp$ then $\C^\perp \cap A^\perp=\{0\}$, for all $A\in\A_a$. Therefore, Lemma \ref{lem:dualMatU} implies
    	\begin{equation*}
    		B_a^{(j)}(\C)=\sum_{A\in\A_a}\qbin{\dimq{\C\cap A}}{j}=\sum_{A\in\A_a}\qbin{k+a-n}{j}=\qbin{k+a-n}{j}|\A_a|.
    	\end{equation*}
    	This concludes the proof.
    \end{proof}
    
    The following result establish the MacWilliams identities for generalized tensor binomial moments. A particular formulation for $r=2$ and $\A=\A^\D$ of the following result is \cite[Theorem~7.1]{byrne2020rank}.
  	
  	\begin{theorem}
  	\label{thm:dualBj}
  		The following holds for any $j\in\{1,\ldots,k\}$ and $a\in\{0,\ldots,n\}$, under the assumption that  $\omA\subseteq\A^\ps$.
  		\begin{equation*}  
  			B_a^{(j)}(\C)=\sum_{p=0}^jq^{p(k+a-n-j+p)}\qbin{k+a-n}{j-p}\oB_{n-a}^{(p)}(\C^\perp).
  		\end{equation*}
  	\end{theorem}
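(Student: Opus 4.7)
The plan is to start from the definition
$$B_a^{(j)}(\C)=\sum_{A\in\A_a}\qbin{\dimq{\C\cap A}}{j}$$
and use Lemma \ref{lem:dualMatU}, which tells us $\dimq{\C\cap A}=\dimq{\C^\perp\cap A^\perp}+(k+a-n)$. Substituting this into the $q$-binomial coefficient gives
$$B_a^{(j)}(\C)=\sum_{A\in\A_a}\qbin{(k+a-n)+\dimq{\C^\perp\cap A^\perp}}{j},$$
so the task reduces to splitting this sum of $q$-binomial coefficients into the claimed form.

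Next I would invoke the $q$-Vandermonde-type identity from Lemma \ref{lem:bin}(2), in the form $\qbin{\alpha+\beta}{c}=\sum_{p=0}^{c}q^{(c-p)(\alpha-p)}\qbin{\alpha}{p}\qbin{\beta}{c-p}$, with $\alpha=k+a-n$, $\beta=\dimq{\C^\perp\cap A^\perp}$, and $c=j$. This yields
$$\qbin{(k+a-n)+\dimq{\C^\perp\cap A^\perp}}{j}=\sum_{p=0}^{j}q^{(j-p)(k+a-n-p)}\qbin{k+a-n}{p}\qbin{\dimq{\C^\perp\cap A^\perp}}{j-p}.$$
Reindexing via $p\mapsto j-p$ rewrites the exponent as $p(k+a-n-j+p)$ and swaps the two $q$-binomial factors into exactly the form appearing in the statement.

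Finally, I would interchange the order of summation and push the factors depending only on $p$, $k$, $a$, $n$, $j$ outside, leaving
$$\sum_{A\in\A_a}\qbin{\dimq{\C^\perp\cap A^\perp}}{p}.$$
The hypothesis $\overline{\A}\subseteq\A^\ps$ guarantees that $\overline{\A}$ is a well-defined collection of anticodes, and the map $A\mapsto A^\perp$ is a dimension-reversing bijection between $\A_a$ and $\overline{\A}_{n-a}$ by Lemma \ref{lem:propdual}\eqref{item2:propdual}. Under this bijection the inner sum equals $\overline{B}_{n-a}^{(p)}(\C^\perp)$ by definition, and assembling the pieces yields the claimed identity.

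The computation is essentially routine once the substitution from Lemma \ref{lem:dualMatU} is made; the only slightly delicate point is choosing the correct form of the $q$-Vandermonde identity (there are two listed in Lemma \ref{lem:bin}(2)) and performing the index reversal so that the exponent of $q$ matches $p(k+a-n-j+p)$ rather than a symmetric variant. This is where I would double-check signs and the range of summation, since the identity holds even when $k+a-n$ is negative, as the $q$-binomial coefficient is defined for negative upper arguments.
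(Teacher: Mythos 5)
Your proof is correct and follows essentially the same route as the paper: substitute via Lemma \ref{lem:dualMatU}, expand with the $q$-Vandermonde identity of Lemma \ref{lem:bin}(2), swap sums, and use the bijection $A\mapsto A^\perp$. The only cosmetic difference is that the paper applies the first form of the Vandermonde identity directly (with the dual dimension in the role of $a$), which gives the exponent $p(k+a-n-j+p)$ immediately without your extra reindexing step.
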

  	\begin{proof}
  		Lemmas \ref{lem:bin} and \ref{lem:dualMatU} imply
  		\begin{align*}
  			B_a^{(j)}(\C)=&\sum_{A\in\A_a}\qbin{\dimq{\C^\perp\cap A^\perp}+k+a-n}{j}\\
  			&=\sum_{A\in\A_a}\sum_{p=0}^jq^{p(k+a-n-j+p)}\qbin{k+a-n}{j-p}\qbin{\dimq{\C^\perp\cap A^\perp}}{p}\\
  			&=\sum_{p=0}^jq^{p(k+a-n-j+p)}\qbin{k+a-n}{j-p}\sum_{A\in\A_a}\qbin{\dimq{\C^\perp\cap A^\perp}}{p}\\
  			&=\sum_{p=0}^jq^{p(k+a-n-j+p)}\qbin{k+a-n}{j-p}\sum_{A\in\omA_{n-a}}\qbin{\dimq{\C^\perp\cap A}}{p},
  		\end{align*}
  		where the last equality follows from the fact that the map $A\longmapsto A^\perp$ is a bijection between the set of   tensor anticodes of dimension $a$ and the set of   dual tensor anticodes of dimension $n-a$. This implies the statement.
  	\end{proof}
  
  	\begin{remark}
  	    MacWilliams identities for generalized tensor weight distributions can be easily derived by applying  Theorem \ref{thm:BaWb} to Theorem \ref{thm:dualBj}. A particular formulation for $r=2$ and $\A=\A^\D$ is \cite[Corollary~7.2]{byrne2020rank}.
  	\end{remark}
  	
    We now introduce a class codes whose generalized tensor binomial moments are determined by their code parameters
    $[n_1 \times \cdots \times n_r,k,d]$. This in turn implies that their generalized tensor weight distributions are also determined. 
    
    \begin{definition}
    \label{def:TBMD}
        Suppose that  $\omA\subseteq\A^\ps$ . For any $j\in\{1,\ldots,k\}$, we say that the code $\C$ is $j$\textbf{-TBMD} (\textbf{Tensor Binomial Moment Determined}) with respect to (the anticodes in)  $\A$ if $n-s_1^\perp-t_j<0$. Moreover, we say that $\C$ is \textbf{minimally} $j$\textbf{-TBMD} if $j$ is the minimum of the set $\{p\in\{1,\ldots,k\}:\C \textup{ is } j\textup{-TBMD}\}$.
    \end{definition}
    
    The following is an immediate consequence of Theorem \ref{thm:propt} \eqref{item3:propt}.
    \begin{proposition}
       Suppose that $\omA\subseteq\A^\ps$. If $\C$ is $j$-TBMD with respect to  $\A$ then $\C$ is $(j+1)$-TBMD with respect to  $\A$.
    \end{proposition}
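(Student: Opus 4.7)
The plan is to unpack the definition of $j$-TBMD and apply the monotonicity of the generalized tensor weights stated in Proposition \ref{thm:propt} \eqref{item3:propt}. By Definition \ref{def:TBMD}, the code $\C$ being $j$-TBMD with respect to $\A$ means exactly that $n - s_1^\perp - t_j < 0$, where $t_j = t_j(\C)$ is the $j$-th generalized tensor weight with respect to $\A$ and $s_1^\perp = s_1(\C^\perp)$ depends only on $\C^\perp$ (and in particular does not depend on $j$).

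The key observation is that the only quantity in the inequality $n - s_1^\perp - t_j < 0$ which depends on $j$ is $t_j$, and this quantity is monotone nondecreasing in $j$. Specifically, Proposition \ref{thm:propt} \eqref{item3:propt} yields $t_j \leq t_{j+1}$, so
\begin{equation*}
    n - s_1^\perp - t_{j+1} \;\leq\; n - s_1^\perp - t_j \;<\; 0,
\end{equation*}
where the last inequality is the hypothesis that $\C$ is $j$-TBMD. Hence $\C$ satisfies the defining inequality of $(j+1)$-TBMD with respect to $\A$.

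There is essentially no obstacle: the statement is a one-line consequence of the definition combined with the previously established monotonicity of $t_j$ in $j$. The only small point to mention in the write-up is that $s_1^\perp$ is independent of $j$, so that the substitution $t_j \mapsto t_{j+1}$ in the defining inequality of $j$-TBMD leaves the remaining terms unchanged.
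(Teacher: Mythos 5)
Your proof is correct and takes exactly the route the paper indicates: the paper states the result as an immediate consequence of Proposition \ref{thm:propt}\eqref{item3:propt}, i.e.\ the monotonicity $t_j \leq t_{j+1}$, which together with the fact that $s_1^\perp$ does not depend on $j$ gives $n - s_1^\perp - t_{j+1} \leq n - s_1^\perp - t_j < 0$.
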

    
    \begin{proposition}
         Suppose that  $\omA\subseteq\A^\ps$. If $\C$ is $j$-TBMD with respect to  $\A$ then its $j$-th tensor binomial moments and tensor weight distribution are determined by its code parameters.
    \end{proposition}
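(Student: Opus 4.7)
The plan is to invoke Theorems \ref{thm:Bj} and \ref{thm:BaWb} directly. First I would rewrite the $j$-TBMD hypothesis $n - s_1^\perp - t_j < 0$ in the equivalent form $t_j > n - s_1^\perp$. This says precisely that the two index sets $\{a : a < t_j\}$ and $\{a : a > n - s_1^\perp\}$ together cover $\{0, 1, \ldots, n\}$ (possibly with overlap). Consequently, for every $a \in \{0, 1, \ldots, n\}$, at least one of the two cases of Theorem \ref{thm:Bj} applies, yielding
\begin{equation*}
B_a^{(j)}(\C) = \begin{cases} 0 & \text{if } a < t_j,\\ \qbin{k+a-n}{j}|\A_a| & \text{if } a > n - s_1^\perp. \end{cases}
\end{equation*}
In either case the value depends only on $n,k,j,q$ together with the combinatorial constant $|\A_a|$, and this last quantity is determined purely by $n_1,\ldots,n_r$, $q$, and the choice of family $\A$ (it is independent of $\C$). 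Hence the $j$-th binomial moments are determined.

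For the weight distribution, I would then apply part (2) of Theorem \ref{thm:BaWb}, which writes
\begin{equation*}
W_b^{(j)}(\C) = \sum_{a=0}^{b} \mu(a,b)\, B_a^{(j)}(\C)\, \bigl|\{(A',A)\in\A_b\times\A_a:A\leq A'\}\bigr|.
\end{equation*}
The coefficients in this sum are structural constants of the lattice $\A$, and in particular do not depend on $\C$. Once the $B_a^{(j)}(\C)$ are known to be determined by the code parameters, the same therefore holds for the $W_b^{(j)}(\C)$.

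The only point requiring a small verification is the consistency of the two formulas of Theorem \ref{thm:Bj} on the possibly nonempty overlap $n-s_1^\perp < a < t_j$: there one would need $\qbin{k+a-n}{j}|\A_a|=0$. This is automatic, since both formulas compute the same invariant $B_a^{(j)}(\C)$; alternatively, one can note that in the overlap region $\qbin{k+a-n}{j}$ vanishes by the range constraints on $a$ relative to $k$ and $j$ forced by TBMD together with Proposition \ref{thm:propt}. Aside from this consistency check, the argument is a direct combination of the previously established identities, and there is no genuine obstacle.
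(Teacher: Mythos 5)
Your proposal is correct and follows essentially the same route as the paper's own (very terse) proof: rewrite the $j$-TBMD hypothesis as $n-s_1^\perp<t_j$ so that the two regimes of Theorem~\ref{thm:Bj} cover all $a\in\{0,\ldots,n\}$, conclude that each $B_a^{(j)}(\C)$ is given by a formula independent of the particular code $\C$, and then push this through the M\"obius-inversion formula of Theorem~\ref{thm:BaWb} to obtain the weight distribution. Your additional remark about consistency on the overlap $n-s_1^\perp<a<t_j$ is a reasonable (if brief) elaboration that the paper leaves implicit.
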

    \begin{proof}
       If $\C$ is $j$-TBMD then $n-s_1^\perp<t_j$ and the result follows from Theorems \ref{thm:BaWb} and \ref{thm:Bj}.
    \end{proof}
    
    As immediate consequence of Proposition \ref{thm:propt} we have the following.
    
    \begin{proposition}
    \label{prop:jTBMD}
        The following hold for any $j\in\{1,\ldots,k\}$.
        \begin{enumerate}
        \setlength\itemsep{0.5em}
            \item If $\C$ is $j$-TBMD with respect to $\A^\cl$ or $\omA^\cl$ then it is $j$-TBMD with respect to $\A^\D$.
            \item If $\C$ is $j$-TBMD with respect to $\A^\D$ then it is $j$-TBMD with respect to $\A^\R$.
            \item\label{item3:jTBMD} $\C$ is $1$-TBMD with respect to $\A^\D$ (or $\A^\R$) if and only if $\C^\perp$ is $1$-TBMD with respect to $\A^\D$ (or $\A^\R$ respectively).
        \end{enumerate}
    \end{proposition}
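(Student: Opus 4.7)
The plan is to unwind the $j$-TBMD condition using Definition \ref{def:TBMD} and then chain the inequalities collected in Proposition \ref{thm:propt}. The condition ``$\C$ is $j$-TBMD with respect to $\A$'' says that $n - s_1(\C^\perp) - t_j(\C) < 0$, i.e.\ $t_j(\C) + s_1(\C^\perp) > n$, where both invariants are taken with respect to the family $\A$. So the first task is to rewrite each TBMD hypothesis/conclusion in terms of the invariants $t_j^\cl, s_j^\cl, t_j^\D, t_j^\R$ already in play:
\begin{itemize}
\setlength\itemsep{0.3em}
\item w.r.t.\ $\A^\cl$: $\;n - s_1^\cl(\C^\perp) < t_j^\cl(\C)$;
\item w.r.t.\ $\overline{\A^\cl}$: since $\overline{\overline{\A^\cl}}=\A^\cl$, the roles of $t$ and $s$ swap, giving $\;n - t_1^\cl(\C^\perp) < s_j^\cl(\C)$;
\item w.r.t.\ $\A^\D$: since $t_j^\D=s_j^\D$, this is $\;n - t_1^\D(\C^\perp) < t_j^\D(\C)$;
\item w.r.t.\ $\A^\R$: similarly $\;n - t_1^\R(\C^\perp) < t_j^\R(\C)$.
\end{itemize}

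For part (1), I would handle the two cases of the hypothesis in parallel. If $\C$ is $j$-TBMD w.r.t.\ $\A^\cl$, then Proposition \ref{thm:propt} gives $s_1^\cl(\C^\perp)\leq t_1^\D(\C^\perp)$ and $t_j^\cl(\C)\leq t_j^\D(\C)$, so
\[
n - t_1^\D(\C^\perp)\leq n - s_1^\cl(\C^\perp) < t_j^\cl(\C)\leq t_j^\D(\C),
\]
which is the $\A^\D$-TBMD condition. If instead $\C$ is $j$-TBMD w.r.t.\ $\overline{\A^\cl}$, I use $t_1^\cl(\C^\perp)\leq t_1^\D(\C^\perp)$ (from $t^\cl\leq t^\D$) and $s_j^\cl(\C)\leq t_j^\D(\C)$ (from Proposition~\ref{thm:propt}\,(7)) in the same way. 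Part (2) is an even shorter chain: from $t_j^\D\leq t_j^\R$ applied to both $\C$ and $\C^\perp$, $n-t_1^\R(\C^\perp)\leq n - t_1^\D(\C^\perp)<t_j^\D(\C)\leq t_j^\R(\C)$.

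For part (3), the key observation is that for the families $\A^\D$ and $\A^\R$ the identities $t_1=s_1$ reduce the $1$-TBMD condition to the symmetric inequality $t_1(\C) + t_1(\C^\perp) > n$. Swapping the roles of $\C$ and $\C^\perp$ leaves this unchanged (noting $(\C^\perp)^\perp = \C$ by Lemma~\ref{lem:propdual}\,\eqref{item1:propdual}), so the equivalence is immediate.

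The whole argument is really a bookkeeping exercise: the only subtlety, and the main thing to get right, is translating the TBMD definition for $\omA^\cl$ — where the ambient collection $\A$ used in Definition~\ref{def:TBMD} is $\omA^\cl$, so the roles of the $t$ and $s$ invariants are the \emph{opposite} of what the notation $t_j^\cl$ and $s_j^\cl$ suggests, because $\overline{\omA^\cl}=\A^\cl$. Once this is correctly identified, every implication reduces to a two-step comparison drawn from Proposition~\ref{thm:propt}, and no further work is needed.
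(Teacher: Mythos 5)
Your proof is correct and is essentially the same argument the paper intends: the paper states the proposition as an "immediate consequence of Proposition~\ref{thm:propt}" without spelling out the chains, and you have simply written them out. Your careful bookkeeping for the $\omA^\cl$ case — noting that $\overline{\omA^\cl}=\A^\cl$ forces the roles of $t_j^\cl$ and $s_j^\cl$ to swap in Definition~\ref{def:TBMD} — is exactly the one subtlety that needs attention, and you handle it correctly.
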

    
    \begin{remark}
        It follows from \ref{prop:jTBMD} \eqref{item3:jTBMD} that the notion of extremality associated with $\A^\D$ and $\A^\R$ is invariant under duality. 
    \end{remark}
    
    We recall the following definition of extremal matrix codes introduces in \cite{byrne2020rank}.
    
    \begin{definition}[{\cite[Definition~4.1]{byrne2020rank}}]
        For any $j\in\{1,\ldots,k\}$, we say that the code of $2$-tensors $\C\leq\Fq^{n_1}\otimes\Fq^{n_2}$ is $j$-BMD if $n_1-d_j-d^\perp<0$.
    \end{definition}

    As a consequence of Remark \ref{rem:ADr2}, we have that the notion of extremality defined above coincides with the one in Definition \ref{def:TBMD},  for $r=2$ and $\A=\A^\D$ (and therefore $\A=\A^\R$).
    
    \begin{proposition}
    \label{prop:jBMD}
       Let $r=2$ and $j\in\{1,\ldots,k\}$. $\C$ is $j$-TBMD (with respect to  $\A^\D$ or $\A^\R$) if and only if $\C$ is $j$-BMD.
    \end{proposition}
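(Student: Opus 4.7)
The plan is to reduce both notions of extremality to the same scalar inequality by computing $t_j$ and $s_1^\perp$ (with respect to $\A^\D$, equivalently $\A^\R$) explicitly in terms of the classical Delsarte generalized rank weights $d_j$ and $d^\perp$. The key dictionary is $t_j^\D(\C)=n_2\cdot d_j(\C)$, after which the proof is a one-line algebraic manipulation.

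First, I would invoke Remark~\ref{rem:ADr2}: for $r=2$ we have $\A^\D=\A^\R$, and this family coincides with the collection of Delsarte optimal anticodes of \cite{ravagnani2016generalized}. Every $A\in\A^\D$ has dimension a multiple of $n_2=\max\{n_1,n_2\}$, since $\dimq{A^{(1)}\otimes\Fq^{n_2}}=n_2\cdot\dimq{A^{(1)}}$, and when $n_1=n_2$ the additional members $\Fq^{n_1}\otimes A^{(2)}$ also have dimension $n_1\cdot\dimq{A^{(2)}}=n_2\cdot\dimq{A^{(2)}}$. Matching this with the definition of $d_j(\C)$ in \cite{ravagnani2016generalized} as the minimum of $\dimq{A}/n_2$ over Delsarte anticodes $A$ with $\dimq{\C\cap A}\geq j$ gives $t_j^\D(\C)=n_2\cdot d_j(\C)$.

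Next, Corollary~\ref{cor:charDeldual} gives $\overline{\A^\D}=\A^\D$, so the minimization defining $s_1(\C^\perp)$ coincides with that defining $t_1^\D(\C^\perp)$; applying the dictionary to $\C^\perp$ and $j=1$ yields
\begin{equation*}
s_1^\perp=t_1^\D(\C^\perp)=n_2\cdot d_1(\C^\perp)=n_2\cdot d^\perp.
\end{equation*}
Substituting both identities into Definition~\ref{def:TBMD} and using $n=n_1n_2$, the $j$-TBMD condition becomes
\begin{equation*}
n-s_1^\perp-t_j \;=\; n_1n_2-n_2\,d^\perp-n_2\,d_j \;=\; n_2\bigl(n_1-d^\perp-d_j\bigr),
\end{equation*}
which, since $n_2\geq 1$, is negative if and only if $n_1-d^\perp-d_j<0$, i.e.\ $\C$ is $j$-BMD. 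The same argument applies verbatim when $\A=\A^\R$, since $\A^\D=\A^\R$ for $r=2$.

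The only step that carries any substance is the identification $t_j^\D(\C)=n_2\cdot d_j(\C)$; everything else is an immediate substitution. Since Remark~\ref{rem:ADr2} already does the heavy lifting by matching $\A^\D$ with the Delsarte anticode family of \cite{ravagnani2016generalized}, I do not anticipate any genuine obstacle.
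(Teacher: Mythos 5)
Your proof is correct and follows essentially the same route as the paper's: both rest on the identification, via Remark~\ref{rem:ADr2} and Corollary~\ref{cor:charDeldual}, that $t_j^\D(\C)=n_2\,d_j(\C)$ and $s_1^\D(\C^\perp)=t_1^\D(\C^\perp)=n_2\,d^\perp$, from which $n-s_1^\perp-t_j=n_2(n_1-d^\perp-d_j)$ and the equivalence follows since $n_2>0$. The paper's proof states this factorization directly and attributes it to Remark~\ref{rem:ADr2}; you merely spell out the intermediate step that every Delsarte-type anticode for $r=2$ has dimension a multiple of $n_2$, which is implicit there.
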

    \begin{proof}
       Let $\C$ be $j$-TBMD. Remark \ref{rem:ADr2} implies
       \begin{equation*}
           n_1\cdot n_2-(s_1^\D)^\perp-t_j^\D= n_1\cdot n_2-(t_1^\D)^\perp-t_j^\D= n_2\,(n_1-d^\perp-d_j).
       \end{equation*}
       The statement follows from the definitions of $j$-TBMD and $j$-BMD code.
    \end{proof}
    
    As observed in \cite{byrne2020rank}, the class of $1$-BMD codes is partitioned by the families of MRD and dually QMRD codes that exist for any value of $n_1,n_2$ and $k$. Therefore, we have that $1$-TBMD codes with respect to $\A^\D$ exist for $r=2$.
    
    In the following example we demonstrate again that invariants associated with the closure-type anticodes distinguish codes even if those associated with the Delsarte-type anticodes do not. In particular, in this example we will present a pair of Delsarte-Gabidulin codes which are both minimally $1$-TBMD with respect to the Delsarte-type anticodes. However, these codes are minimally $j$-TBMD with respect to the closure-type anticodes for different values of $j$. 
    
    \begin{example}
        Let $\C$ and $\mD$ be the Delsarte-Gabidulin codes as in Example \ref{ex:Gabidulin}. One can check that $s_1^\cl(\C^\perp)=4$. This value is attained, for example, for the dual closure-type anticode $\F_3^2\otimes\<(1,0,2,0),(0,1,2,0)\>_{\F_3}$, which intersects $\C^\perp$ in the space spanned by
        \begin{equation*}
            \begin{pmatrix}
            1&2&0&0\\0&2&1&0
        \end{pmatrix}.
        \end{equation*}
        Therefore we have $ 8-s_1^\cl(\C^\perp)-t_1^\cl(\C)=8-4-4=0\not<0$ which implies that $\C$ is not $1$-TBMD with respect to the closure-type anticodes. On the other hand, $\C$ is minimally $2$-TBMD with respect to the closure-type anticodes since $8-s_1^\cl(\C^\perp)-t_2^\cl(\C)=8-4-6=-2<0$. Moreover, one can also verify that $s_1^\cl(\mD^\perp)=4$. This value is attained, for example, by the dual closure-type anticode $\F_3^2\otimes\<(1,0,0,1),(0,0,1,0)\>_{\F_3}$ which intersects $\mD^\perp$ in the space spanned by
        \begin{equation*}
            \begin{pmatrix}
            1&0&0&1\\2&0&2&2
            \end{pmatrix}\qquad\textup{ and }\qquad
            \begin{pmatrix}
            0&0&1&0\\2&0&0&2
            \end{pmatrix}.
        \end{equation*}
        Therefore we have $ 8-s_1^\cl(\mD^\perp)-t_1^\cl(\mD)=8-4-4=0\not<0$ which implies that $\C$ is not $1$-TBMD with respect to the closure-type anticodes. One can also observe that $\mD$ is not $2$-TBMD either but it is minimally $3$-TBMD since $8-s_1^\cl(\mD^\perp)-t_3^\cl(\mD)=8-4-8=-4<0$. Finally, we have that  $t_1^\D(\C)=t_1^\D(\C^\perp)=t_1^\D(\mD)=t_1^\D(\mD^\perp)=8$ (recall that for any tensor code $\C$, $s_j(\C)=t_j(\C)$, for all $j\in\{1,\ldots,\dimq{\C}\}$, since $\A^\D=\omA^\D$). This is in line with Remark \ref{rem:ADr2} as every column-space of each codeword in $\C$, $\C^\perp$, $\mD$ and $\mD^\perp$ is the full space $\F_3^2$. Therefore, we have that both $\C$ and $\mD$ are $1$-TBMD with respect to the Delsarte anticodes since $8-t_1^\D(\C^\perp)-t_1^\D(\C)=8-8-8=-8<0$. This is in line with Proposition \ref{prop:jBMD} and \cite[Remark~4.17]{byrne2020rank} as the Delsarte-Gabidulin codes are MRD.
    \end{example}
    
    We conclude this section with two examples in which we study the generalized tensor weights of two tensor codes using the construction given in \cite{roth1996tensor}. These classes of tensor codes can be seen as the generalisation of the Delsarte-Gabidulin codes for $r>2$. It has been shown that all the MRD codes (i.e. the case $r=2$) are minimally $1$-TBMD with respect to the Delsarte-type anticodes\footnote{ Recall that for the case $r=2$ the Delsarte-type and Ravagnani-type anticodes coincide.}  (see \cite[Remark~4.17]{byrne2020rank} and Proposition \ref{prop:jBMD}). However, the following examples show that not all the codes constructed as in \cite{roth1996tensor} are minimally $1$-TBMD with respect to the Delsarte-type and Ravagnani-type anticodes. This is perhaps not surprising in light of the fact that a strict improvement on the Singleton bound can be obtained (see \cite{roth1991maximum}).
    
    \begin{example}
    \label{ex:roth}
    Let $\alpha$ be a primitive element of $\F_{4}$ and define the vectors $\boldsymbol{\alpha}:=(1,\alpha)$, $\boldsymbol{\beta}:=(1,\alpha)$ and $\boldsymbol{\omega}:=(1,\alpha)$ whose components are linearly independent over $\F_2$. Let $\mS(2,3,3;2)$ be the set of all pairs $(\ell,s)$ such that $\ell,s\in\{0,\ldots, 1\}$ and there exist an $\F_2$-$[2,\ell+1,\geq s+1]$ linear block code. It is not hard to see that $\mS(2,3,3;2)=\{(0,0),(0,1),(1,0)\}$.  Moreover, $\overline{\mS}(2,3,3;2)$ be the set of all pairs $(\ell,s)$ such that $\ell,s\in\{0,\ldots, 1\}$ and $(\ell,s)\notin  \mS(2,3,3;2)$, i.e. $\overline{\mS}(2,3,3;2)=\{(1,1)\}$.  Define, as in \cite[Sections~3.1 and 3.2]{roth1996tensor}, the matrices with entries in $\F_{4}$
    \begin{align*}
        \mH_{(\ell,s),(i,j)}&:=\left(\boldsymbol{\alpha}_i\right)^{2^\ell}\left(\boldsymbol{\beta}_j\right)^{2^s} \qquad \textup{ with } (\ell,s)\in \mS(2,3,3;2),\\
        \mG_{(\ell,s),(i,j)}&:=\left(\boldsymbol{\alpha}_i^\perp\right)^{2^\ell}\left(\boldsymbol{\beta}_j^\perp\right)^{2^s} \qquad \textup{ with } (\ell,s)\in \overline{\mS}(2,3,3;2), 
    \end{align*}
    where $\boldsymbol{\alpha}^\perp$ and $\boldsymbol{\beta}^\perp$ are the dual bases of $\boldsymbol{\alpha}$ and $\boldsymbol{\beta}$ respectively.  In particular, we have that $\boldsymbol{\alpha}^\perp=\boldsymbol{\beta}^\perp=(\alpha^2,1)$. One can check that
    \begin{equation*}
        \mH=\begin{pmatrix}
            1 & \alpha & \alpha & \alpha^2\\
            1 & \alpha^2 & \alpha & 1\\
            1 & \alpha & \alpha^2 & 1
        \end{pmatrix}\qquad \textup{ and }\qquad 
        \mG= \begin{pmatrix}
            \alpha^2 & \alpha & \alpha & 1
        \end{pmatrix}.
    \end{equation*}
     As described in \cite[Section~3.2]{roth1996tensor},  $\mG$ can be seen as the ``generator matrix'' and $\mH$ as the ``parity-check matrix'' of the tensor code $\C(2,3,3;2)\leq\F_2^2\otimes\F_2^2\otimes\F_2^2$. More in detail, the code $\C(2,3,3;2)$ is defined as the set of all tensors $X\in\F_2^2\otimes\F_2^2\otimes\F_2^2$ such that
     \begin{equation*}
         \sum_{i,j,t=1}^2X_{i,j,t}\left(\boldsymbol{\alpha}_i\right)^{2^\ell}\left(\boldsymbol{\beta}_j\right)^{2^s}\boldsymbol{\omega}_t=0 \qquad \textup{ for all } (\ell,s)\in \mS(2,3,3;2). 
     \end{equation*}
     A $3$-tensor is obtained from $\mG$ as follows. First express $\mG$ as a $2\times 2$ matrix over $\F_4$ whose first row is $(\alpha^2,\alpha)$ and whose second row is $(\alpha,1)$. Next express each coefficients as binary vectors of length two whose first coefficient belongs the first slice and whose second coefficient belongs to the second one. In this way we obtain the tensor $C_1$ shown below. $C_2$ is obtained in the same way from $\alpha\,\mG$. Therefore, we have that $\C(2,3,3;2)$ is the $2$-dimensional tensor code generated by
     \begin{equation*}
         C_1:=\left(\begin{array}{cc|cc}
            1 & 0 & 1 & 1  \\
            0 & 1 & 1 & 0
         \end{array}\right)
         \qquad\textup{ and }\qquad
         C_2:=\left(\begin{array}{cc|cc}
            1 & 1 & 0 & 1  \\
            1 & 0 & 1 & 1
         \end{array}\right),
     \end{equation*}
     and has the minimum distance at least $3$, by \cite[Theorem~4]{roth1996tensor}. One can check the following.
    \begin{itemize}
        \item We have $t_1^\ps=3$. For example, we have that $C_1$ can be written as sum of $3$ simple tensors as follows,
        \begin{equation*}
            \left(\begin{array}{cc|cc}
            1 & 0 & 1 & 1  \\
            0 & 1 & 1 & 0
         \end{array}\right)=
         \left(\begin{array}{cc|cc}
            1 & 1 & 1 & 1  \\
            0 & 0 & 0 & 0
         \end{array}\right)+
          \left(\begin{array}{cc|cc}
            0 & 0 & 0 & 0  \\
            0 & 0 & 1 & 0
         \end{array}\right)+
          \left(\begin{array}{cc|cc}
            0 & 1 & 0 & 0  \\
            0 & 1 & 0 & 0
         \end{array}\right).
        \end{equation*}
        \item We have $t_1^\cl=t_1^\D=t_1^\R=8$. The  value $t_1^\cl$ is obtained for the closure-type anticode $\F_2^2\otimes\F_2^2\otimes\F_2^2$ which clearly meets the code in a space of dimension at least one. As an immediate consequence we have also $t_2^\cl=t_2^\D=t_2^\R=8$.
        \item We have $(s_1^\cl)^\perp=(s_1^\D)^\perp=(s_1^\R)^\perp=4$. This value is obtained, for example for the dual closure-type anticode $\F_2^2\otimes\F_2^2\otimes \<(0,1)\>_{\F_2}$ whose intersection with the dual code is the span of
        \begin{equation*}
            \left(\begin{array}{cc|cc}
            0 & 0 & 1 & 0  \\
            0 & 0 & 1 & 1
         \end{array}\right).
        \end{equation*}
    \end{itemize}
    In particular, we have $2^3-(s_1^\cl)^\perp-t_1^\cl=8-4-8=-4<0$ which implies that $\C(2,3,3;2)$ is $1$-TBMD with respect to the closure-type anticodes. It follows that  $\C(2,3,3;2)$ is also $1$-TBMD with respect to the Delsarte-type and Ravagnani-type anticodes.
\end{example}

\begin{example}
    Let $\alpha$ be a root of $x^3+x+1\in\F_2[x]$ and define the vectors $\boldsymbol{\alpha}:=(1,\alpha,\alpha^2)$, $\boldsymbol{\beta}:=(1,\alpha,\alpha^2)$ and $\boldsymbol{\omega}:=(1,\alpha,\alpha^2)$ whose components are linearly independent over $\F_2$. Let $\mS(3,4,3;2)$ be the set of all pairs $(\ell,s)$ such that $\ell,s\in\{0,\ldots, 2\}$ and there exist an $\F_2$-$[3,\ell+1,\geq s+1]$ linear block code. It is not hard to check that 
    \begin{equation*}
        \mS(3,4,3;2)=\{(0,0),(0,1),(0,2),(1,0),(1,1),(2,0)\}.
    \end{equation*}
    Moreover, $\overline{\mS}(3,4,3;2)$ be the set of all pairs $(\ell,s)$ such that $\ell,s\in\{0,\ldots, 2\}$ and $(\ell,s)\notin  \mS(3,4,3;2)$. In this case we have
    \begin{equation*}
        \overline{\mS}(3,4,3;2)=\{(1,2),(2,1),(2,2)\}.
    \end{equation*}
    Define, as in \cite[Sections~3.1 and 3.2]{roth1996tensor}, the matrices with entries in $\F_8$
    \begin{align*}
        \mH_{(\ell,s),(i,j)}&:=\left(\boldsymbol{\alpha}_i\right)^{2^\ell}\left(\boldsymbol{\beta}_j\right)^{2^s} \qquad \textup{ with } (\ell,s)\in \mS(3,4,3;2),\\
        \mG_{(\ell,s),(i,j)}&:=\left(\boldsymbol{\alpha}_i^\perp\right)^{2^\ell}\left(\boldsymbol{\beta}_j^\perp\right)^{2^s} \qquad \textup{ with } (\ell,s)\in \overline{\mS}(3,4,3;2), 
    \end{align*}
    where $\boldsymbol{\alpha}^\perp$ and $\boldsymbol{\beta}^\perp$ are the dual bases of $\boldsymbol{\alpha}$ and $\boldsymbol{\beta}$ respectively. In particular, we have $\boldsymbol{\alpha}^\perp=\boldsymbol{\beta}^\perp=(1,\alpha^2,\alpha)$
    One can check that
    \begin{align*}
        \mH=\begin{pmatrix}
            1 &   \alpha & \alpha^2 &   \alpha & \alpha^2 & \alpha^3 & \alpha^2 & \alpha^3 & \alpha^4\\
            1 & \alpha^2 & \alpha^4 &   \alpha & \alpha^3 & \alpha^5 & \alpha^2 & \alpha^4 & \alpha^6\\
            1 & \alpha^4 &   \alpha &   \alpha & \alpha^5 & \alpha^2 & \alpha^2 & \alpha^6 & \alpha^3\\
            1 &   \alpha & \alpha^2 & \alpha^2 & \alpha^3 & \alpha^4 & \alpha^4 & \alpha^5 & \alpha^6\\
            1 & \alpha^2 & \alpha^4 & \alpha^2 & \alpha^4 & \alpha^6 & \alpha^4 & \alpha^6 &   \alpha\\
            1 &   \alpha & \alpha^2 & \alpha^4 & \alpha^5 & \alpha^6 &   \alpha & \alpha^2 & \alpha^3
        \end{pmatrix},
        \\[1ex]
       \mG=\begin{pmatrix}
            1 &   \alpha & \alpha^4 & \alpha^4 & \alpha^5 &   \alpha & \alpha^2 & \alpha^3 & \alpha^6\\
            1 & \alpha^4 & \alpha^2 &   \alpha & \alpha^5 & \alpha^3 & \alpha^4 &   \alpha & \alpha^6\\
            1 &   \alpha & \alpha^4 &   \alpha & \alpha^2 & \alpha^5 & \alpha^4 & \alpha^5 &   \alpha
        \end{pmatrix}.
    \end{align*}
    
    As described in \cite[Section~3.2]{roth1996tensor},  $\mG$ can be seen as the ``generator matrix'' and $\mH$ as the ``parity-check matrix'' of the tensor code $\C(3,4,3;2)\leq\F_2^3\otimes\F_2^3\otimes\F_2^3 $. More in detail, the code $\C(3,4,3;2)$ is defined as the set of all tensors $X\in\F_2^2\otimes\F_2^2\otimes\F_2^2$ such that
    
     \begin{equation*}
         \sum_{i,j,t=1}^3X_{i,j,t}\left(\boldsymbol{\alpha}_i\right)^{2^\ell}\left(\boldsymbol{\beta}_j\right)^{2^s}\boldsymbol{\omega}_t=0 \qquad \textup{ for all } (\ell,s)\in \mS(3,4,3;2). 
     \end{equation*} 
     
     As in Example \ref{ex:roth} expanding each coordinate with respect to the basis $\boldsymbol{\omega}$ we get that $\C(3,4,3;2)$ is the $9$-dimensional tensor code generated by
     
    \begin{align*}
C_1&:=\left(\begin{array}{ccc|ccc|ccc}
1 & 0 & 0 & 0 & 1 & 1 & 0 & 0 & 1\\
0 & 1 & 0 & 1 & 1 & 1 & 1 & 1 & 0\\
0 & 1 & 1 & 0 & 1 & 0 & 1 & 0 & 1
\end{array}\right),
&C_2&:=\left(\begin{array}{ccc|ccc|ccc}
0 & 0 & 1 & 1 & 0 & 1 & 0 & 1 & 1\\
1 & 1 & 0 & 1 & 0 & 0 & 1 & 1 & 1\\
1 & 0 & 1 & 1 & 1 & 0 & 0 & 1 & 0\\
\end{array}\right),\\[1ex]
C_3&:=\left(\begin{array}{ccc|ccc|ccc}
0 & 1 & 1 & 0 & 1 & 0 & 1 & 0 & 1\\
1 & 1 & 1 & 0 & 0 & 1 & 1 & 0 & 0\\
0 & 1 & 0 & 1 & 1 & 1 & 1 & 1 & 0\\
\end{array}\right),
&C_4&:=\left(\begin{array}{ccc|ccc|ccc}
1 & 0 & 0 & 0 & 1 & 0 & 0 & 1 & 1\\
0 & 1 & 1 & 1 & 1 & 1 & 0 & 1 & 0\\
0 & 0 & 1 & 1 & 1 & 0 & 1 & 0 & 1\\
\end{array}\right),\\[1ex]
C_5&:=\left(\begin{array}{ccc|ccc|ccc}
0 & 1 & 1 & 1 & 1 & 1 & 0 & 1 & 0\\
0 & 1 & 0 & 0 & 0 & 1 & 1 & 1 & 1\\
1 & 0 & 1 & 1 & 0 & 0 & 1 & 1 & 0\\
\end{array}\right),
&C_6&:=\left(\begin{array}{ccc|ccc|ccc}
0 & 1 & 0 & 0 & 0 & 1 & 1 & 1 & 1\\
1 & 1 & 1 & 1 & 0 & 1 & 0 & 0 & 1\\
1 & 1 & 0 & 0 & 1 & 1 & 1 & 0 & 0\\
\end{array}\right),\\[1ex]
C_7&:=\left(\begin{array}{ccc|ccc|ccc}
1 & 0 & 0 & 0 & 1 & 1 & 0 & 0 & 1\\
0 & 0 & 1 & 1 & 0 & 1 & 0 & 1 & 1\\
0 & 1 & 0 & 1 & 1 & 1 & 1 & 1 & 0\\
\end{array}\right),
&C_8&:=\left(\begin{array}{ccc|ccc|ccc}
0 & 0 & 1 & 1 & 0 & 1 & 0 & 1 & 1\\
0 & 1 & 1 & 0 & 1 & 0 & 1 & 0 & 1\\
1 & 1 & 0 & 1 & 0 & 0 & 1 & 1 & 1\\
\end{array}\right),\\[1ex]
C_9&:=\left(\begin{array}{ccc|ccc|ccc}
0 & 1 & 1 & 0 & 1 & 0 & 1 & 0 & 1\\
1 & 0 & 1 & 1 & 1 & 0 & 0 & 1 & 0\\
1 & 1 & 1 & 0 & 0 & 1 & 1 & 0 & 0\\
\end{array}\right),
    \end{align*}
   and has the minimum distance at least $4$, by \cite[Theorem~4]{roth1996tensor} and computations show that its minimum distance is at most $5$. Moreover, one can check the following.
    \begin{itemize}
        \item We have $t_1^\cl=t_1^\D=t_1^\R=18$. This value is obtained, for example, for the closure-type anticode $\<(0,1,0),(0,0,1)\>_{\F_2}\otimes\F_2^3\otimes\F_2^3$ whose intersection with the code is the span of
        \begin{equation*}
\left(\begin{array}{ccc|ccc|ccc}
0 & 0 & 0 & 0 & 0 & 0 & 0 & 0 & 0\\
0 & 1 & 0 & 1 & 1 & 1 & 1 & 1 & 0\\
1 & 0 & 1 & 1 & 1 & 0 & 0 & 1 & 0
\end{array}\right).
        \end{equation*}
        
        \item We have $(s_1^\cl)^\perp=(s_1^\D)^\perp=(s_1^\R)^\perp=9$. This value is obtained, for example, for the dual closure-type anticode $\F_2^3\otimes\<(1,1,0)\>_{\F_2}\otimes\F_2^3$ whose intersection with the code is the span of
        \begin{equation*}
\left(\begin{array}{ccc|ccc|ccc}
0 & 0 & 0 & 1 & 1 & 0 & 1 & 1 & 0\\
1 & 1 & 0 & 1 & 1 & 0 & 1 & 1 & 0\\
1 & 1 & 0 & 0 & 0 & 0 & 1 & 1 & 0
\end{array}\right).
    \end{equation*}
    
    \end{itemize}
    In particular, we have $3^3-(s_1^\R)^\perp-t_1^\R=3^3-(t_1^\R)^\perp-t_1^\R=27-9-18=0\not<0$ which implies that $\C(3,4,3;2)$ is not $1$-TBMD with respect to the Ravagnani-type anticodes. As a consequence $\C(3,4,3;2)$ is not $1$-TBMD with respect to the closure-type and Delsarte-type anticodes either.
\end{example}

\vspace{1cm}
\section*{Appendix}

The following table includes the characterization of the four collections of anticodes described in Section \ref{sec:anticodes}.
\vspace{1em}
    \renewcommand\arraystretch{2}
    \begin{longtable}[c]{@{}|C{2.7cm}|C{12.15cm}|@{}}
    \hline
  	Anticodes & Characterization\\\specialrule{.13em}{.0em}{.0em} 
  	Perfect Spaces &  $\displaystyle\A^{\cl}=\{A\leq\F:A \textup{ is perfect}\}$\\\hline
  	\multirow{3.4}{*}{Closure-type}  & \multirow{1.7}{*}{$\displaystyle\A^\cl=\left\{\bigotimes_{i=1}^rA^{(i)}:A^{(i)}\leq\Fq^{n_i}\;\forall\, i \in [r]\right\}$}\\[-1.7ex]&\\\cline{2-2}
  	& \multirow{2}{*}{$\displaystyle\overline{\A^\cl}=\left\{\sum_{i=1}^r\left(\bigotimes_{j=1}^{i-1}\Fq^{n_j}\right)\otimes A^{(i)}\otimes\left(\bigotimes_{j=i+1}^r\Fq^{n_j}\right):A^{(i)}\leq\Fq^{n_i}\;\forall\, i \in [r]\right\}$}\\&\\\hline
  	\multirow{2.7}{*}{Delsarte-type} &  \multirow{2}{*}{$\displaystyle\A^\D=\bigcup_{p\in P}\left\{\left(\bigotimes_{j=1}^{i-1}\Fq^{n_j}\right)\otimes A^{(i)}\otimes\left(\bigotimes_{j=i+1}^r\Fq^{n_j}\right):A^{(i)}\leq\Fq^{n_i}, i \in [r]\setminus\{p\} \right\}$} \\&\\[-2ex]
  	& \raggedright\arraybackslash where $P:=\{i\in[r]:n_i=n_r\}$\\\hline
  	\multirow{2.5}{*}{Ravagnani-type} &  \multirow{2}{*}{$\displaystyle\A^\R=\left\{\left(\bigotimes_{j=1}^{i-1}\Fq^{n_j}\right)\otimes A^{(i)}\otimes\left(\bigotimes_{j=i+1}^r\Fq^{n_j}\right):A^{(i)}\leq\Fq^{n_i}, i \in S \right\}$}\\&\\[-2ex]
  	& \raggedright\arraybackslash where $S:=\{i\in[r]:n_i=n_1\}$\\\hline
  	\caption{\hspace{-0.3em}: Anticodes and their characterization.}
  	\end{longtable}

In the following table we summarize the main properties of the families of anticodes considered in Section \ref{sec:anticodes}.
\vspace{1em}

 \renewcommand\arraystretch{2}
\begin{longtable}[c]{@{}|C{2.6cm}|C{2.4cm}|C{1.75cm}|C{1.75cm}|C{2.4cm}|C{2.4cm}|@{}}
\hline
  	  & $\A^\ps$ & $\A^\cl$ & $\omA^\cl$ & $\A^\D$ & $\A^\R$\\\specialrule{.13em}{.0em}{.0em}
  	  Isometry invariance\footnote{ We refer to the isometries defined in Notation \ref{notation:isom}.} & YES & YES & YES & YES & YES \\\hline
  	  $\A=\omA$ & Not defined\footnote{\label{footnote:dualps} The set of dual anticodes is not defined for the family of perfect space (see Example \ref{ex:dualperfsp}).} & NO & NO & YES & YES \\\hline
  	  Is it a lattice? & YES & YES\footnote{ It is a sublattice of $\A^\ps$.} & YES & YES & YES\footnote{ It is a sublattice of $\A^\D$.} \\\hline
  	  Duality of $\{t_j:j \in [k]\}$ & Not defined\footref{footnote:dualps} & Unknown & Unknown &  Incomplete\footnote{ So far, we have a duality theory only for $r=2$ as $\A^\D=\A^\R$ in this case.} & YES\\\hline
  	  Duality of $\big\{B_u^{(j)}:j \in [k]\big\}$ &  Not defined\footref{footnote:dualps} & YES\footnote{\label{footnote:Bu} We have relations between $B_u^{(i)}$s and $\oB_u^{(i)}$s. For the family of closure-type anticodes, these quantities are not equal in general.} & YES\footref{footnote:Bu} & YES & YES\\\hline
  	  Invariants extended for $r>2$ & Generalized tensor ranks of \cite{byrne2019tensor}\footnote{  Extended in Proposition~\ref{prop:proptps}} & & & Generalized rank weights of \cite{ravagnani2016generalized}\footnote{\label{footnote:rankw} Extended in Proposition~\ref{prop:proptR}} &  Generalized rank weights of \cite{ravagnani2016generalized}\footref{footnote:rankw}\\\hline
  	  \caption{\hspace{-0.3em}: Properties of anticodes.}
\end{longtable}

\newpage
\bibliographystyle{abbrv} 
\bibliography{psbiblio.bib}

\begin{thebibliography}{10}

\bibitem{andrews1998theory}
G.~E. Andrews.
\newblock {\em The theory of partitions}.
\newblock Number~2. Cambridge university press, 1998.

\bibitem{atkinson1983ranks}
M.~D. Atkinson and S.~Lloyd.
\newblock The ranks of $m\times n\times (mn-2)$ tensors.
\newblock {\em SIAM Journal on Computing}, 12(4):611--615, 1983.

\bibitem{blanco2018rank}
I.~Blanco-Chac{\'o}n, E.~Byrne, I.~Duursma, and J.~Sheekey.
\newblock Rank metric codes and zeta functions.
\newblock {\em Designs, Codes and Cryptography}, 86(8):1767--1792, 2018.

\bibitem{burgisser1996algebraic}
P.~B{\"u}rgisser, M.~Clausen, and A.~Shokrollahi.
\newblock {\em Algebraic Complexity Theory}, volume 315.
\newblock Springer Science \& Business Media, 1996.

\bibitem{byrne2021bilinear}
E.~Byrne and G.~Cotardo.
\newblock Bilinear complexity of 3-tensors linked to coding theory.
\newblock {\em arXiv preprint arXiv:2103.08544}, 2021.

\bibitem{byrne2020rank}
E.~Byrne, G.~Cotardo, and A.~Ravagnani.
\newblock Rank-metric codes, generalized binomial moments and their zeta
  functions.
\newblock {\em Linear Algebra and its Applications}, 604:92--128, 2020.

\bibitem{byrne2019tensor}
E.~Byrne, A.~Neri, A.~Ravagnani, and J.~Sheekey.
\newblock Tensor representation of rank-metric codes.
\newblock {\em SIAM Journal on Applied Algebra and Geometry}, 3(4):614--643,
  2019.

\bibitem{de2018weight}
J.~de~la Cruz, E.~Gorla, H.~H. L{\'o}pez, and A.~Ravagnani.
\newblock Weight distribution of rank-metric codes.
\newblock {\em Designs, Codes and Cryptography}, 86(1):1--16, 2018.

\bibitem{delsarte1978bilinear}
P.~Delsarte.
\newblock Bilinear forms over a finite field, with applications to coding
  theory.
\newblock {\em Journal of combinatorial theory, Series A}, 25(3):226--241,
  1978.

\bibitem{kruskal1977three}
J.~B. Kruskal.
\newblock Three-way arrays: rank and uniqueness of trilinear decompositions,
  with application to arithmetic complexity and statistics.
\newblock {\em Linear algebra and its applications}, 18(2):95--138, 1977.

\bibitem{meshulam1985maximal}
R.~Meshulam.
\newblock On the maximal rank in a subspace of matrices.
\newblock {\em The Quarterly Journal of Mathematics}, 36(2):225--229, 1985.

\bibitem{ravagnani2016generalized}
A.~Ravagnani.
\newblock Generalized weights: an anticode approach.
\newblock {\em Journal of Pure and Applied Algebra}, 220(5):1946--1962, 2016.

\bibitem{ravagnani2016rank}
A.~Ravagnani.
\newblock Rank-metric codes and their duality theory.
\newblock {\em Designs, Codes and Cryptography}, 80(1):197--216, 2016.

\bibitem{roth1991maximum}
R.~M. Roth.
\newblock Maximum-rank array codes and their application to crisscross error
  correction.
\newblock {\em IEEE transactions on Information Theory}, 37(2):328--336, 1991.

\bibitem{roth1996tensor}
R.~M. Roth.
\newblock Tensor codes for the rank metric.
\newblock {\em IEEE Transactions on Information Theory}, 42(6):2146--2157,
  1996.

\bibitem{stanley2011enumerative}
R.~P. Stanley.
\newblock {\em Enumerative Combinatorics}, volume~1.
\newblock Cambridge University Press, 2 edition, 2011.

\end{thebibliography}
\end{document}